%
%
%

\def\Bbb{\mathbb}
\def\sphere{S^2}

\def\dist{{\rm{dist}}}

\def\dist{{\rm{dist}}}
\def\diam{{\rm{diam}}}

\def\Bbb{\mathbb}
\def\reals{\Bbb R}
\def\uhp{\Bbb H}
\def\uhs{{\Bbb H}^3_+}

\def\disk{\Bbb D}
\def\circle{\Bbb T}

\def\complexes{\Bbb C}

\def\cal{\mathcal}

\def\z{{\bf z}}

\documentclass[12pt]{amsart}  
\usepackage{graphicx,amssymb,amsthm,verbatim,amsfonts,amscd}

\theoremstyle{plain}                    
\newtheorem{thm}{Theorem}[section]

\newtheorem{lemma}[thm]{Lemma}

\newcounter{ques}
   
\numberwithin{equation}{section}

\addtolength{\textwidth}{1.0in}
\addtolength{\leftmargin}{-.5in} 
\hoffset -.5in

\begin{document}
\baselineskip=18pt


%

\title [Angle bounds for quadrilateral meshes]
  {Optimal angle bounds for quadrilateral meshes}

\subjclass{Primary: 30C62  Secondary: }
\keywords{Quadrilateral meshes, Riemann mapping, thick/thin decomposition, 
linear time}
\author {Christopher J. Bishop}
\address{C.J. Bishop\\
         Mathematics Department\\
         SUNY at Stony Brook \\
         Stony Brook, NY 11794-3651}
\email {bishop@math.sunysb.edu}
\thanks{The  author is partially supported by NSF Grant DMS 04-05578.}

\date{}
\maketitle


\begin{abstract}
We show that any simple planar $n$-gon can be meshed in linear 
time  by $O(n)$ quadrilaterals with all new angles 
bounded between $60$ and $120$ degrees.
\end{abstract}

\clearpage


\setcounter{page}{1}
\renewcommand{\thepage}{\arabic{page}}

\section{Introduction} \label{intro}

We answer a question of Bern and Eppstein by proving:

\begin{thm} \label{main}
Any
simply connected planar domain $\Omega$ whose boundary is a simple $n$-gon 
  has a quadrilateral mesh
with $O(n)$ pieces so that  all  angles are  between 
$60^\circ $ and $120^\circ$, except  that original angles of the
polygon with angle $< 60^\circ$ remain. The mesh can be constructed in 
time $O(n)$.
\end{thm}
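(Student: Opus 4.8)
The plan is to transfer a fixed, explicitly constructed quadrilateral mesh from model regions onto $\Omega$ via conformal maps, using a thick/thin decomposition of $\Omega$ to make both the transfer and the running time quantitative. A conformal map preserves angles at interior points, so if a model region carries a mesh all of whose new angles lie in $[60^\circ+\delta,\,120^\circ-\delta]$, then under a conformal map $f$ that is \emph{coarsely affine at the cell scale} — $|f'|$ and $\arg f'$ nearly constant on each mesh cell — the image mesh, after straightening its curved edges, has all new angles in $[60^\circ,120^\circ]$, once the cells are small compared with the scale on which $f$ varies. That scale collapses inside long thin parts of $\Omega$, which is precisely why those parts must be separated out and meshed by hand, and why keeping the piece count and running time linear requires the bookkeeping below.

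\emph{Decomposition.} First I would compute the medial axis of $\Omega$ (known to be possible in time $O(n)$ for a simple $n$-gon) and use it to cut $\Omega$ into $O(n)$ curvilinear pieces: \emph{thin channels}, each conformally a rectangle $[0,L]\times[0,1]$ with $L\gtrsim 1$ and, after a bi-Lipschitz normalization with absolute constants, nearly such a rectangle; and \emph{thick pieces}, each a topological disk of bounded conformal modulus, bounded shape, boundary made of $O(1)$ analytic arcs, and containing $O(1)$ vertices of the polygon. The cross-cuts are short segments transverse to the channels; the number of pieces and the total of $\lceil L\rceil$ over all channels are $O(n)$, since each piece carries a bounded amount of boundary. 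Simultaneously I would fix a compatible subdivision of $\partial\Omega$ and of all cross-cuts into $O(n)$ edges, with neighbouring edges of comparable length and an even number of edges around each piece (a quadrilateral mesh requires this, by a parity count); every per-piece mesh must respect this common boundary data so that the assembled mesh is conforming.

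\emph{Meshing the pieces.} A thin channel is meshed by a ``ladder'' of $O(L)$ quadrilaterals that, up to the absolute distortion, are unit squares, with the rungs perturbed to meet the prescribed subdivisions on the ends and sides while every angle stays within $\delta$ of $90^\circ$. A thick piece $U$ is meshed through a Riemann map $f$ onto $U$ from a model region (a disk with marked boundary points, or a bounded-geometry model polygon): away from the preimages of the corners of $U$ the model mesh is a grid with $O(1)$ cells, fine enough — an absolute number — that the Koebe distortion theorem together with the modulus bounds makes $f$ coarsely affine on each cell; near the preimage of each corner it is an explicit graded submesh of power-map type, which is angle-dependent near an original polygon vertex $v$: if the interior angle $\alpha$ at $v$ exceeds $120^\circ$, split it into $k$ equal angles with $\alpha/120^\circ\le k\le\alpha/60^\circ$ (an integer $k$ exists, that interval having length $\ge 1$); if $\alpha<60^\circ$, keep $v$ as a mesh vertex, so one incident quadrilateral retains that sub-$60^\circ$ angle; otherwise a single incident quadrilateral carries the angle $\alpha\in[60^\circ,120^\circ]$. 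Replacing the curvilinear image of each grid cell by the straight quadrilateral on the same vertices and absorbing the resulting $O(\eps)$ error into $\delta$ gives $O(1)$ quadrilaterals per thick piece with all new angles in $[60^\circ,120^\circ]$, hence $O(n)$ in total. These model meshes and corner gradings are specified once, explicitly.

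\emph{Assembly, running time, and the main obstacle.} Because each per-piece mesh respects the common boundary subdivision, the union is a conforming quadrilateral mesh of $\Omega$ with $O(n)$ cells and the stated angle bounds, the only sub-$60^\circ$ angles being retained original ones. No conformal map is computed to high precision: the channels are explicit, and for a thick piece only the scale-invariant distortion estimates — not $f$ itself — enter, so with the $O(n)$ medial-axis data and linear combinatorics the construction runs in time $O(n)$. The main obstacle is that two tight constraints must hold simultaneously. The bounds are extremal: the right-angle and equilateral-type model meshes carry only the slack $\delta$ we can afford, so each estimate — ``$f$ coarsely affine on a cell,'' ``cross-cut nearly straight,'' ``boundary subdivision nearly uniform,'' ``channel nearly a rectangle'' — must hold with a genuine universal small constant, not merely up to an unspecified $O(1)$; meanwhile conformity forces the boundary subdivision, the thick/thin cuts, the corner models, and the channel ladders to be chosen in mutual consistency while the piece count stays linear. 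Producing one universal $\delta$ and one universal ``fine enough'' scale valid across all pieces, and proving the thick-piece step in the strong form ``mesh $U$ respecting a prescribed, locally uniform boundary subdivision,'' is where the real work lies.
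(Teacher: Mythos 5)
Your overall strategy (thick/thin decomposition, explicit meshes on thin channels, conformal transfer of a model mesh on thick pieces, with a fast Riemann map for the $O(n)$ running time) is the same as the paper's, but two steps in your thick-piece argument would fail as stated.

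First, your decomposition assumes each thick piece has bounded complexity: ``$O(1)$ analytic arcs,'' ``$O(1)$ vertices of the polygon,'' and hence ``$O(1)$ cells'' per piece. That is false for the thick/thin decomposition: a convex polygon close to a circle has no thin parts at all and is a single thick piece with $n$ vertices. A thick piece with $n_1$ sides must be meshed by $O(n_1)$ quadrilaterals whose sizes adapt to the $n_1$ prevertices and crosscut preimages on the circle, and bounding that count linearly is the heart of the matter. The paper does it by building the disk mesh from a fixed tesselation of the hyperbolic plane by right pentagons, together with ``Carleson'' quadrilaterals and triangles hanging below them, and then counting pentagons by comparing them with disjoint hyperbolic balls inside the convex hull of the prevertex set, whose hyperbolic area is $O(n_1)$ precisely because the set is $\delta$-thick. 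Your ``grid with $O(1)$ cells, fine enough'' has no analogue of this counting argument, and a decomposition into $O(n)$ bounded-shape pieces with $O(1)$ corners each and mutually compatible boundary subdivisions is essentially equivalent to the theorem you are trying to prove.

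Second, your plan to use model meshes with all angles in $[60^\circ+\delta,120^\circ-\delta]$ and then ``absorb the $O(\eps)$ straightening error into $\delta$'' cannot work. Wherever three quadrilaterals meet at an interior vertex the angles sum to $360^\circ$, so one of them is at least $120^\circ$; such degree-three junctions are unavoidable in the disk mesh (they occur exactly where the construction branches, at the Carleson triangles), and Bern--Eppstein's Euler-formula argument shows the value $120^\circ$ itself cannot be avoided for general polygons. So there is no uniform slack $\delta$ to absorb the distortion from straightening curved edges or from the nonlinearity of $f$. You correctly flag this as ``the main obstacle,'' but flagging it is not resolving it. The paper's resolution is a separate mechanism at exactly these junctions: each Carleson triangle contains a small ``inner triangle'' carrying the extreme angles; on the inner triangle the conformal map $f$ is replaced by its linear part $h(z)=f(c)+(z-c)f'(c)$, which preserves the $60^\circ$ and $120^\circ$ angles exactly, and a monotone-graph argument shows that replacing the circular-arc foliation leaves by their chords keeps all angles in $[60^\circ,120^\circ]$; the mismatch $|f-h|=O(\eta^2)\diam(f(T_i))$ along $\partial T_i$ is then small relative to the adjacent cells, which only use angles near $90^\circ$. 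Without this (or an equivalent device), your construction proves the theorem only with angle bounds $[60^\circ-\eps,120^\circ+\eps]$.
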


The theorem is sharp in the sense that no shorter interval of 
angles suffices for all polygons: using Euler's formula, 
Bern and Eppstein proved 
(Theorem 5 of \cite{Bern-Eppstein-QuadMesh}) that any 
quadrilateral mesh of a  polygon
with all angles $\geq 120^\circ$ 
must contain an angle $\geq 120^\circ$. On the other hand, any 
boundary angle $\theta > 120^\circ$ must be subdivided by the mesh 
in Theorem \ref{main}
and hence there must be a new  angle $\leq \theta/2$ in the mesh.
Thus taking  polygons with an angle $ \theta \searrow 120^\circ$ 
 shows $60^\circ $ is the optimal lower bound.

It is perhaps best to think of Theorem \ref{main} as an 
existence result. Although we give a linear time algorithm 
for finding the mesh, the constant is large and the construction 
depends on other linear algorithms, such Chazelle's linear time
triangulation of  polygons, that  have  not been  implemented (as far as
I know).

The three main tools  in the proof of Theorem \ref{main}  are 
 conformal maps, thick/thin decompositions of
 polygons and hyperbolic tesselations.
We will decompose $\Omega$ into $O(n)$ ``thick'' and ``thin'' parts.
The thin parts have simple shapes and we can easily construct 
an explicit mesh in each of them. The thick parts are more complicated, 
but we can use a conformal map to  transfer a mesh from 
the unit disk, $\disk$, to the thick parts of $\Omega$
 with small distortion.
The mesh on $\disk $ is produced using a finite piece
 of an infinite tesselation 
of $\disk$ by hyperbolic  pentagons.

I would like to thank Marshall Bern for asking me the question that lead
to Theorem \ref{main} and  pointing out his paper \cite{Bern-Eppstein-QuadMesh} 
with David Eppstein.
Also thanks to Joe Mitchell for many helpful conversations on 
computational geometry.  This paper is part of a series
(\cite{Bishop-Bowen},
\cite{Bishop-BrenConj}, 
\cite{Bishop-ExpSullivan}, 
\cite{Bishop-time})
that exploits the close connection between 
 the medial axis of a planar domain, the geometry of its hyperbolic
convex hull in $\uhs$  and the conformal map of the domain to the disk.
This was originally motivated by a result of Dennis Sullivan  \cite{Sullivan81} 
  about boundaries of hyperbolic 3-manifolds and its generalization by 
 David Epstein (only one ``p'' this time) and Al Marden \cite{EM87}.
Many thanks to those authors for the inspiration and insights they have provided.
Also many thanks to the referees for a careful reading of the original 
manuscript. Their thoughtful comments and suggestions greatly improved 
the paper. 
One of them pointed out reference \cite{Gerver} where 
the Riemann mapping theorem 
is used to prove that any polygon with all angles $\geq \pi/5$ can be 
dissected into triangles with all angles $\leq 2 \pi /5$.


\section{M{\"o}bius transformations and hyperbolic geometry} \label{mobius}

A linear fractional (or M{\"o}bius)  transformation is 
a map of the form $z \to (a z + b )/ ( c z + d)$. This is a 1-1, onto, 
holomorphic map of the Riemann sphere $\sphere = 
\complexes \cup \{ \infty \}$ to 
itself.  Such maps form a group under composition and
are well known to map circles to circles (if we 
count straight lines as circles that pass through $\infty$).
M{\"o}bius transforms are conformal, so they preserve angles.
Given two sets of  distinct points $\{ z_1, z_2, z_3\}$ and 
$\{ w_1, w_2, w_3 \}$ there is a unique M{\"o}bius transformation that
sends $w_k \to z_k$ for $k=1,2,3$.
A M{\"o}bius transformation maps the unit disk, $\disk$, to itself 
iff it is of the form $g(z) =\lambda (z-a)/(1-\bar az)$ for some $a \in \disk, 
|\lambda|=1$.

The hyperbolic metric on the unit disk is given 
by 
$$ \rho(v,w) = \inf \int_\gamma \frac {2 |dz|}{1-|z|^2},$$
where the infimum is over all rectifiable arcs connecting 
$v$ and $w$ in $\disk$. This is a metric of constant negative 
curvature. In some sources, the ``2'' is omitted; 
we have chosen this version to be consistent with 
the  trigonometric formulas found in \cite{Beardon}.
Geodesics  for this metric are circular arcs that are perpendicular 
to the boundary (including diameters).
Hyperbolic area is given by $4 dxdy /(1-|z|^2)^2$.
The area of a triangle with geodesic edges is $\pi-\alpha -
\beta-\gamma$, where $\alpha, \beta, \gamma$ are the interior 
angles. Thus the area of any hyperbolic triangle is $\leq \pi$.

The hyperbolic metric 
is well known to be invariant
under M{\"o}bius transformations of the disk, so it is enough 
to compute it when one point has been normalized to be $0$ and 
the other rotated to the positive axis.  If $0< x <1$ and $\rho = 
\rho(0,x)$, then 
$$ \rho= \log \frac {1+x}{1-x}, \qquad x = \frac {e^\rho -1}{e^\rho +1}.$$
It is also convenient to consider the isometric model of the 
upper half-space, $\uhp$. In this case the hyperbolic  metric is given 
by 
$$ \rho(v,w) = \inf \int_\gamma \frac { |dz|}{y},$$
 where $ z = x + iy$, but geodesics are still circular arcs 
perpendicular to the boundary.  

If $E \subset   \circle = \partial \disk$ is closed then 
$\circle \setminus E = \cup I_j$ is a union of open intervals. 
The hyperbolic convex hull of $E$, denoted ${\rm{CH}(E)}$,  
is the region in $\disk$ bounded by $E$ and the collection of 
circular arcs $\{ \gamma_j\}$, where $\gamma_j$ is the hyperbolic 
geodesic with the same endpoints as $I_j$. See Figure \ref{W}.
\begin{figure}[htbp]
\centerline{
 \includegraphics[height=1.5in]{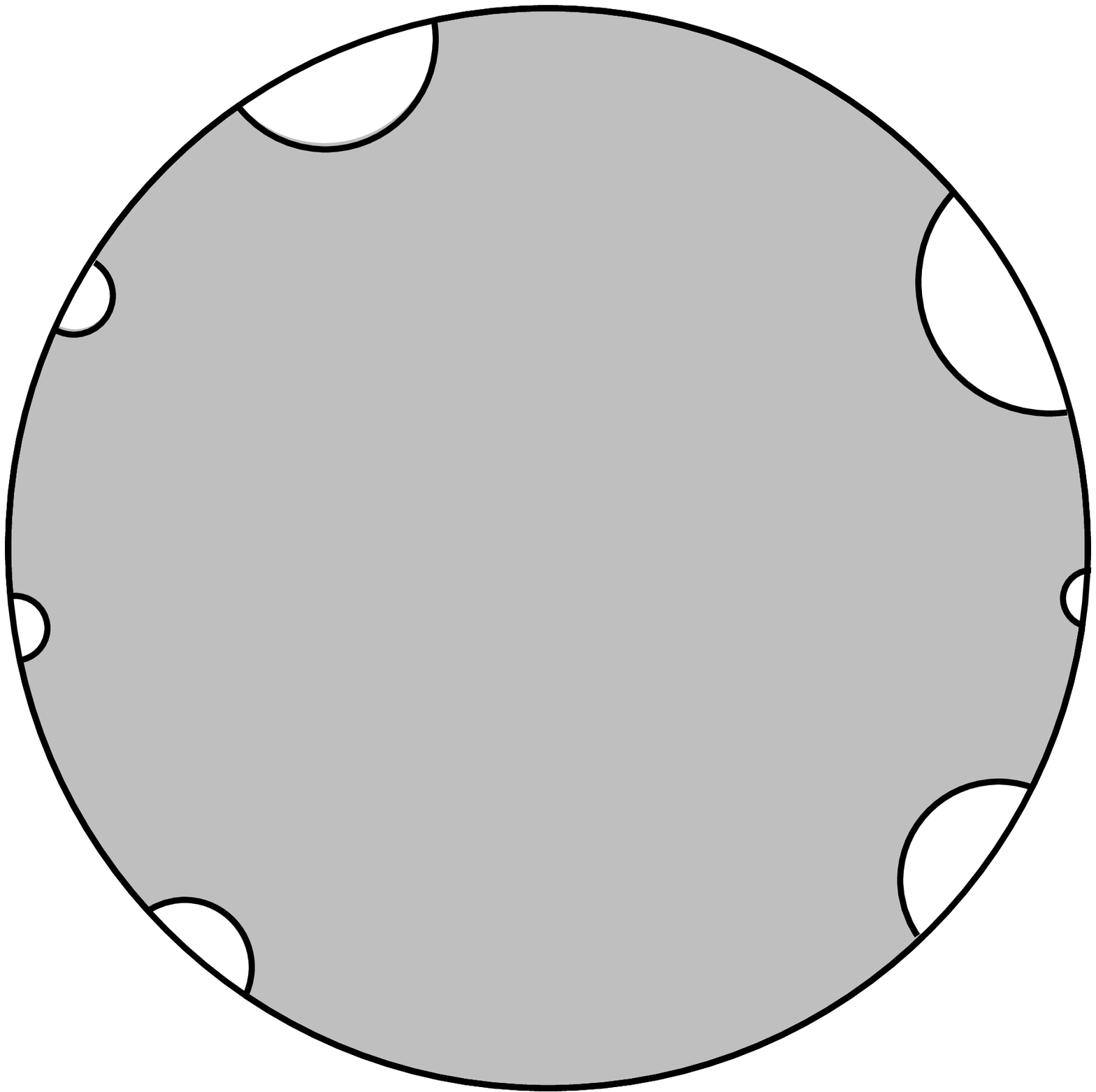}
 $\hphantom{xxxxx}$
 \includegraphics[height=1.5in]{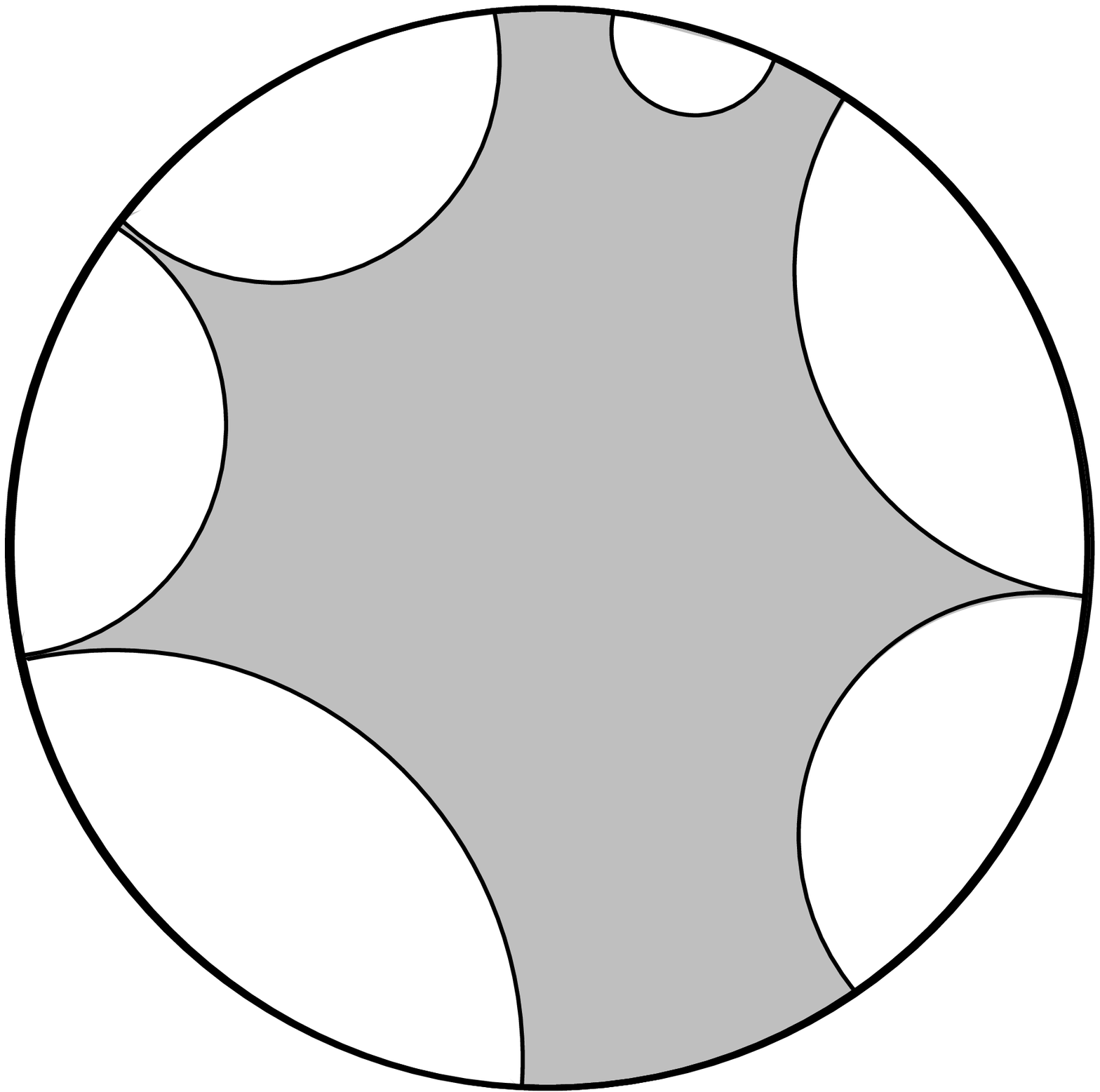}
 $\hphantom{xxxxx}$
 \includegraphics[height=1.5in]{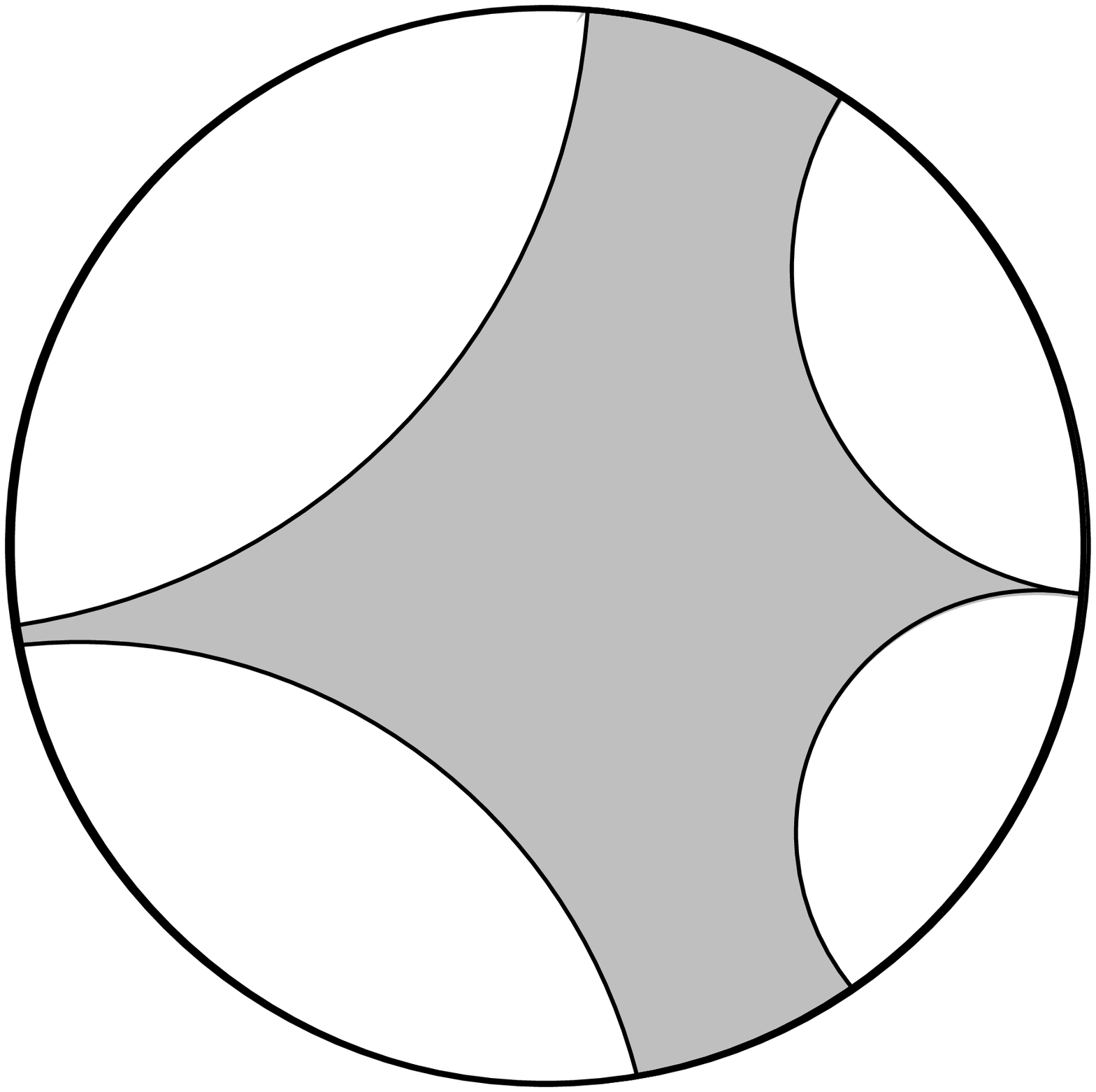}
 }
\caption{ \label{W} 
Examples of  hyperbolic convex hulls. The one on the left is 
uniformly perfect, the center is thick  with a large $\eta$, but not 
uniformly prefect, and the right is only thick with a small $\eta$
(there are two geodesics that almost touch, but do not share an endpoint).
}
\end{figure}

A closed  set $E \subset \circle$ is called $\eta$-thick if 
any two components of $\partial {\rm{CH}}(E) \cap \disk$ that don't share
an endpoint are at least hyperbolic distance  $\eta$ apart.
If $E$ is $\eta$-thick, then  any point in the hull is contained in a 
hyperbolic ball of radius $\eta$ that is also contained in the 
convex hull. The thickness condition can be written in other 
ways. For example, $E$ is $\eta$-thick iff non-adjacent 
complementary intervals have extremal distance at least $\delta>0$
(with $\delta^{-1} \simeq \frac 2\pi \log \frac 1 \eta$ for small 
$\delta, \eta$) \cite{Bishop-time}.
 A closed set $E$ is called uniformly perfect  
if any two  components of  $\partial {\rm{CH}}(E) \cap \disk$  are at least 
hyperbolic distance $\eta $ part. This stronger condition arises 
many places in function theory, but will not be used in this paper.


\newpage 

\section{A subdivision of the hyperbolic disk} \label{subdivide disk}

To prove Theorem \ref{main}  we will divide  the interior of 
$\Omega$ into pieces  called ``thick'' and ``thin'' (see 
\cite{Bishop-time} and Section \ref{thick and thin}).
The thin pieces will be meshed 
explicitly, but the  mesh on the thick pieces will
 be transferred from 
a quadrilateral mesh of a domain  in the unit disk via a conformal map.
Most of our time will be spent  constructing the
mesh on the disk. 
In this section we describe the subdomain and how to subdivide it into 
circular arc
triangles, quadrilaterals and pentagons. In  the following 
sections we 
show how to construct quadrilateral meshes for each  subregions that 
are consistent along shared boundaries.

A compact hyperbolic polygon is a bounded region in hyperbolic space bounded by 
a finite number geodesic segments.  The polygon is ``right'' if every 
interior angle is 
$90^\circ$.  There are no compact hyperbolic right triangles or 
quadrilaterals, but there are hyperbolic right $n$-gons 
for every $n \geq 5$  and any such can be extended to a tesselation 
${ \cal T}_n$ of hyperbolic space by repeated reflections. 
 See Figure  \ref{n-gons} for the case of pentagons (the only 
case we use in this paper).  

\begin{figure}[htbp]
\centerline{
 \includegraphics[height=1.5in]{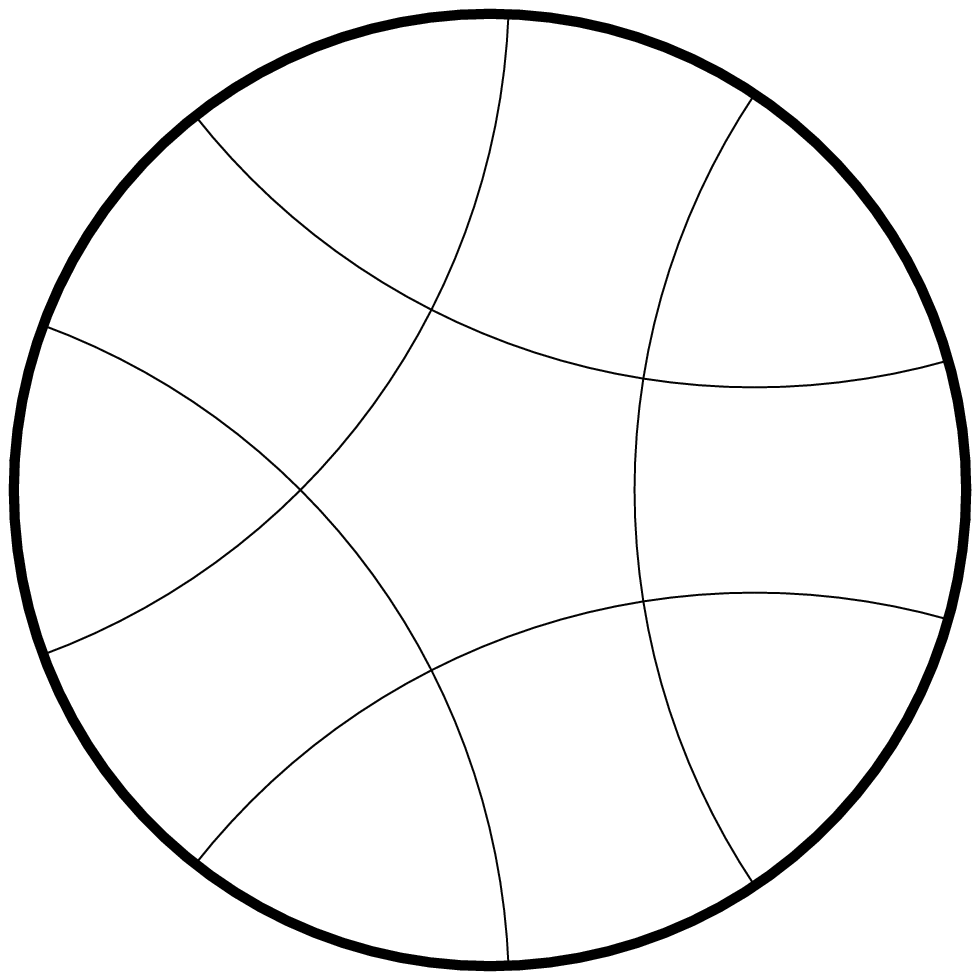}
$\hphantom{xxxx}$
 \includegraphics[height=1.5in]{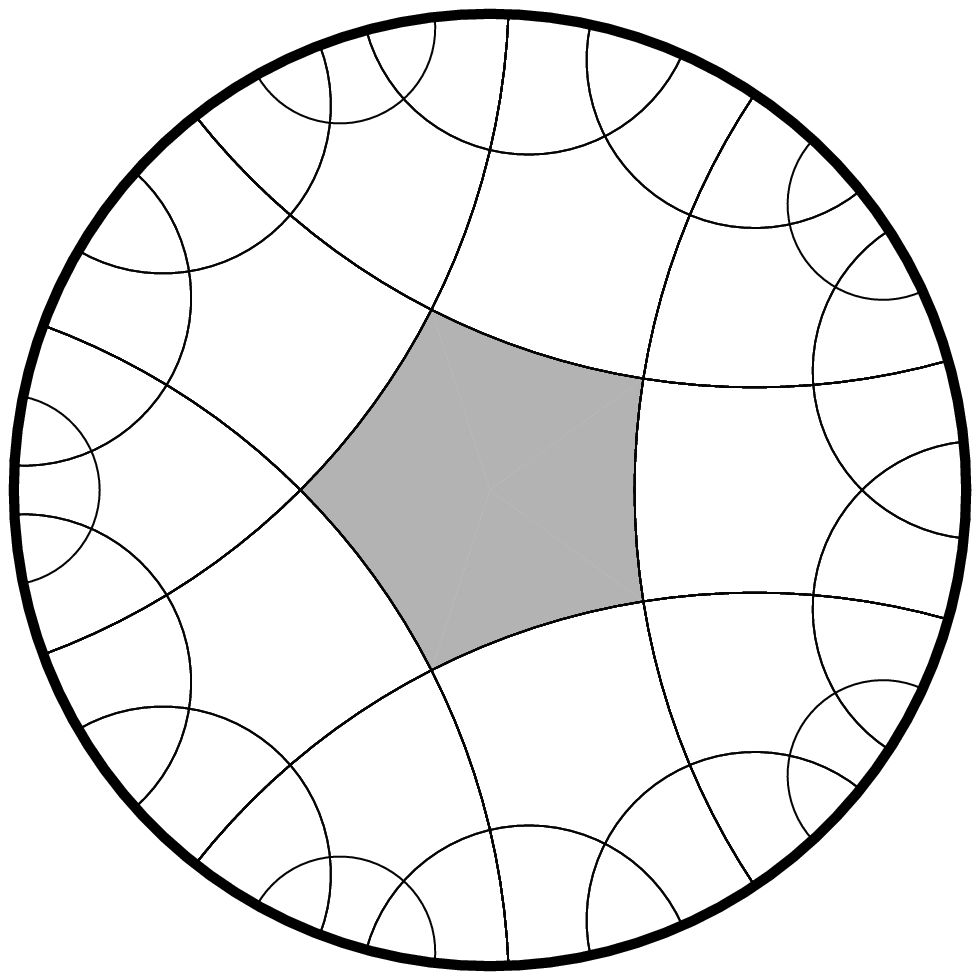}
 }
\caption{ \label{n-gons} 
A hyperbolic right  pentagon (left) and the its neighbors in the 
tesselation ${\cal T}_5$.
}
\end{figure}

Let $L= \cosh^{-1}(1 + 2 \cos(\frac {2 \pi }{5})) \approx 1.06128$
  denote the side length of a hyperbolic right 
pentagon. We don't need the specific value, but it can 
be computed using $L=c$,  $\gamma= 2\pi/5$, $\alpha= \beta = \pi/4$
in the   second hyperbolic law of cosines
(see \cite{Beardon}):
$$ \cosh c = \frac {\cos \alpha \cos \beta + \cos \gamma}
                    {\sin \alpha \sin \beta}.$$

  In the  tesselation ${\cal T}_5$, each edge 
of a pentagon lies on some hyperbolic geodesic.  Each of these 
geodesics divides $\circle$ into two arcs and we let 
${\cal I}_5$ denote the collection of all such arcs.

\begin{lemma} \label{cover 1} 
There is a $c < \infty$ so that given any arc $J \subset \circle$
there are $I_1, I_2 \in {\cal I}_5$ with $I_1 \subset J \subset 
I_2$ and $|I_2|/|I_1| \leq c$ ($|\cdot|$ denotes arclength).
\end{lemma}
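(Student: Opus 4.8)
The plan is to exploit the homogeneity of the tessellation $\mathcal{T}_5$ under its symmetry group. Every edge-geodesic of $\mathcal{T}_5$ is the image of one fixed edge-geodesic $\gamma_0$ of the central pentagon under some element $g$ of the reflection group $\Gamma$ generated by the sides of a right pentagon, and each such $g$ is a Möbius transformation of $\disk$. Since $\Gamma$ acts on $\disk$ with a compact fundamental domain (the central pentagon has finite hyperbolic area and is compact), the translates of the central pentagon have hyperbolic diameter bounded by a fixed constant $D$ and cover $\disk$. This uniformity is the source of the constant $c$.

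First I would reduce to a normalized position: given an arc $J \subset \circle$ with hyperbolic geodesic $\gamma_J$ having the same endpoints, pick a point $p$ on $\gamma_J$ realizing the hyperbolic distance from $0$ to $\gamma_J$ (or any canonical point on $\gamma_J$), and choose $g \in \Gamma$ so that $g(p)$ lies in the closure of the central pentagon $P_0$. Then $\gamma_{g(J)} = g(\gamma_J)$ passes within distance $D$ of $0$. Now the collection of edge-geodesics of $\mathcal{T}_5$ is $\Gamma$-invariant, so it suffices to prove the statement for arcs $J$ whose geodesic $\gamma_J$ comes within hyperbolic distance $D$ of the origin — a \emph{compact} family of arcs, once we also observe that such arcs have length bounded below by some $\ell_0>0$ (a geodesic within distance $D$ of $0$ subtends an arc of definite size) and of course at most $2\pi$.

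Second, for this compact family I would produce $I_1, I_2$ by a direct argument. Among the finitely many geodesics of $\mathcal{T}_5$ that come within distance $D+1$ of $0$ — finitely many because $\Gamma$ acts properly discontinuously, so only finitely many translates of $\gamma_0$ meet a fixed compact ball — consider the arcs they cut off. Since the central pentagon is tiled out to a large radius, these arcs form a ``net'' of $\circle$ at a definite scale near any geodesic passing close to $0$: one can find among them an arc $I_2 \supset J$ and an arc $I_1 \subset J$. The ratio $|I_2|/|I_1|$ is then a continuous function on the compact parameter space of admissible $(J, I_1, I_2)$ configurations, hence bounded; alternatively one argues more explicitly that both $|I_1|$ and $|I_2|$ are comparable to $|J|$ with constants depending only on the geometry of the right pentagon. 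Transporting back by $g^{-1}$, which is a Möbius map of $\disk$ and hence preserves the family $\mathcal{I}_5$ and the nesting relations (though not arclength), we must check that the arclength ratio is still controlled: here I would use that $g^{-1}$ maps the three nested arcs $I_1 \subset J \subset I_2$, all of comparable size and all ``facing'' the bounded region near $0$, to three nested arcs, and a Möbius transformation distorts the arclengths of nested arcs that are uniformly separated from being all of $\circle$ by a bounded factor — quantitatively, the distortion of $|I|$ under $g^{-1}$ is governed by the derivative of $g^{-1}$, which is essentially constant (up to a bounded factor) on the relevant portion of $\circle$ since $I_2$ is not too large.

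The main obstacle, and the step deserving the most care, is exactly this last transfer: controlling how Möbius maps distort \emph{arclength} ratios, since $\mathcal{I}_5$ is only invariant as a set of arcs, not as a metric object. The clean way to handle it is to phase the whole argument through a Möbius-invariant quantity — for instance, work with the hyperbolic distance from $0$ to each geodesic, or with cross-ratios of the four endpoints of $I_1$ and $I_2$ — prove the nesting and comparability statements in that invariant language for the normalized case, and only at the very end translate back to arclength using the standard estimate that for an arc $I$ whose geodesic stays within bounded hyperbolic distance of a fixed point, $|I|$ is comparable to $e^{-d}$ where $d$ is that distance, with uniform constants. This converts the arclength comparison $|I_2|/|I_1| \le c$ into a bounded difference of hyperbolic distances, which is manifestly preserved by the group action.
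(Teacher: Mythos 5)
Your proposal follows essentially the same route as the paper: normalize by a M{\"o}bius symmetry of the tessellation carrying the pentagon containing the closest point of $\gamma_J$ to the central pentagon, find nested tessellation geodesics in that bounded configuration, and pull back while controlling arclength distortion via the near-constancy of the M{\"o}bius derivative away from the image of the origin (equivalently, your $|I|\simeq e^{-d}$ reformulation). The paper handles the transfer step you flag as the main obstacle by the explicit computation $|f'(z)|\simeq 2(1-|b|)$ for $z\in\circle$ bounded away from $b=g(0)$, choosing the outer geodesic at uniformly bounded hyperbolic distance from $b$ so that this estimate applies to both nested arcs.
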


\begin{proof}
Let $\gamma$ be the hyperbolic geodesic with the same endpoints 
as $J$. The top point of $\gamma$ (i.e., the point closest
to $0$) is contained in some pentagon of the tesselation.  By taking 
$c$ larger, we can assume $J$ is as short as we wish, so we 
may assume this is not the central pentagon.  Let 
$a$ be the hyperbolic center of this pentagon and let 
$g(z) = \lambda (z-a)/(1- \bar a z)$ where $|\lambda|=1$
is chosen $g$ maps the pentagon to the central pentagon.
 This is a M{\"o}bius 
transformation that sends $a$ to $0$, maps the diameter  $D$
through $a$ into  $\lambda D$ and maps $\gamma$ to a geodesic $\gamma'$ that 
intersects the central pentagon of the tesselation. Moreover, 
since $g$ preserves angles,  the angle between $\gamma'$ and 
$ D'= \lambda D$ is the same as between $\gamma$ and $D$, and this is 
bounded away from $0$, since the intersection point is within 
distance $L$ of the top point of $\gamma$.

Thus $\gamma'$ also makes a large angle with $ D'$ and so is 
some positive distance $r$ from the point $b= -\lambda a=g(0)$.
The inverse of $g$ 
is $ f(z) = \bar \lambda (z-b)/(1-\bar b z)$ and the 
derivative of this is  $(1-|b|^2)/(1-\bar b z)^2$. From this
we see that for $|z| =1$, 
$$ \frac {1-|b|}{|z-b|^2} \leq 
      | f'(z) | \leq \frac {2(1-|b|)}{|z-b|^2}|,$$
so that $ | f'(z) |  \simeq 2(1-|b|) $ with a constant
that depends only on $|z-b|$.  Thus sets outside a ball 
around $b$ will be compressed similar amounts by $f$.

Choose geodesics $\gamma_1, \gamma_2$ from the tesselation edges
on either side of $\gamma'$ so that $\gamma_1$ separates $b$
from $\gamma'$ and has a uniformly bounded distance $r$ from 
$b$ (we can easily do this if $1-|b| = 1-|z| \simeq |J|$ is small 
enough). Apply $f$ to $\gamma_1, \gamma_2$ and we get two
geodesics of comparable Euclidean size  whose  base intervals 
are the desired $I_1, I_2$. 
\end{proof}

A Carleson triangle in $\disk$ is a region bounded by two geodesic 
rays that have a common endpoint where they meet with 
interior angle $90^\circ$. Any two such are M{\"o}bius 
equivalent. A Carleson quadrilateral is bounded by one 
finite length hyperbolic segment and two geodesic rays, 
again with both interior angles equal $90^\circ$. See
Figure  \ref{HCQ}. It is determined up to isometry by the 
hyperbolic length of its finite length edge. In this paper 
all of our Carleson quadrilaterals with have length $L$, where
$L$ is the side length of a right pentagon, as above.

\begin{figure}[htbp]
\centerline{ 
\includegraphics[height=1.0in]{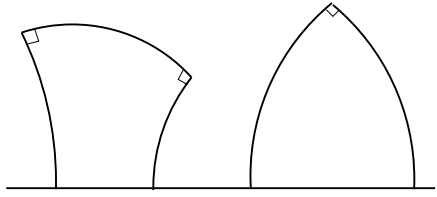}
}
\caption{ \label{HCQ} 
A Carleson quadrilateral and triangle. 
}
\end{figure}

We will prove the following:

\begin{lemma} \label{build intervals}
There is a $c < \infty$ so that the following holds. 
Suppose we are given $A>1$  and a finite collection 
 intervals $\{ I_j\}_1^N$ on 
the unit circle so that the expanded intervals $\{ A I_j\}$ are
disjoint (these are the concentric intervals that are $A$ times longer)
and each has length $< \pi$. Let $E = \cup_j I_j$. 
  We can find intervals $\{ J_j\}$ so that
\begin{enumerate}
 \item  $  \sqrt{A} I_j \subset 
J_j \subset     c \sqrt{A}  I_j$, $j=1, \dots, N$.  
   \item  Let $F = \cup_j J_j$ and let $W \subset \disk$ be the hyperbolic 
   convex hull of $\circle \setminus F$. Then $W$ 
      has a mesh  $\{ W_k\}$ consisting of 
     right hyperbolic pentagons,  Carleson quadrilaterals and  
     Carleson triangles. A pentagon shares an edge only with other pentagons 
     or the top of a  quadrilateral, a quadrilateral shares a top edge
    only with pentagons and  side edges
       with  triangles and other quadrilaterals,
       and a triangle shares edges only with quadrilaterals. 
   \item Each component  of $\partial  W \cap \disk$ 
       is an infinite geodesic that is the  union of  side edges from 
     two Carleson quadrilaterals and edges from three pentagons.
   \item Every pentagon used in the mesh  is a uniformly bounded 
    hyperbolic distance from the hyperbolic convex hull of $E$.
   \item Every region  $W_k$ in the mesh has diameter bounded by
        $O(\dist(W_k, E))$ (Euclidean distances). 
\end{enumerate}
\end{lemma}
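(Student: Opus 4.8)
The plan is to first ``spread out'' the given intervals $\{I_j\}$ so that the resulting arcs sit at the scale of the tesselation ${\cal T}_5$, and then to mesh the region $W$ that they cut out by filling a bounded neighbourhood of ${\rm{CH}}(E)$ with tesselation pentagons and draping Carleson quadrilaterals and triangles over everything lying between that neighbourhood and $\circle$. For the first part: for each $j$ apply Lemma \ref{cover 1} to the arc $\sqrt A I_j$, obtaining $J_j\in{\cal I}_5$ with $\sqrt A I_j\subset J_j$ and $|J_j|\le c\sqrt A|I_j|$; since $J_j\supset\sqrt A I_j$ contains the centre of $I_j$ this gives $\sqrt A I_j\subset J_j\subset c\sqrt A I_j$ (after doubling $c$), which is conclusion (1). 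The point of the hypothesis that $\{A I_j\}$ is disjoint is that it forces the gap between consecutive $I_j,I_{j+1}$ to exceed $(A-1)(|I_j|+|I_{j+1}|)/2$, and once $A$ is larger than an absolute constant (depending only on the $c$ of Lemma \ref{cover 1}) this makes the arcs $J_j$ pairwise disjoint. Setting $F=\cup_j J_j$ and $W={\rm{CH}}(\circle\setminus F)$, the fact that each $J_j\in{\cal I}_5$ means that the geodesic $\gamma_j$ with the same endpoints as $J_j$ — which is one component of $\partial W\cap\disk$ — is an edge--geodesic of ${\cal T}_5$, and $\partial W\cap\disk$ is exactly the disjoint union of the $\gamma_j$. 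This is the feature that lets ${\cal T}_5$ be used right up to $\partial W$.

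Next, I would subdivide $W$ in three nested layers. The \emph{pentagon core} consists of the pentagons of ${\cal T}_5$ that lie in $W$ within a bounded hyperbolic distance of ${\rm{CH}}(E)$; because the only such pentagons in $W$ cluster near the ideal polygon $\Pi$ spanned by the $2N$ endpoints of the $I_j$, and the thick part of $\Pi$ is compact of area $O(N)$, this uses $O(N)$ pentagons, and (using that each $\gamma_j$ is an edge--geodesic, and choosing the neighbourhood correctly) one arranges that exactly three consecutive core pentagons meet $\gamma_j$ along an edge, clustered near the point $z_j$ of $\gamma_j$ closest to $0$. The \emph{Carleson collar} is obtained by attaching, along each $\gamma_j$ and flanking those three pentagons, two Carleson quadrilaterals of side length $L$, their length--$L$ tops glued to edges of core pentagons and their two infinite side edges running down $\gamma_j$ to its ideal endpoints; this realises the splitting of $\gamma_j$ demanded in (3). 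Finally, the \emph{Carleson fans} fill everything left over, namely the parts of $W$ lying over the gaps of $F$, between the core, the collars and $\circle$: these are tiled by further Carleson quadrilaterals and triangles, each attached along its top to a pentagon or along a side to a quadrilateral or a triangle, shrinking as they descend toward $\circle$, with one Carleson triangle (ideal side on $\circle$) closing off each fan. The gluings are arranged so that a pentagon is met only by pentagons or by quadrilateral tops, a quadrilateral only by pentagons across its top and by quadrilaterals or triangles across its sides, and a triangle only by quadrilaterals — this is conclusion (2).

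It remains to check (4) and (5). Every pentagon of the mesh is either a core pentagon, hence within a bounded hyperbolic distance of ${\rm{CH}}(E)$ by construction, or one of the three attached to some $\gamma_j$ near $z_j$; in the latter case $z_j$ lies within a bounded hyperbolic distance of the geodesic edge of $\Pi$ over $I_j$, which lies in ${\rm{CH}}(E)$, so again the pentagon is within a bounded distance of ${\rm{CH}}(E)$. This is (4). For (5): a core pentagon near a point $z$ has Euclidean diameter comparable to $1-|z|$, and since it sits in the thick part of a bounded neighbourhood of $\Pi$ its Euclidean distance to $E$ is also at least a fixed multiple of $1-|z|$; every Carleson piece, on the other hand, is by construction positioned and scaled so that its diameter is comparable to its distance to the nearest arc $I_j$, which is at least $\dist(W_k,E)$. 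Hence $\diam(W_k)=O(\dist(W_k,E))$ for every $k$.

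The single estimates above are routine; the real work is the simultaneous combinatorial bookkeeping in the middle step — making the collar and the fans interlock along each $\gamma_j$ with \emph{exactly} the adjacencies of (2) and the three--pentagon count of (3), and making these local pictures fit together consistently where several $\gamma_j$ interact. The sharpest form of this difficulty is the regime where $A$ is close to $1$: then the hypothesis gives only weak separation of the $I_j$, the snapped arcs $J_j\subset c\sqrt A I_j$ need no longer be disjoint and the clustering of nearby intervals must be unwound by hand, and two neighbouring geodesics $\gamma_j,\gamma_{j+1}$ become hyperbolically close, so that the collars of adjacent geodesics must be spliced together across long thin chains of Carleson quadrilaterals. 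Carrying out this degenerate case carefully, while still producing a mesh with all the stated properties, is the main obstacle.
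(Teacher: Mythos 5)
Your overall strategy is the paper's: snap each $\sqrt{A}I_j$ to a tesselation interval $J_j\in{\cal I}_5$ via Lemma \ref{cover 1} (this part is fine and gives (1)), then tile the top of $W$ with pentagons of ${\cal T}_5$ and hang Carleson quadrilaterals and triangles below. But the heart of the lemma is the combinatorial construction that makes (2) and (3) true, and that is exactly the step you declare to be ``the main obstacle'' and do not carry out. Your substitute --- take as pentagon core \emph{all} tesselation pentagons within a bounded hyperbolic distance of ${\rm CH}(E)$ and then ``fill everything left over'' with Carleson pieces --- does not work as stated: the boundary of such a union of pentagons is an uncontrolled jagged curve, and there is no reason the leftover region decomposes into Carleson quadrilaterals (whose tops must be single pentagon edges of length $L$ and whose sides perpendicular geodesic rays reaching $\circle$) and Carleson triangles with the adjacencies of (2). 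The paper's device, which is missing from your sketch, is to build the pentagon region from a cover of the circle: take the maximal intervals of ${\cal I}_5$ compactly contained in $\circle\setminus F$ and meeting $\circle\setminus G$, together with each $J_j$ and the two flanking tesselation intervals $J_j^1,J_j^2$ cut out by the neighbours of the top pentagon over $\gamma_j$; pass to a subcover of multiplicity at most $2$; and let the pentagons be those lying above \emph{all} the corresponding geodesics. The complement in $W$ then decomposes canonically into a chain of Carleson quadrilaterals under each subcover geodesic and a Carleson triangle under each crossing of consecutive geodesics, and the three-pentagon/two-quadrilateral structure of (3) along each $\gamma_j$ comes precisely from the flanking intervals $J_j^1,J_j^2$. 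Property (4) is then not imposed by fiat but proved: Lemma \ref{cover 1} applied at a point $x$ with $d=\dist(x,F)$ produces a tesselation interval of length $\gtrsim d$ missing $F$, so every maximal interval of the cover has length comparable to its distance from $F$, which keeps the corresponding pentagons near the hull.

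Two smaller points. First, your claim that the pentagon core has $O(N)$ elements is not part of this lemma and is false in general without a thickness hypothesis (that count is Lemma \ref{thick collections} and genuinely uses $\delta$-thickness); fortunately nothing in Lemma \ref{build intervals} needs it. Second, the degeneration $A\to 1$ that you single out as the main difficulty is not where the work is: the covering/subcover construction is insensitive to $A$, and the disjointness of the $J_j$ (which indeed requires $A$ bounded away from $1$) is an implicit assumption the paper also relies on and which holds in the application, where $A\simeq\exp(\pi/4\delta)$ is large.
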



\begin{proof} 
For each interval  $ I_j$ given in  the lemma, choose
$ J_j \in {\cal I}_5 $ to be the minimal interval 
containing $\sqrt{A} I_j$. Then (1) clearly holds by 
Lemma \ref{cover 1}. 

Let $\gamma_j$ be the geodesic with the same endpoints 
as $J_j$ and let $P_0$ be a  pentagon in ${\cal T}_5$ 
that is above $\gamma_j$ (i.e, 
whose interior lies in the component of $\disk \setminus 
\gamma_j$ containing $0$) and  whose boundary contains the 
``top'' of $\gamma_j$ (the point closest to $0$).
Let $P_1,P_2$ be the elements of ${\cal T}_5$ that are 
adjacent to $P_0$ and also above  $\gamma_j$. Then the 
part of $\gamma_j$ covered by the boundaries of these 
three pentagons contains an  interval of hyperbolic length 
$2L$ centered at the top point.   
Let $\gamma_j^1 $ be the geodesic containing  the side of 
$P_1$ that has one endpoint on $\gamma_j$ and is 
not on $\partial P_0$. Let $J_j^1 \in {\cal I}_5$
 be the base interval of $\gamma_j^1$.  Let 
$J_j^2 \in {\cal I}_5$ be the corresponding interval 
for $P_2$ and let $J_j' = J_j^1 \cup J_j \cup J_j^2$.
See Figure \ref{Jinterval}.

\begin{figure}[htbp]
\centerline{ 
\includegraphics[height=1.5in]{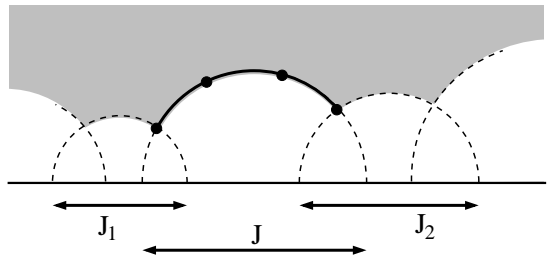}
}
\caption{ \label{Jinterval} 
On the left are $J, J_1, J_2$. The shaded region is a 
union of the pentagons; the white is a union of 
quadrilaterals and triangles.
}
\end{figure}

Let $G= \cup_j (J_j^1 \cup J_j \cup J_j^2)$
and let $\{\cal K\}$ be the collection of 
intervals in ${\cal I}_5$ that are  compactly
contained in $\circle \setminus F$, contain a 
point of $\circle \setminus G$  and 
are maximal in the sense of containment with respect to these
properties.
These clearly cover all of $\circle \setminus G$.
Now add the intervals $J_j, J_j^1, J_j^2$ to get 
a cover of the whole circle. 
Any open finite cover of an interval   has
 a subcover with overlaps  of at most $2$ (if a point 
is in three intervals we can keep the ones with leftmost left 
endpoint and rightmost right endpoint and throw away the third; 
repeat until every point is in at most two intervals). 
For such a subcover, we mesh $W$ with pentagons above the 
corresponding geodesics and by Carleson quadrilaterals and 
triangles below. See Figure \ref{W4}. 
 Conditions (2) and (3) are  clear from construction.   
\begin{figure}[htbp]
\centerline{ 
\includegraphics[height=1.75in]{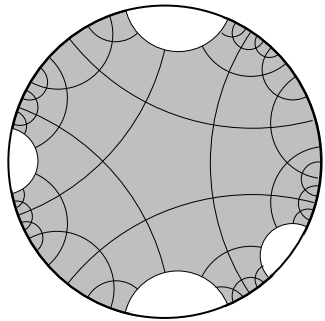}
}
\caption{ \label{W4} 
An example of  meshing a convex hull  $W$ with pentagons, quadrilaterals and 
triangles.  This example  is not to scale,  
since  the white regions 
should be much smaller than their distances apart.
}
\end{figure}

If $x \in \circle \setminus F$ and $d = \dist (x, F)$ then 
apply Lemma \ref{cover 1} to an interval of length 
$ \frac 12 d/c$ centered at $x$. We obtain an element of ${\cal I}_5$ 
containing $x$, missing $F$ and of length $\geq  \frac 12 d/c $. 
Thus the maximal interval of ${\cal K}$ containing $x$ has at least
this length. This implies (4).

Every right pentagon  $P$ has Euclidean 
diameter bounded by $O(\dist(P, \circle))= O(\dist(P,E))$. 
Every Carleson quadrilateral $R$ has a top edge along a geodesic  $\gamma$
with endpoints $\{ a,b \}$ and $\diam(R) \simeq \dist(R, \{ a,b\})$.
Since $\gamma$ misses  the hyperbolic 
convex hull of $E$,  the latter is $\leq \dist(Q, R)$.
Every Carleson triangle is adjacent to two Carleson quadrilaterals 
of comparable Euclidean size that separate it from $E$, so the 
estimate also holds for these triangles. Thus (5) holds.
\end{proof} 

\begin{lemma} \label{thick collections} 
If the collection $\{ I_j\}_1^n$  satisfies the 
conditions of Lemma \ref{build intervals} and if, in addition, the
set $E=\cup_j I_j$ is a $\delta$-thick set, 
then the mesh constructed in Lemma \ref{build intervals} has $O(n)$
elements, with a constant that depends only on $\delta$. 
\end{lemma}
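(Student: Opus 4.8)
The plan is to show that the mesh elements fall into two types and bound each type separately. First I would observe that the pentagons are easy to count: by part (4) of Lemma \ref{build intervals} every pentagon used in the mesh is within a bounded hyperbolic distance of $\mathrm{CH}(E)$, and since $E$ is $\delta$-thick, the hull $\mathrm{CH}(E)$ is covered by hyperbolic balls of radius $\delta$ contained in the hull (the property noted right after the definition of $\eta$-thick). A standard packing/volume argument in the hyperbolic disk then shows that the number of disjoint pentagons meeting a bounded neighborhood of the hull is comparable to the hyperbolic area of that neighborhood, which in turn is comparable — via the $\delta$-thickness — to the number of ``original'' geodesics $\gamma_j$, i.e. to $n$. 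More concretely, each $\gamma_j$ contributes an interval of hyperbolic length $2L$ of pentagon-boundary around its top point, and $\delta$-thickness prevents these from overlapping too much, so there are $O(n)$ pentagons with a constant depending only on $\delta$.

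Next I would count the Carleson quadrilaterals and triangles. By part (3), each component of $\partial W \cap \disk$ is an infinite geodesic built from the side edges of exactly two Carleson quadrilaterals together with edges from three pentagons; so the quadrilaterals are in bounded-to-one correspondence with the components of $\partial W \cap \disk$. The components of $\partial W \cap \disk$ correspond to the complementary intervals of $F = \cup_j J_j$, and since $J_j \subset c\sqrt{A} I_j$ with the $AI_j$ disjoint, the $J_j$ have bounded overlap, so the number of complementary intervals of $F$ is $O(N) = O(n)$. Hence there are $O(n)$ quadrilaterals, and since each triangle is adjacent only to quadrilaterals and the adjacency is bounded-degree (a quadrilateral has two side edges, each shared with one triangle or quadrilateral), the triangles number $O(n)$ as well. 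Adding the three counts gives $O(n)$ total.

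The main subtlety — and the step I would be most careful about — is making the pentagon count genuinely depend \emph{only} on $\delta$ and not secretly on the combinatorics of how the $J_j$ were assembled. The point is that pentagons in the mesh sit above geodesics that either are among the $\gamma_j$ (or their immediate tesselation-neighbors $\gamma_j^1, \gamma_j^2$), or come from the maximal intervals in $\mathcal K$ chosen to fill $\circle \setminus G$. For the latter, part (4) forces them to be near $\mathrm{CH}(E)$, so the hyperbolic-volume argument applies; one must check that ``bounded hyperbolic distance from $\mathrm{CH}(E)$'' together with the $\delta$-ball-packing of $\mathrm{CH}(E)$ yields a covering of the relevant region by $O(\mathrm{area})$ pentagons, and that the area of $\mathrm{CH}(E)$ is $O(n/\delta)$ because each of the $n$ defining intervals contributes a bounded-area ``collar'' and thickness keeps these collars from accumulating. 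This is where all the hypotheses — thickness, the disjointness of the $AI_j$, and parts (3)–(5) of the previous lemma — have to be combined, and I expect it to be the bulk of the work; the quadrilateral and triangle counts are then essentially bookkeeping.
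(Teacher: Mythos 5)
Your pentagon count is essentially the paper's argument: take a maximal disjoint collection of $\eta$-balls inside the hull (thickness guarantees every point of the hull lies in such a ball), note by part (4) that every pentagon is within bounded hyperbolic distance of one of these balls, and use disjointness plus fixed area to get $O(1)$ pentagons per ball. Two corrections to how you justify the area bound, though. First, $\mathrm{CH}(E)$ itself has \emph{infinite} hyperbolic area (it meets $\circle$ along arcs of positive length), so the set whose area you bound must be cut off, e.g.\ $\mathrm{CH}(E)\cap W$. Second, the bound $O(n)$ for that set needs no appeal to thickness or to ``collars'': it is a compact convex hyperbolic polygon with $O(n)$ geodesic sides, hence triangulates into $O(n)$ hyperbolic triangles, each of area at most $\pi$. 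Thickness is needed only to convert the area bound into a bound on the number of $\eta$-balls in a net covering the hull; without it a hull of small area can still be a long thin sliver requiring unboundedly many unit balls to cover (which is exactly what happens for non-thick $E$ even with $n=2$), and the pentagon count fails.

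The genuine gap is in your count of the Carleson quadrilaterals. Part (3) says that each component of $\partial W\cap\disk$ \emph{contains} the side edges of two quadrilaterals; it does not say that every quadrilateral has a side edge on $\partial W$, and most do not. For each interval $K$ of the subcover $\mathcal{K}$ (the intervals compactly contained in $\circle\setminus F$), the part of $W$ between the geodesic over $K$ and the circle must be filled by quadrilaterals and triangles, since pentagons never reach $\circle$; these quadrilaterals are invisible to $\partial W$. The number of such intervals is not controlled by the number of components of $\partial W\cap\disk$ (which is just $n$); it is comparable to the number of pentagons, and bounding it is essentially the content of the lemma. The correct bookkeeping uses part (2) instead: every Carleson quadrilateral shares its top edge with a pentagon, and a pentagon has only five edges, so the number of quadrilaterals is at most five times the number of pentagons, i.e.\ $O(n)$. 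Then, as you say, each triangle shares its edges only with (boundedly many) quadrilaterals, so the triangles are also $O(n)$.
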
 

\begin{proof}
Choose a disjoint collection of $\eta$-balls in 
$ S= \rm{CH}(E) \cap W$ and note that there are $O(n)$ such 
balls since $S$ has hyperbolic area  $O(n)$ (it is a convex
hyperbolic polygon with $O(n)$ sides, hence has a triangulation into
$O(n)$ hyperbolic triangles, and every hyperbolic triangle has 
hyperbolic area $\leq \pi$). 

Every pentagon used in the proof of Lemma \ref{build intervals} 
is within a bounded hyperbolic distance $D$  
of one of the chosen $\eta$-balls, so only $O(1)$ pentagons can 
be associated to any one ball (they are disjoint, have a fixed 
area and all lie in a ball of fixed radius, hence fixed area).
 Thus the total number of 
pentagons used is $O(n)$. Every Carleson quadrilateral   shares an edge
with a pentagon and  every Carleson triangle shares an edge with 
a quadrilateral, so the number of these regions is also 
$O(n)$.
\end{proof}


\section{Meshing the pentagons} \label{pent mesh sec}

In the last section we subdivided the unit disk into hyperbolic 
pentagons, quadrilaterals and triangles. Next we 
want to mesh each of these regions  into  quadrilaterals 
with angles in the interval $[60^\circ, 120^\circ]$.
 Moreover, along 
common edges of the  regions, the vertices  of the meshes
must match up correctly.  

For each type of  region, we will produce a mesh by quadrilaterals that have 
circular arc boundaries and angles within a given range.  In most cases
the boundary arcs lie on circles with radius comparable to the region, 
and the quadrilaterals will be much smaller, about $1/N$ as large, for 
a large $N$. If we replace the circular arc edges by line segments, 
the angles change by only $O(1/N)$, which  still gives angles in 
the desired range. The only exceptions will be certain parts of the 
mesh of the Carleson triangles, that will require a separate argument 
to show the ``snap-to-a-line'' angles are still  between $60^\circ $ 
and $120^\circ$. 

As before,  $L$ denotes  the sidelength of a hyperbolic 
right pentagon.

\begin{lemma} \label{pentagon mesh} 
For sufficiently large  integers
 $N>0$ the following holds.
Suppose $P$ is a hyperbolic right pentagon. Then  there is 
mesh of $P$ into hyperbolic quadrilaterals with angles between 
$72^\circ$ and  $108^\circ$. The mesh divides each side 
of the pentagon into $N$ segments of length $L/N$. Each quadrilateral 
$Q$ in the mesh has hyperbolic diameter $O(1/N)$ 
and satisfies $\diam(Q) =O( \frac 1 N \cdot \diam(P))$ in 
the Euclidean metric. 
Replacing the edges of $Q$ by line segments changes angles by only 
$O(1/N)$.
\end{lemma}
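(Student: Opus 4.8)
The plan is to build the mesh in two stages: first a \emph{combinatorial} mesh on a Euclidean model of a right pentagon, then a transfer of that model to the actual hyperbolic pentagon $P$ by a bi-Lipschitz conformal-type correction, checking that the angle distortion stays within the budget. Since all hyperbolic right pentagons are isometric (all sides have length $L$, all angles are $90^\circ$), it suffices to produce the mesh once, on a fixed reference pentagon $P_0$, and then move it by the isometry carrying $P_0$ to $P$; isometries preserve angles exactly, so the angle bounds and the ``$O(1/N)$ hyperbolic diameter'' claim transfer verbatim. The Euclidean diameter claim $\diam(Q) = O(\tfrac1N \diam(P))$ then follows because on a fixed hyperbolic pentagon the Euclidean and hyperbolic metrics are comparable on the interior, and near the boundary $\circle$ the whole pentagon $P$ has Euclidean diameter comparable to $\dist(P,\circle)$ (this is the same estimate already used in Lemma~\ref{build intervals}(5)), so a set of hyperbolic diameter $O(1/N)$ inside $P$ has Euclidean diameter $O(\tfrac1N\diam(P))$.

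So the real content is constructing, on the reference pentagon, a mesh into quadrilaterals with all angles in $[72^\circ,108^\circ]$ that cuts each side into $N$ equal subsegments of length $L/N$. First I would lay down, along each of the five edges, a row of $N$ equally spaced boundary vertices, and then fill the interior. The clean way to do this is to start from the standard combinatorial subdivision of a pentagon into quadrilaterals: pick a center point $p$, and for a regular Euclidean pentagon one can subdivide it into a coarse ``pinwheel'' of quadrilaterals, then refine each coarse quadrilateral by an $N\times N$ grid. A topological $N\times N$ grid on a quadrilateral has all interior vertices of degree $4$, boundary-edge vertices of degree $3$, and corner vertices of degree $2$; if one then realizes this grid so that it is close (in $C^1$) to an affine grid, all angles are close to $90^\circ$. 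The subtlety is the five points where coarse quadrilaterals meet around the center $p$: there the vertex degree is $5$, giving angles near $360^\circ/5 = 72^\circ$, which is exactly the endpoint of our allowed interval. Choosing the coarse decomposition so that these five angles are \emph{exactly} $72^\circ$ (possible by symmetry, placing $p$ at the center of a regular pentagon model) and then perturbing to the hyperbolic metric costs only $O(1/N)$, keeping us inside $[72^\circ,108^\circ]$ for $N$ large. I expect this ``central fan'' to be the main obstacle: one must verify that the angle budget is not violated there, and that the refinement near $p$ can be carried out so that the quadrilaterals degenerate in size but not in shape.

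Concretely, the steps I would carry out are: (1) fix a regular Euclidean pentagon $P_0^{\rm euc}$ with a distinguished center $p$; subdivide it by the five segments from $p$ to the edge midpoints into five congruent quadrilaterals, each with angles $90^\circ,90^\circ,90^\circ,72^\circ$ (the $72^\circ$ at $p$); (2) on each of the five coarse quadrilaterals impose an $N\times N$ affine grid, matching up along the shared segments from $p$ (so the global mesh is consistent and each outer side of $P_0^{\rm euc}$ is cut into $N/2$... here I would instead cut each coarse piece so that the \emph{outer} edge gets $N$ subsegments and the total is $N$ per pentagon side — adjusting the coarse decomposition to one quadrilateral per side meeting the two adjacent ones, still a degree-$5$ vertex at $p$); (3) observe all angles in this Euclidean mesh lie in $\{72^\circ\}\cup(90^\circ-O(1/N),90^\circ+O(1/N))$, hence in $[72^\circ,108^\circ]$; (4) transfer the mesh from $P_0^{\rm euc}$ to the hyperbolic right pentagon $P_0^{\rm hyp}$ by a diffeomorphism that is smooth up to the boundary and matches the equal-spacing requirement on the edges (e.g. first map edges to edges by arclength, then extend), noting that such a map is $C^1$-close to an affine map on scale $1/N$, so angles change by $O(1/N)$; (5) carry the mesh to an arbitrary $P$ by the isometry $P_0^{\rm hyp}\to P$; (6) record the diameter bounds as above; (7) finally note that replacing each hyperbolic-geodesic (circular-arc) edge of a quadrilateral $Q$ by the Euclidean chord with the same endpoints changes each edge direction by $O(\diam Q)=O(1/N)$ in the relevant normalized coordinates, hence changes each angle by $O(1/N)$, which is the last sentence of the lemma. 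Step~(4) — producing a boundary-respecting diffeomorphism with controlled $C^1$ norm that simultaneously honors the ``$N$ equal segments of length $L/N$'' condition — is where I would be most careful, but it is routine once the Euclidean model in Step~(1)–(2) is pinned down.
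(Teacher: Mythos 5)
Your overall strategy (a Euclidean model mesh plus a small perturbation onto the hyperbolic pentagon) diverges from the paper's, which works entirely inside hyperbolic geometry, and the divergence is exactly where the gaps are. First, your step (1) is impossible as stated: a Euclidean quadrilateral cannot have angles $90^\circ,90^\circ,90^\circ,72^\circ$, since Euclidean quadrilateral angles sum to $360^\circ$. The coarse piece obtained by joining the center of a regular Euclidean pentagon to two adjacent edge midpoints actually has angles $72^\circ,90^\circ,90^\circ,108^\circ$, with the $108^\circ$ at the pentagon vertex. This is not a mere typo, because it invalidates step (3): an $N\times N$ bilinear grid on such a piece does \emph{not} have all angles within $O(1/N)$ of $90^\circ$ away from the central fan; the cell angles interpolate the corner angles and sweep through essentially all of $[72^\circ,108^\circ]$ (cells near the pentagon vertex have an angle near $108^\circ$). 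Consequently the $O(1/N)$ perturbation in step (4) only yields angles in $[72^\circ-O(1/N),\,108^\circ+O(1/N)]$, not in $[72^\circ,108^\circ]$ as the lemma asserts --- and the exact interval matters, since the $72$--$108$ window is precisely the slack reserved for the later snap-to-segments and conformal-map errors that bring the final mesh into $[60^\circ,120^\circ]$. Step (4) itself (a boundary-respecting diffeomorphism that is $C^1$-close to affine at scale $1/N$ and honors the equal hyperbolic spacing) is also asserted rather than constructed.

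The paper avoids all of this by building the mesh intrinsically. It cuts the hyperbolic pentagon into five Lambert quadrilaterals (three right angles and one $72^\circ$ angle --- such quadrilaterals exist in hyperbolic but not Euclidean geometry, which is the very fact your Euclidean model stumbles on), and meshes each one by the two families of geodesics perpendicular to the two pentagon half-sides, dropped from equally spaced points. The crossing angle $\phi$ of two such perpendiculars satisfies the exact identity $\cos\phi=\sinh a\,\sinh b$, which shows $\phi$ decreases monotonically from $90^\circ$ near the pentagon vertex to exactly $72^\circ$ at the center; hence every mesh angle lies in $[72^\circ,108^\circ]$ with no error term and for every $N$, and the only $O(1/N)$ loss occurs at the final chord-replacement step, exactly as the lemma states. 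To salvage your route you would need either a genuine two-sided angle bound for the transferred grid (not an $O(1/N)$ perturbation of a configuration already sitting at both endpoints of the allowed interval), or to weaken the lemma to a slightly larger interval and recheck the downstream error budget; the intrinsic hyperbolic computation is both shorter and exact.
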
 

\begin{proof} 
Connect the center $c$ of the pentagon by hyperbolic
geodesics  to the (hyperbolic) center of each 
edge.  This divides the pentagon into five 
quadrilaterals each of which has 3 right angles 
and an angle of $ 72^\circ $ at the 
center. Consider one of these quadrilaterals $Q$ 
with sides $S_1, S_2, S_3, S_4$ where 
 $S_1$, $S_2$  each connects  the center of the pentagon 
to midpoints of adjacent sides.
Then $ S_3, S_4$  are each  half of  a side of the pentagon 
adjacent at a vertex $v$, with $S_3$ opposite $S_1$ and $S_4$ 
opposite $S_2$ (Figure \ref{penta-defn}).

\begin{figure}[htbp]
\centerline{ 
\includegraphics[height=1.75in]{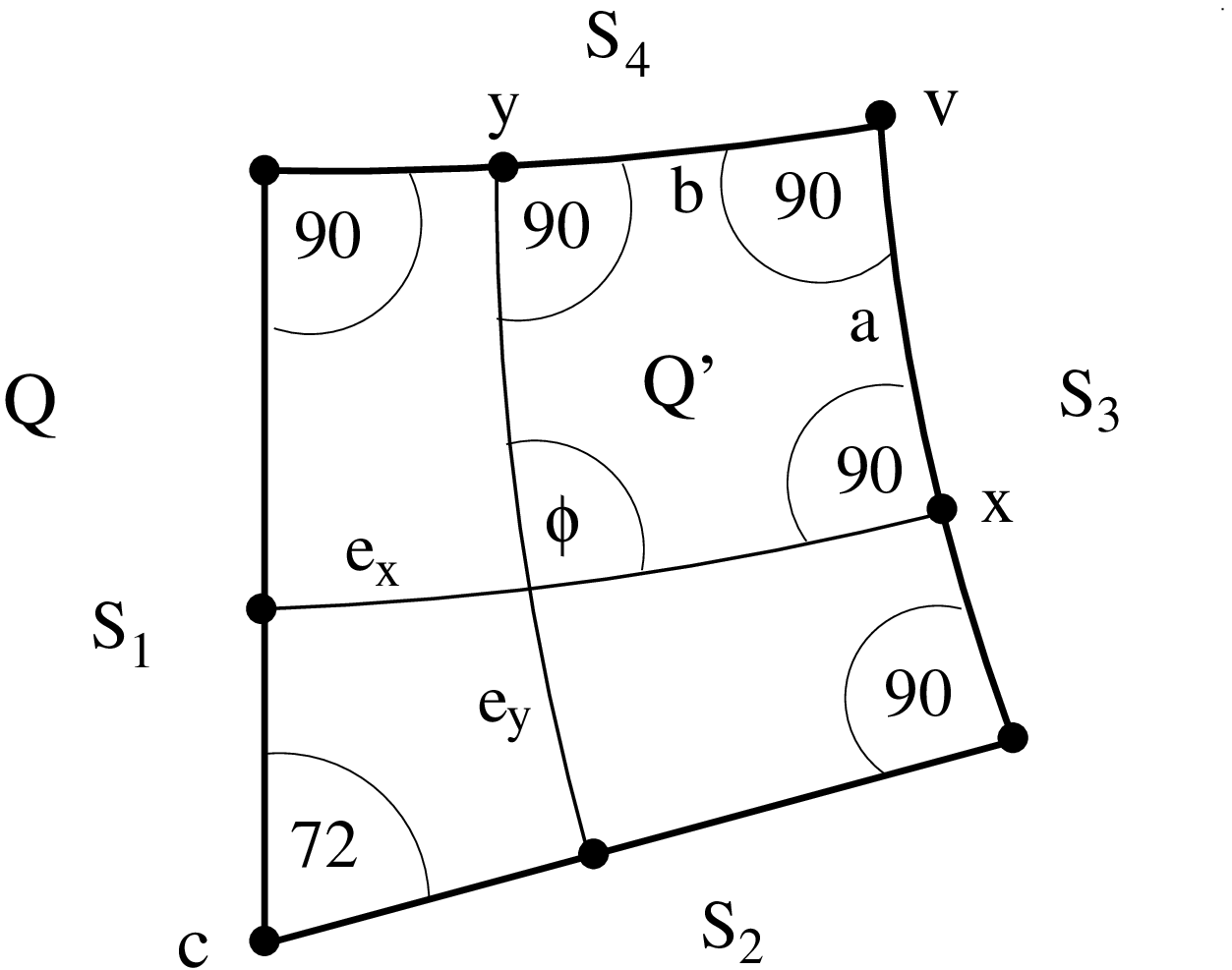}
 }
\caption{ \label{penta-defn} 
Definitions used in the mesh of a hyperbolic right pentagon. 
Each pentagon is divided into five quadrilaterals as shown.
}
\end{figure}
 
Place a point $x$  along $S_3$ and  let $e_x$ 
be the geodesic segment from $S_1$ to $S_3$ that 
meets $S_3$ at $x$ and makes a $90^\circ $ angle
with $S_3$.  Similarly define a segment 
$f_y$ that joints $y \in S_4$ to $S_2$.  
We claim that the segments cross at an angle 
(labeled $\phi$ in Figure \ref{penta-defn}) 
that is between $72^\circ $ and $90^\circ$.  The 
two segments $e_x$, $f_y$ divide $Q$ into 
four quadrilaterals, one of which contains the 
vertex $v$.  This subquadrilateral, $Q'$ is a 
Lambert quadrilateral, i.e., bounded by four 
hyperbolic geodesic segments and having 3 
right angles. The one non-right angle, $\phi$, 
is a function of the hyperbolic lengths of the
two opposite  sides (in this case a function of 
$a=\rho(x,v)$ and $b=\rho(y,v)$), 
$$ \cos(\phi) = \sinh a \sinh b .$$
See Theorem 7.17.1 of \cite{Beardon}.
Clearly, $\phi$ decreases as either $a$ or $b$ 
increase. For $a$ and $b$ close to zero we have 
$\phi \approx 90^\circ$ and when $a,b$ take their maximum value 
($a=b $ is the hyperbolic length of $S_3$) we 
get $Q' = Q$ and $\phi= 72^\circ$.
Thus $\phi $ takes values between $72^\circ$ and 
$90^\circ$, as claimed.

To define a mesh of $Q$, take $N$ equally spaced  points 
$\{ x_k\} \subset S_3$ and $\{ y_k \} \subset S_4$ and 
take the union of segments $e_{x_k}, f_{y_k}$.  This divides
$Q$ into quadrilaterals with geodesic boundaries
and angles between $72^\circ$ and $108^\circ$. 
  Doing this for each of the five quadrilaterals that 
make up the hyperbolic right pentagon gives a mesh of the 
pentagon. The remaining claims are easy to verify.
 See Figure \ref{mesh-pent}.

\begin{figure}[htbp]
\centerline{ 
\includegraphics[height=1.5in]{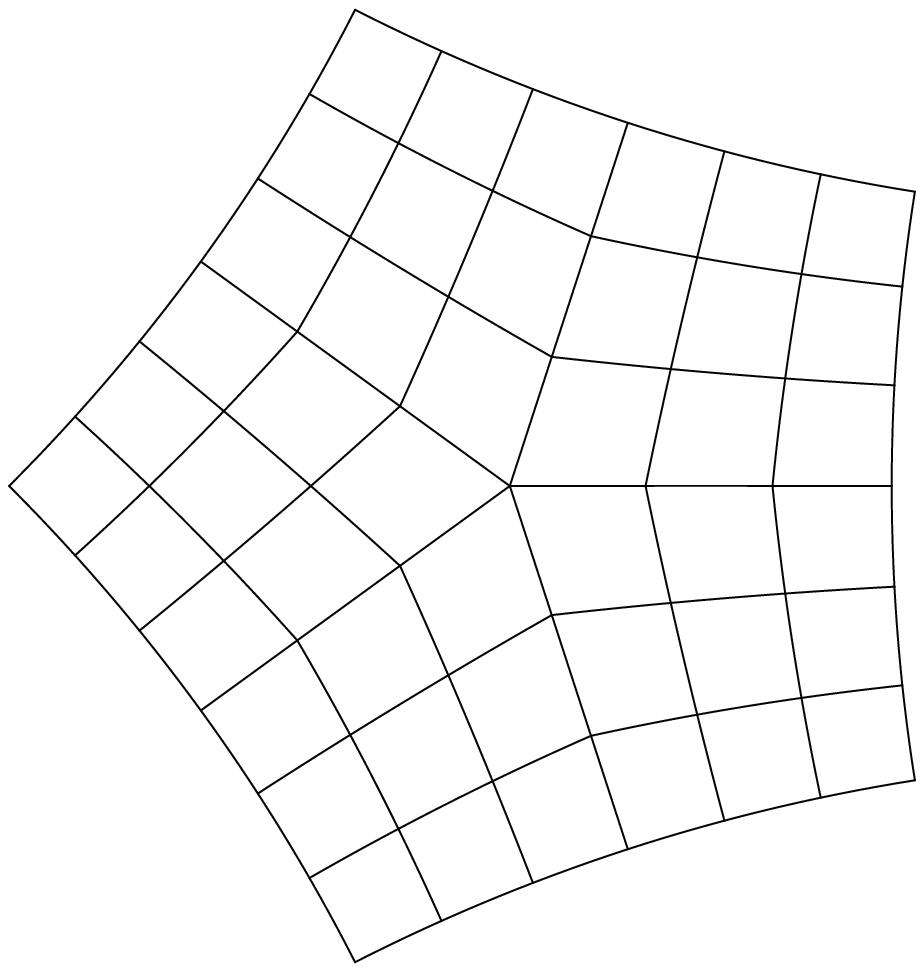}
$\hphantom{xxxx}$
\includegraphics[height=1.5in]{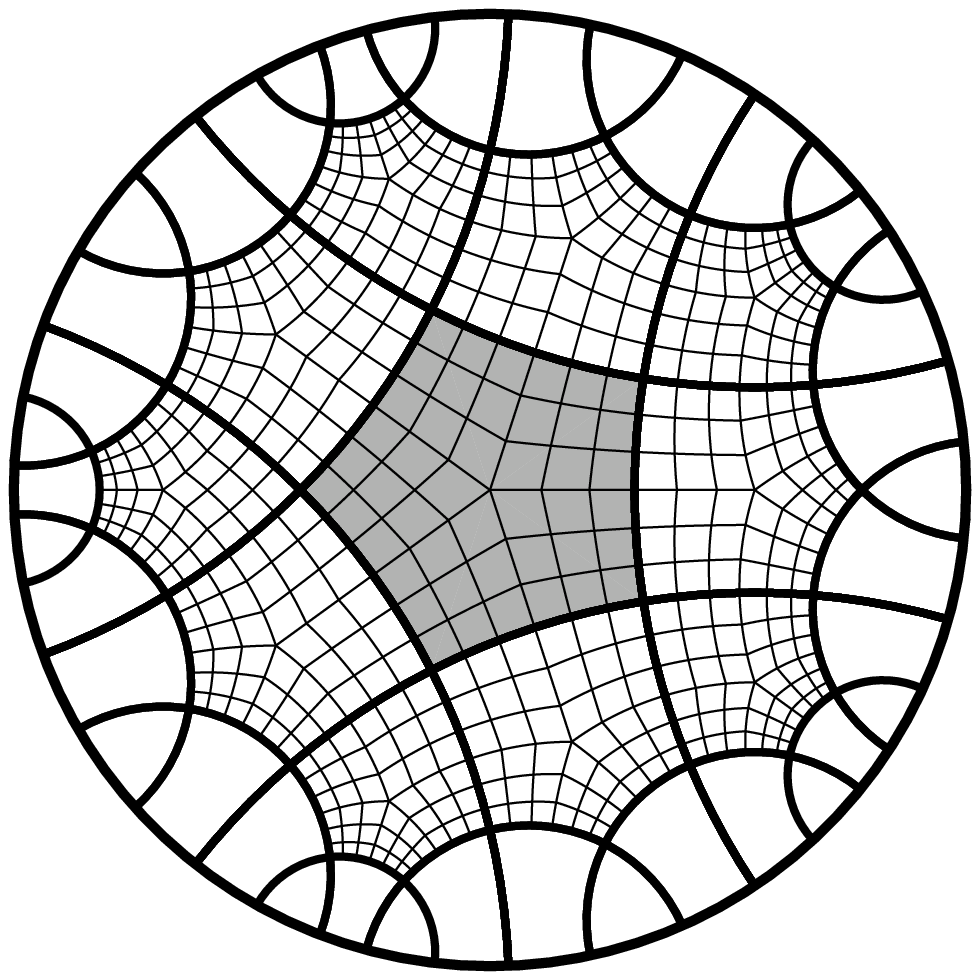}
 }
\caption{ \label{mesh-pent}
A  quadrilateral mesh of a single  pentagon and the mesh 
on 11 adjoining pentagons. 
Because vertices are evenly spaced in the hyperbolic metric,
meshing of adjacent pentagons match up.
}
\end{figure}
\end{proof}  


\section{Meshing the  quadrilaterals } \label{meshing quads}

\begin{lemma} \label{quad mesh} 
For sufficiently large integers $N$ the following holds. 
Suppose $\{ d_1 < d_2< \dots < d_M\}$ satisfy $|d_k-d_{k+1}|
\leq 1/N$ for $k=1, \dots, M-1$, $d_1< 1/N$, $ d_M > N$  and suppose 
$R$ is a  right Carleson quadrilateral. Then  there is 
mesh of $R$ into hyperbolic quadrilaterals with angles between 
$90^\circ -O(\frac 1N)$ and  $90^\circ+O(\frac 1 N)$. The mesh divides  
the unique finite (hyperbolic) length side  of $R$ into $N$ 
segments of length $L/N$.  Each infinite length side of $R$  has vertices 
exactly at the points that are hyperbolic distance $d_k$, 
$k=1, \dots, m$ from the finite length side. 
If the base of $R$ has length $\leq \pi$, then 
each element $Q$ of the mesh 
satisfies $\diam(Q) = O(\frac 1 N  \cdot \diam(R))$  in the Euclidean 
metric.
Replacing the edges of $Q$ by lines segments changes angles by at most 
$O(1/N)$.
\end{lemma}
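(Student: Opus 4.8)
The plan is to realize $R$ in a conformal model in which the mesh is essentially a rectangular grid, and to read off all angle and size bounds from the conformality of that model.

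\textbf{Step 1 (coordinates).} A M\"obius transformation carries $R$ to $R_0=\{z=x+iy : x>0,\ y>0,\ 1<|z|<e^L\}\subset\uhp$, whose finite edge is the imaginary segment from $i$ to $e^Li$ (hyperbolic length $L$), whose two infinite sides are the geodesic rays $\{|z|=1,\ x>0\}$ and $\{|z|=e^L,\ x>0\}$, and whose base arc is $(1,e^L)\subset\reals$. Writing $z=e^w$ with $w=u+i\theta$, this becomes the Euclidean rectangle $[0,L]\times[0,\pi/2]$ carrying the hyperbolic metric $|dw|/\sin\theta$ (conformal to the Euclidean one); the finite edge is $\{\theta=\pi/2\}$, the infinite sides are $\{u=0\}$ and $\{u=L\}$, the base arc is $\{\theta=0\}$, and the hyperbolic distance from the finite edge to the level $\{\theta=c\}$ is $\log\cot(c/2)$. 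Hence a vertex at hyperbolic height $d_k$ on an infinite side sits at $\theta=\theta_k$ with $\log\cot(\theta_k/2)=d_k$; one has $\pi/2>\theta_1>\dots>\theta_M$, $\theta_1=\pi/2-O(1/N)$, $\theta_M<2e^{-N}$, and $\theta_k-\theta_{k+1}\lesssim \sin\theta_k/N$. The curves $\{u=\text{const}\}$ are hyperbolic geodesics (Euclidean circles about $0$), while each $\{\theta=\text{const}\}$ is a Euclidean ray from $0$ and a hypercycle of geodesic curvature $\tanh(d)<1$, the imaginary axis being the only geodesic among them.

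\textbf{Step 2 (the mesh).} Cut the rectangle by the horizontal lines $\{\theta=\theta_k\}$ and by vertical segments $\{u=\text{const}\}$: near the finite edge use exactly $N$ columns $u\in[kL/N,(k+1)L/N]$, so the finite edge splits into $N$ arcs of hyperbolic length $L/N$ and the infinite sides receive exactly the vertices at the $\theta_k$. Since the hyperbolic width of a column of Euclidean width $\delta u$ at level $\theta$ is $\delta u/\sin\theta$, the column count must be allowed to grow -- roughly doubling each time $\sin\theta$ halves, i.e. roughly every unit of hyperbolic depth -- so that every cell keeps hyperbolic diameter $O(1/N)$; this is carried out by quadrilateral transition templates inserted between two consecutive levels, each doubling being smeared across $\Theta(N)$ rows so that the perturbation of the grid contributed by any single row, hence the deviation of any angle from $90^\circ$, is $O(1/N)$; the $\asymp N/\ln 2$ doublings consume hyperbolic depth $O(N)<d_M$, which is where $d_M>N$ is used (and $M\gtrsim N^2$ leaves ample room). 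Away from the transition rows each cell is a Euclidean rectangle in the $w$-plane, so, $w\mapsto z=e^w$ being conformal, the cells meet at exactly $90^\circ$; in the transition rows the angles are $90^\circ\pm O(1/N)$ by construction.

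\textbf{Step 3 (genuine hyperbolic quadrilaterals; Euclidean bounds).} Each cell is bounded by two geodesic arcs ($\{u=\text{const}\}$) and two hypercycle arcs ($\{\theta=\text{const}\}$), the latter of hyperbolic length $O(1/N)$ by the column-width choice; replacing each hypercycle arc by the geodesic through its endpoints moves the incident angles by at most $(\text{geodesic curvature})\times(\text{length})=O(1/N)$, producing a mesh by genuine hyperbolic quadrilaterals with angles $90^\circ\pm O(1/N)$. For the metric estimates, a cell at level $\theta_k$ lies at Euclidean height $y\asymp e^{u}\sin\theta_k$ above $\reals$ and, having hyperbolic diameter $O(1/N)$, has Euclidean diameter $\asymp y/N$; pulling this back to $R\subset\disk$ by the M\"obius map and applying the Koebe distortion theorem -- here the hypothesis that the base arc has length $\le\pi$ is used, to keep $R$ of bounded shape -- gives $\diam(Q)=O(\tfrac1N\diam(R))$. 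Finally every cell sits in a Whitney box, of Euclidean size comparable to $1/N$ times its distance to the base arc, so every circular arc bounding it turns by only $O(1/N)$ across the cell; hence replacing the edges of $Q$ by line segments changes each angle by at most $O(1/N)$, and the snapped quadrilaterals still have all angles in $[90^\circ-O(1/N),\,90^\circ+O(1/N)]$.

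\textbf{Main obstacle.} The delicate point is Step 2: the column count must increase toward the base arc (where the hyperbolic width of $R$ tends to infinity), yet a single refinement step cannot be absorbed locally by quadrilaterals alone -- a parity/Euler obstruction forces either a pentagon or a propagated chain of subdivisions -- so one must organise the refinement (absorbing chains along the base arc, smearing each doubling over $\Theta(N)$ rows) in a way that keeps every angle within $O(1/N)$ of $90^\circ$ while the cumulative hyperbolic depth used by all doublings stays below $d_M$. The behaviour of the mesh as $\theta\to0$, accumulating toward the two ideal corners, is exactly what has to be arranged so that all of these constraints hold simultaneously; everything else is routine hyperbolic trigonometry together with the conformality of $z=e^w$ and Koebe distortion.
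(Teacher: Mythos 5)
Your Step~1 is exactly the right model and is essentially the paper's route: the two families $\{u=\mathrm{const}\}$ and $\{\theta=\mathrm{const}\}$ are the orthogonal foliations of a right circular quadrilateral (the paper's Lemma~\ref{perp foliation}), and the observation that $\{\theta=\mathrm{const}\}$ meets both infinite sides at equal hyperbolic distance from the top edge is the one thing that needs checking. But Step~2 then introduces a requirement the lemma never makes. The lemma does not ask for cells of hyperbolic diameter $O(1/N)$; it asks for angles within $O(1/N)$ of $90^\circ$, \emph{Euclidean} diameter $O(\frac 1N\diam(R))$, and the prescribed vertices on the boundary. The paper therefore keeps exactly $N$ columns all the way down: every cell is bounded by two geodesics $|z|=e^{kL/N}$ and two hypercycles $\arg z=\theta_k$, the foliations are orthogonal everywhere, so every angle of the curved mesh is exactly $90^\circ$ with no transition rows at all. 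A cell at radius $r$ and level $\theta_k$ has Euclidean extent $\approx rL/N$ radially and $r(\theta_k-\theta_{k+1})=O(r\sin\theta_k/N)$ angularly, hence Euclidean diameter $O(\frac 1N \diam (R))$ even though its hyperbolic diameter grows like $e^{d}/N$; and since its curved edges lie on circles of radius $\gtrsim r$ while the cell is only $O(r/N)$ across, snapping edges to chords turns each angle by $O(1/N)$. Nothing in the statement needs hyperbolic smallness of the cells.

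The genuine gap is that your entire column-doubling scheme --- the ``quadrilateral transition templates,'' the smearing of each doubling over $\Theta(N)$ rows, the resolution of the parity/Euler obstruction by ``absorbing chains along the base arc'' --- is precisely what you flag as the main obstacle, and it is asserted rather than constructed. It also does not add up as described: keeping hyperbolic diameters $O(1/N)$ forces $\asymp d/\ln 2$ doublings by depth $d$, so the relevant count is $\asymp d_M/\ln 2$ (and, since $R$ extends to its ideal base arc, really infinitely many), not $\asymp N/\ln 2$; consecutive doublings must be completed within depth $\ln 2$ of each other, while $\Theta(N)$ of the prescribed rows may span depth as large as $\Theta(1)$, so there need not be room to smear them; and the resulting cell count grows like $e^{d_M}$, which wrecks the piece-counting that Lemma~\ref{thick collections} and the final $O(n)$ bound depend on. If you stop doubling at the last row $\theta_M$, the bottom cells violate the very hyperbolic bound that motivated the scheme, so Step~3's premise (``hypercycle arcs of hyperbolic length $O(1/N)$'') fails there. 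The repair is to delete the refinement entirely, keep the fixed $N$-column grid, and prove the diameter and snap-to-segment claims in the Euclidean metric as above.
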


We need a simple preliminary result.

\begin{lemma} \label{perp foliation}
Suppose $Q$ is a right circular quadrilateral, i.e., is bounded by 
four circular arcs and all four interior angles are $90^\circ$. Then $Q$ 
has two orthogonal foliations by circular arcs. Every leaf of 
both foliations is perpendicular to the boundary at both of its 
endpoints. 
\end{lemma}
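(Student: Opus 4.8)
The plan is to straighten $Q$ by a M\"obius transformation into a standard model — a circular annular sector or a Euclidean rectangle — where the two foliations are visible by inspection, and then transport them back. Write the four bounding arcs in cyclic order as $A_1,A_2,A_3,A_4$, with $A_j\subset C_j$ where $C_j$ is a circle or a line, and let $v_j=A_{j-1}\cap A_j$ (indices mod $4$) be the four vertices. The hypothesis that all interior angles equal $90^\circ$ says precisely that $C_j\perp C_{j+1}$ for each $j$; nothing is assumed about the "opposite" pairs $C_1,C_3$ and $C_2,C_4$.

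\emph{Normalization.} I would choose a M\"obius transformation $T$ adapted to the opposite pair $C_1,C_3$: if $C_1\cap C_3=\emptyset$, take $T$ carrying $C_1,C_3$ to two concentric circles about $0$; if $C_1,C_3$ meet in two points, take $T$ sending those points to $0$ and $\infty$, so that $C_1,C_3$ become two distinct lines through $0$; if they are tangent, take $T$ sending the point of tangency to $\infty$, so that $C_1,C_3$ become two parallel lines; and in the coincident case $C_1=C_3$ run the same argument using the pair $C_2,C_4$ instead (these cannot both coincide, or $Q$ would have fewer than four distinct vertices). In each case the orthogonality $C_2,C_4\perp C_1,C_3$ forces the other two sides into the orthogonal family, via three elementary facts: a circle orthogonal to two concentric circles about $0$ must be a line through $0$; a circle orthogonal to two distinct lines through $0$ must be a circle about $0$; and a circle or line orthogonal to two distinct parallel lines must be a line perpendicular to them. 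Hence $T(Q)$ is bounded either by two concentric circular arcs and two radial segments — i.e.\ $T(Q)=\{r_1\le|z|\le r_2,\ \theta_1\le\arg z\le\theta_2\}$ is an annular sector — or by two pairs of mutually perpendicular parallel segments — i.e.\ $T(Q)$ is a rectangle. (The region cut out is this evident one, not a complementary piece, because $\partial T(Q)=T(A_1\cup\cdots\cup A_4)$ is exactly the union of these four arcs.)

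\emph{Foliate the model and pull back.} The annular sector is foliated by the concentric arcs $\{|z|=r\}$, $r_1\le r\le r_2$, and by the radial segments $\{\arg z=\theta\}$, $\theta_1\le\theta\le\theta_2$; the rectangle is foliated by its horizontal and its vertical segments. In either model the two families are orthogonal at every interior point, and each leaf meets the boundary perpendicularly at both of its endpoints (a radius is perpendicular to a concentric circle; a side of a rectangle is perpendicular to the two sides it meets). Applying $T^{-1}$ then finishes the proof: $T^{-1}$ is a conformal homeomorphism carrying circles and lines to circles and lines, so the images are two orthogonal foliations of $Q$ by circular arcs, each leaf perpendicular to $\partial Q$ at both of its endpoints.

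The only delicate point is the bookkeeping in the normalization step: checking, case by case on the mutual position of $C_1$ and $C_3$ (including the benign coincidence $C_1=C_3$, handled via the circle pencil generated by $C_2,C_4$), that orthogonality really does force the companion circles into the orthogonal family, and confirming that $T(Q)$ is the sector or the rectangle rather than a complementary domain. There is no hard analysis — once the model is in hand the foliations are read off directly, and the M\"obius invariance of circles and of angles does the rest.
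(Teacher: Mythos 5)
Your proof is correct and follows essentially the same route as the paper: a case analysis on the mutual position of the circles containing a pair of opposite sides, Möbius normalization to concentric circles, lines through the origin, or parallel lines, and reading off the radial/concentric (or horizontal/vertical) foliations in the model. The only cosmetic difference is the coincident case $C_1=C_3$, which the paper normalizes directly to the real line while you switch to the pair $C_2,C_4$; both work.
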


\begin{proof}
To see this, take two opposite
sides. Each lies on a circle and  these circles either intersect
in 0, 1 or 2 points or are the same circle.
In the first case
we can conjugate by a M{\"o}bius transformation so both disks are
centered at $0$. Then the two other sides must map to radial segments
and the foliations are  as claimed.
 If the circles
intersect in two points, we can assume these points are $0$ and $\infty$
so  the circles are both  lines passing
through $0$ and again the foliations are radial rays and circles centered
at $0$. If the opposite sides  belong to the same circle,
we can  conjugate it to be the real line, with the two sides being arcs
symmetric with respect to the origin. Then the other two sides
must be circular arcs centered at $0$ and the two foliations are as before.
The last, and exceptional, case is
 if the two circles intersect in one point. Then we  can conjugate this
point to infinity and the intersecting sides to
two parallel lines. The other two sides must map to perpendicular
segments and the region is  foliated by perpendicular straight lines.
See Figure \ref{circ-quad}.
\end{proof}

\begin{figure}[htbp]
\centerline{
\includegraphics[height=1.25in]{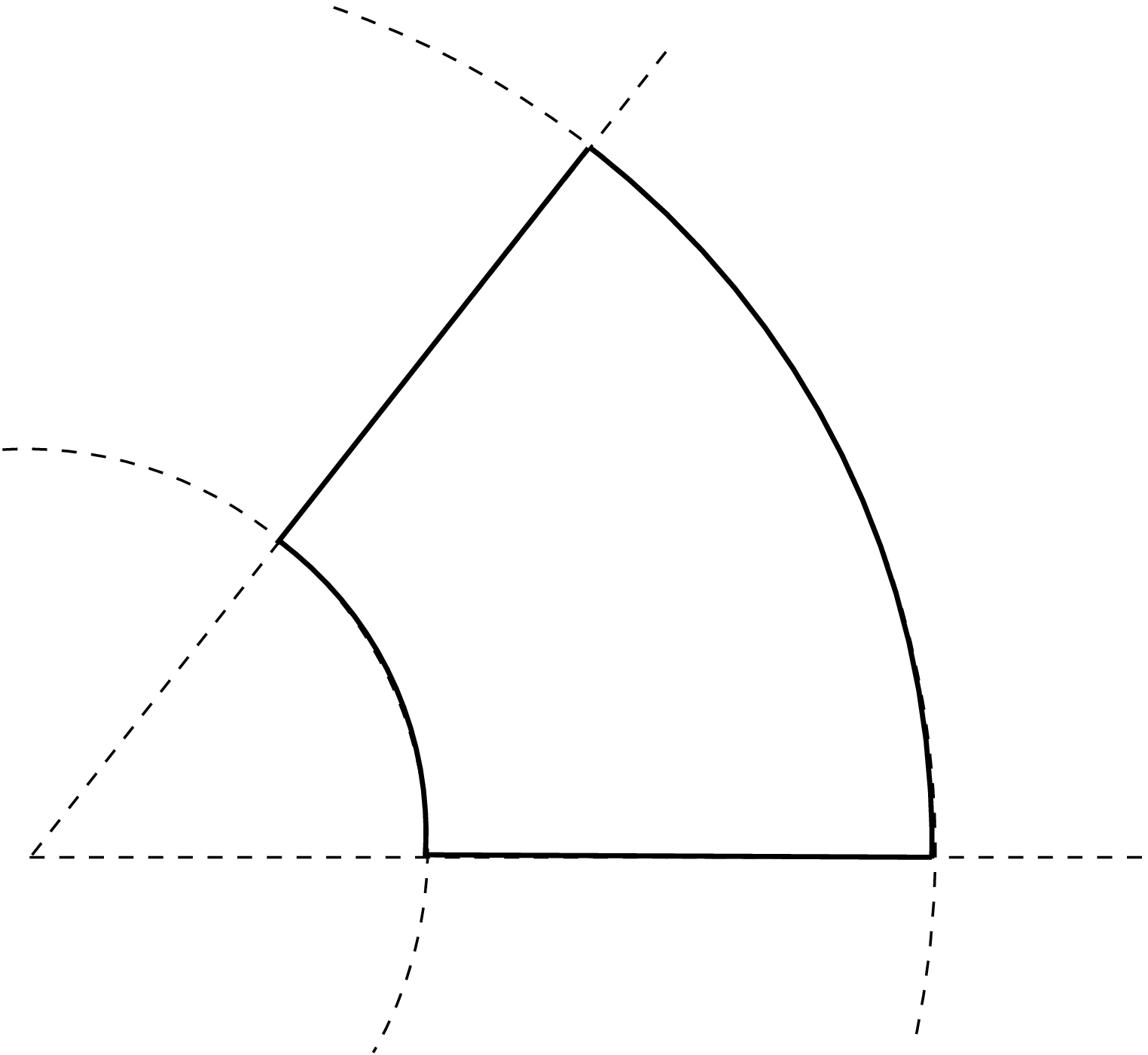}
$\hphantom{x}$
\includegraphics[height=1.25in]{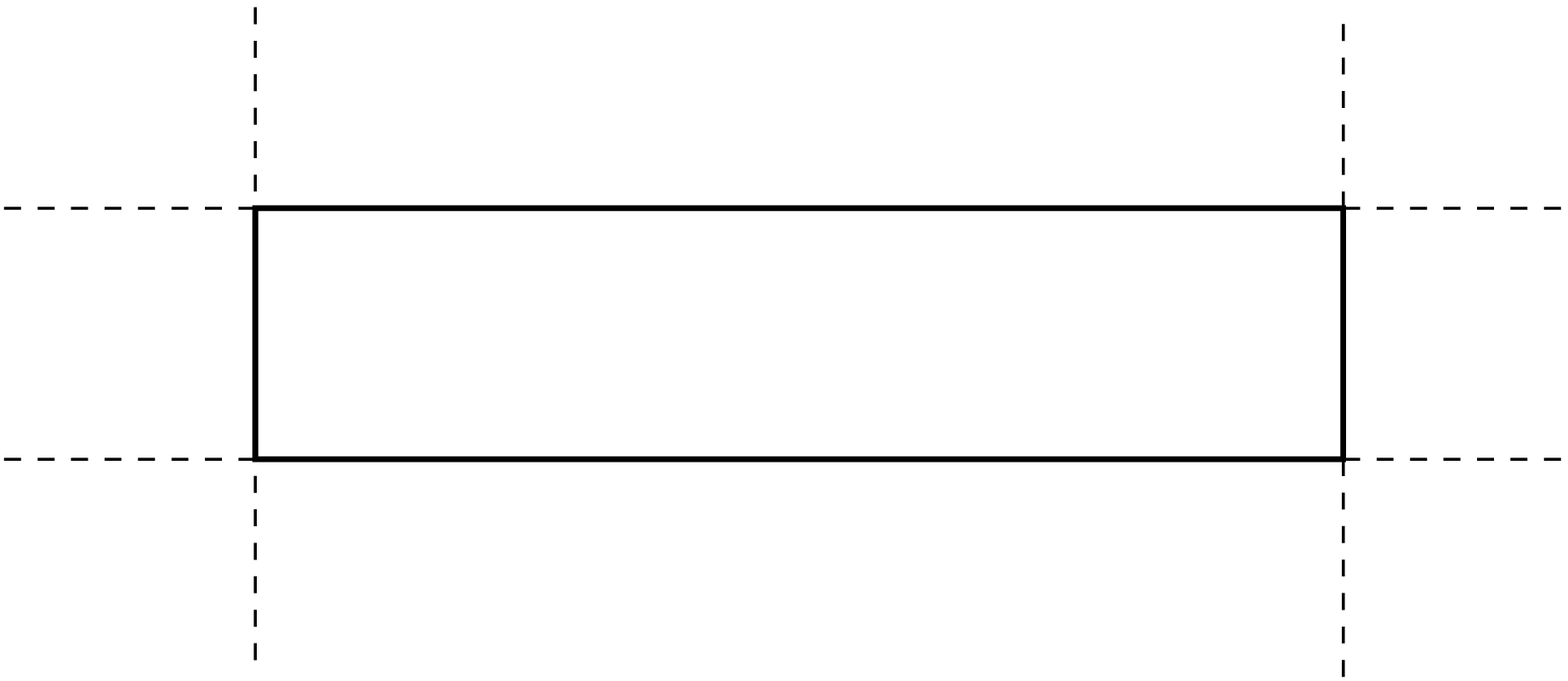}
}
\caption{ \label{circ-quad}
Any right circular quadrilateral is M{\"o}bius equivalent to
one of these cases and hence has an orthogonal foliations by circular
arcs.
}
\end{figure}

\begin{proof} [Proof of Lemma \ref{quad mesh}]
The two sides of $R$ that lie in $\disk$ but have infinite 
hyperbolic length are geodesic rays that are 
both perpendicular to the  geodesic containing the 
top edge of $R$. 
 Hence they are subarcs of non-intersecting circles
 (to see this, isometrically 
map $\disk \to \uhp$ so the top edge  maps to a vertical 
segment and the geodesic rays   map to arcs of concentric circles). 
The foliations provided by the previous lemma consist of 
(1) hyperbolic geodesics that are 
perpendicular to the top edge of $R$ (the unique finite 
length side) and  (2) subarcs of  circles that 
all pass through $a,b$ (the endpoints of the hyperbolic 
geodesic that contains the top edge of $R$).  We call 
these the vertical and horizontal foliations respectively.

To prove the lemma, we simply subdivide the edges of $R$ 
as described and  take the foliation leaves with these 
endpoints.  The only point that needs to be checked is
that points on the two infinite length sides of $R$ 
that are the same hyperbolic distance from the top 
edge lie on the same horizontal foliation leaf. 
However, any two  horizontal leaves are  equidistant from 
each other in the hyperbolic metric 
(to see this, map the vertices $a,b$ to $0,\infty$ by an 
isometry $\disk \to \uhp$ and these leaves become 
rays, and the claim is obvious since dilation is 
an isometry on $\uhp$).
 Since the top edge is a horizontal leaf, we are done. 
 See Figure \ref{fill-quad}.

\begin{figure}[htbp]
\centerline{ 
	\includegraphics[height=1.75in]{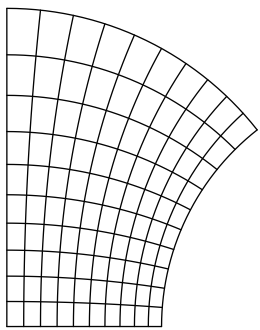}
	}
\caption{ \label{fill-quad} 
A quadrilateral mesh of a Carleson quadrilateral.  ``Horizontal'' edges
lie on circles that pass through the same two points on the boundary (the 
endpoints of the geodesic contain the top edge).
``Vertical'' edges are hyperbolic geodesics perpendicular to the 
top edge.
}
\end{figure}

\end{proof} 

\section{Meshing   the triangles } \label{triangles}

Unlike our meshes of the Carleson  quadrilaterals 
and right pentagons, our mesh of the Carleson 
triangles will use the full interval of angles 
$[60^\circ, 120^\circ]$. 
This is easy to do if we just want to mesh by 
quadrilaterals with circular arc sides. However, 
we will want to conformally  map our mesh in $\disk$ to $\Omega$
and then replace the curved  edges  in the image 
by straight line segments.  This  can change the angles 
slightly, so we would end up with angles in $[60^\circ - \epsilon, 
120^\circ + \epsilon]$ (where $\epsilon$  depends on 
the ratio
between the diameters of our mesh elements and the 
diameter of $T$).
To get the sharp result,   we will have to 
 be careful how we use angles near 
$60^\circ$ and $120^\circ$. To simplify matters, it 
will be enough to simply consider one special Carleson 
triangle $T$ in the upper half-plane model with vertices 
at $-1,1, i/(\sqrt{2}-1)$. The mesh for any other 
triangle will be obtained as a M{\"o}bius image of the 
mesh we construct on this triangle.

The triangle $T$ has one vertex in $\uhp$, and we 
refer to this as the ``top point''. Adjacent to the 
top point are two sides that we call the ``left'' and 
``right'' sides. 
Inside  $T$  we will construct 
an ``inner triangle'' $T_i \subset T$.
The vertices of $T_i$ form an ordinary 
equilateral Euclidean triangle, but the edges of $T_i$
itself are 
circular arcs meeting at three interior angles of $90^\circ$,
and $T_i$ is uniquely determined by this.

\begin{lemma} \label{triangle mesh} 
The following holds for all sufficiently large integers 
$N$.  There is a sequence $d_1 < d_2 < \dots < d_M$ with 
$|d_k - d_{k+1}| \leq 1/N$ for $k=1, \dots, M-1$  and a 
mesh of $T$ into  hyperbolic
quadrilaterals with angles between $60^\circ$ 
and $120^\circ$ so that the vertices along the left and 
right edges of $T$ occur exactly at the points distance 
$d_k$, $k=1, \dots, m$ from the top point. 
Every quadrilateral $Q$ in the mesh satisfies
$\diam(Q) = O(\frac 1 N  \cdot \diam(T)) $.  The triangle 
$T$ contains a symmetric right circular triangle $T_i 
\subset T$ so that  outside $T_i$, only angles 
in $[90^\circ -O(\frac 1N) , 90^\circ + O(\frac 1N)]$ are used.
The triangle $T_i$ may be chosen as small as wish compared
to $T$.
Replacing edges by straight line segments gives angles between
$60^\circ$ and $120^\circ$.
\end{lemma}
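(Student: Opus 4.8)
The plan is to split $T$ into a collar $T\setminus T_i$ that is meshed entirely by nearly rectangular quadrilaterals, together with the small inner triangle $T_i$, which we mesh by hand using the full interval $[60^\circ,120^\circ]$. Before anything else I would record the symmetry: $T$ is invariant under $z\mapsto-\bar z$, so I will build the mesh symmetrically; then the vertices produced on $L$ and on $R$ automatically occur at the same hyperbolic distances $d_k$ from the top point, and it suffices to describe the construction on the right half of $T$. I then take $T_i$ to be the symmetric right circular triangle whose three vertices are the corners of a small Euclidean equilateral triangle centered on the imaginary axis well inside $T$; being ``right circular'' pins $T_i$ down once the vertices are fixed, and shrinking the equilateral triangle makes $T_i$ as small as we like compared to $T$.

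The region $T\setminus T_i$ is a topological annulus bounded by circular arcs ($L$, $R$, the base $B\subset\reals$, and the three arcs of $\partial T_i$) that meet only at $90^\circ$ corners — one checks that $L$ and $R$ are orthogonal to $\reals$ at $\pm1$ and to each other at the top point, and $\partial T_i$ is right by construction. I would cover this annulus by a bounded number of elementary pieces, each a region that Lemma~\ref{perp foliation} (used as in Lemma~\ref{quad mesh}) lets us mesh by quadrilaterals all of whose angles are exactly $90^\circ$, hence $90^\circ+O(1/N)$ after curved edges are replaced by chords: a Carleson-quadrilateral-type piece cutting the top corner off by a short arc — its ``vertical'' leaves, parametrized by hyperbolic distance from the top point, both generate and propagate the sequence $d_k$, which we space with $|d_k-d_{k+1}|\le1/N$; half-plane-type neighborhoods of the ideal vertices $\pm1$, obtained by conjugating $\pm1$ to $\infty$ so that $L$ (resp. $R$) becomes a vertical line; a strip along the interior of $B$; geodesic strips along the remainders of $L$ and $R$; and the right circular quadrilaterals left between all of these and $\partial T_i$, foliated directly by Lemma~\ref{perp foliation}. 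One arranges the finitely many common interfaces to carry matching vertices, and checks — using that every corner in sight is $90^\circ$, so there is no conformal blow-up — that each cell $Q$ has $\diam(Q)=O(\frac1N\diam(T))$.

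It then remains to quadrangulate the single right circular triangle $T_i$, respecting the vertices already placed on $\partial T_i$ by the collar, with all angles in $[60^\circ,120^\circ]$; here it is convenient to use the full angle range rather than to force near-rectangles. I would write down an explicit, threefold-symmetric $O(1)$-piece quadrangulation of $T_i$ using only angles near $60^\circ$, $90^\circ$ and $120^\circ$ (keeping the near-$90^\circ$ cells, together with the curved edges coming from $\partial T_i$ and from the collar interfaces, next to the boundary of $T_i$, and the $60^\circ/120^\circ$ angles deep in its interior), then refine each coarse piece by an $N\times N$ grid so the extreme angles survive only at the finitely many coarse vertices.

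I expect the genuine obstacle to be the final ``snap-to-a-line'' statement. Outside $T_i$ every angle is $90^\circ+O(1/N)$ and replacing curved edges by chords is harmless; but inside $T_i$ an angle may be exactly $60^\circ$ or $120^\circ$, and an arbitrary $O(1/N)$ perturbation would push it outside $[60^\circ,120^\circ]$. The remedy I would pursue is to design the coarse mesh of $T_i$ so that every angle that is close to $60^\circ$ or to $120^\circ$ is subtended by two honest straight line segments, not by genuinely curved arcs; then on those edges the chord replacement is literally the identity, those extreme angles are preserved exactly, and every remaining angle stays a definite distance inside $(60^\circ,120^\circ)$ and so tolerates the $O(1/N)$ change. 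Verifying that such a straight-edged, correctly fitting mesh of $T_i$ exists, tiles it, matches the prescribed boundary vertices, and uses only angles in $[60^\circ,120^\circ]$ is the main work, and the threefold symmetry of $T_i$ is what keeps this finite and explicit.
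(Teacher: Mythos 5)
Your overall architecture --- a collar of right circular quadrilaterals meshed to $90^\circ+O(1/N)$ via Lemma \ref{perp foliation}, surrounding a small symmetric right circular inner triangle $T_i$ where the full range $[60^\circ,120^\circ]$ is spent --- is the same as the paper's. The collar decomposition differs in detail (the paper connects the three vertices and three edge-midpoints of $T_i$ to $\partial T$ by arcs perpendicular to both boundaries, producing exactly nine right circular quadrilaterals; you use half-plane neighborhoods of $\pm 1$, strips along the base and along $L$ and $R$, etc.). Your version is probably repairable, but you would still have to verify that every interface between your pieces meets both adjacent boundary arcs at $90^\circ$, so that each piece genuinely is a right circular quadrilateral to which Lemma \ref{perp foliation} applies, and that vertex counts match across each interface.

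The genuine gap is the interior of $T_i$ together with the snap-to-a-line step, which you explicitly defer as ``the main work.'' Your proposed remedy --- arrange that every angle near $60^\circ$ or $120^\circ$ is formed by two straight segments, so that straightening is the identity there --- is neither carried out nor obviously achievable. In particular, when you ``refine each coarse piece by an $N\times N$ grid,'' the angles of the refined cells are the crossing angles of whatever two transversal foliations you use, and these vary continuously across the piece; they are not controlled by the corner angles of the coarse piece alone, so nothing guarantees the interior cells stay in $[60^\circ,120^\circ]$, much less that the extreme angles occur only at straight-edged vertices. The paper resolves exactly this point with a specific construction plus a monotonicity argument: $T_i$ is foliated by three families of circular arcs, each arc passing through a vertex $v$ of $T_i$ and its reflection $v^*$ across the opposite side (hence perpendicular to that side); the three straight leaves from the center $c$ to the edge-midpoints cut $T_i$ into three quadrilaterals, each carrying two of these foliations. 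The angle bound after straightening is then proved by showing that each foliation arc, viewed as a graph over the line $L_i$ through $c$ and the vertex it emanates from, is monotone inside the triangle; hence every chord of that arc at a mesh vertex $z$ stays on one side of the translate of $L_i$ through $z$, and the four chord angles at $z$ are sandwiched by the $60^\circ/120^\circ$ sectors cut out by $L_1$ and $L_2$. Without this argument, or an equally explicit substitute with verified angles, the sharp conclusion of the lemma is not established.
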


\begin{figure}[htbp]
\centerline{
	\includegraphics[height=2in]{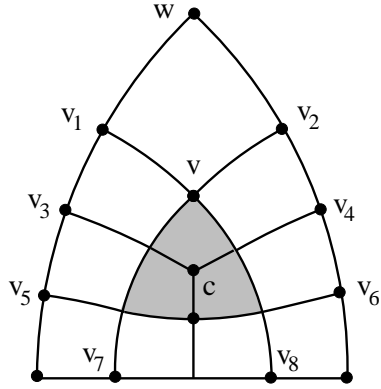}
	}
\caption{ \label{inner-outer} 
The outer triangle $T$ is a Carleson triangle in the upper half-plane 
with top point $w = i/(\sqrt{2}-1)$.
Its interior is  divided into an inner triangle  $T_i$ (shaded) 
with top point $v$ and nine surrounding right circular quadrilaterals.
The points $v_1, v_2$ are  equidistant from $w$
in the hyperbolic metric. The left  and right sides of $T_i$ 
are geodesic segments and extend to hit $\reals$ as points 
$v_7, v_8$. The Carleson triangle with vertices $v, v_7, v_8$ 
is denoted $T_e$.
}
\end{figure}

 The inner triangle $T_i$ it is divided 
into three quadrilaterals by connecting the center of the triangle
 to the midpoint of 
each edge by a straight line.
The vertices of $T_i$ and the midpoints of its edges 
are connected to points on $\partial T$ by 
circular arcs that are perpendicular to both the boundaries
of  $T$ and $T_i$ at the points where they meet.
See Figure \ref{inner-outer}. 
We mesh each of the nine resulting quadrilaterals using 
the foliations given in Lemma \ref{perp foliation}, starting 
at the left and right sides of $T$ at the points given 
by $\{ d_k\}$. We assume that this collection contains the 
distances $\rho(w,v_1), \rho(w, v_3), \rho(w, v_5)$.
 When a leaf ends we continue it in the next 
quadrilateral (assume we know how to do this for the inner
triangle and that the foliation there is symmetric).
 The path continues until it either it hits $c$ (the center of
the inner triangle),  
hits  $[-1,1]$ (the base of $T$) or hits the opposite side of $T$.
 In the 
latter case, symmetry implies the path ends at a point
the same distance from the top point as its starting point.


The choice of inner triangle $T_i$  depends only on the choice of 
its top point. This lies on  the positive imaginary axis,
and $T_i$ is chosen to be symmetric with respect to this 
line. The diameter of $T_i$ is scaled so that the left and right 
edges of $T_i$ are hyperbolic geodesic segments (if the top point 
has height $h$ above $0$, the three vertices of $T_i$ should 
form an equilateral triangle of sidelength $h(\sqrt{3}-1)$; 
see Figure \ref{ScaleInner}).
 Since any point 
between the top point of $T$  and the origin can be used, the inner
triangle can be as small as we wish.

\begin{figure}[htbp]
\centerline{
	\includegraphics[height=1.5in]{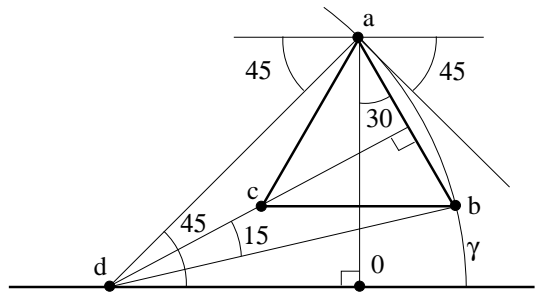}
	}
\caption{ \label{ScaleInner} 
How to scale the inner triangle.
Suppose $a$ is height $1$ above the real axis and $a,b$ lie 
on a geodesic  $\gamma$ centered at $d$ that makes a $45^\circ$ angle 
with the horizontal at $a$. The $\Delta a0d$ is isosceles with base 
angles $45^\circ$, so $|ad| = |bd|=\sqrt{2}$.   The line $da$ is
perpendicular to $\gamma$, so $\Delta dab$ is isosceles.
 Thus $|ab| = 2 |bd| \sin(15^\circ) = \sqrt{3}-1$.
}
\end{figure}


We define three foliations on this triangle $T_i$. For each vertex $v$, 
reflect $v$ through the circular arc on the opposite side 
to define a point $v^*$ and foliate $T_i$ by arcs that lie on 
circles passing through both $v$ and $v^*$.
 Note that each foliation leaf passes through one 
of the vertices of $T_i$ and is perpendicular to the opposite side.
See Figure \ref{3-full-foliations}.
 The center of the triangle  can be connected 
to the midpoint of  each side  by a foliation leaf that is 
a straight line,
 dividing $T$ into three 
quadrilaterals. Restrict each foliation to the two quadrilaterals that are
not adjacent to the vertex it passes through.  This gives two 
foliations on each quadrilateral. See Figure \ref{3-full-foliations}.
Taking a finite set of leaves for each foliation gives a quadrilateral 
mesh of the right circular  triangle.

\begin{figure}[htbp]
\centerline{ 
\includegraphics[height=1.2in]{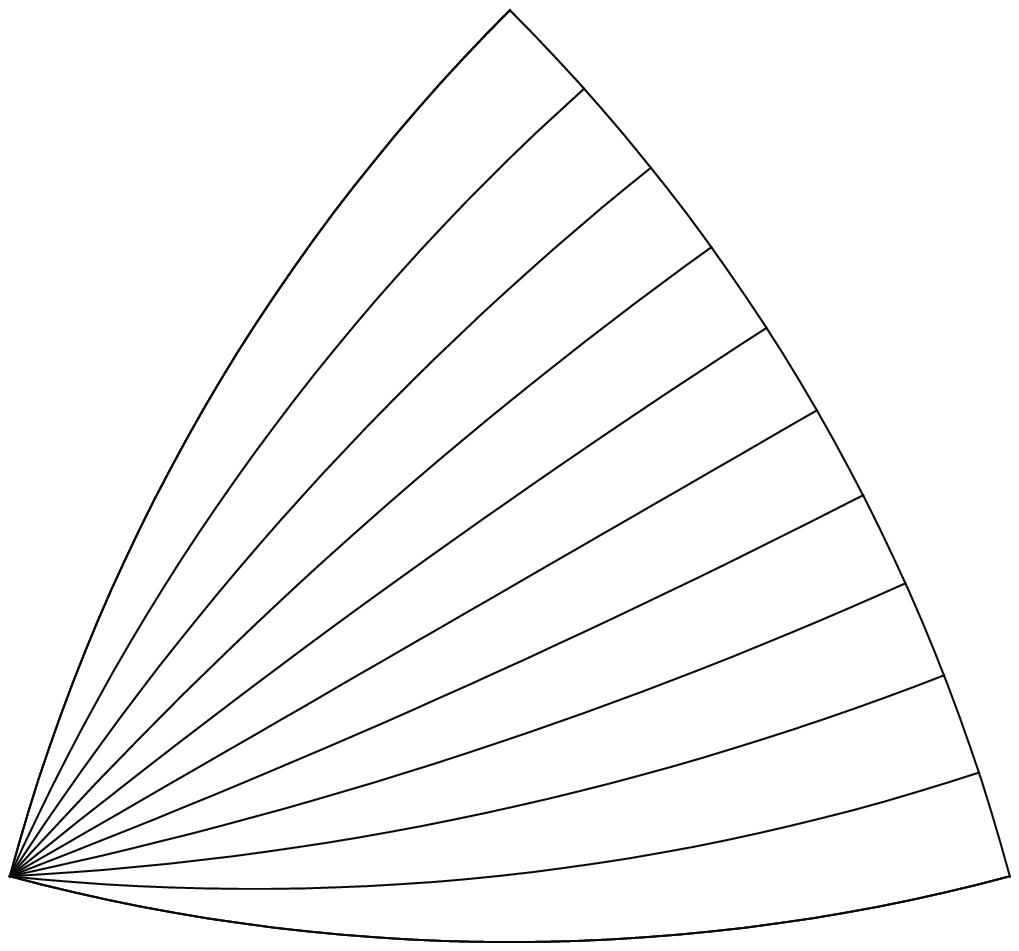}
$\hphantom{x}$
\includegraphics[height=1.2in]{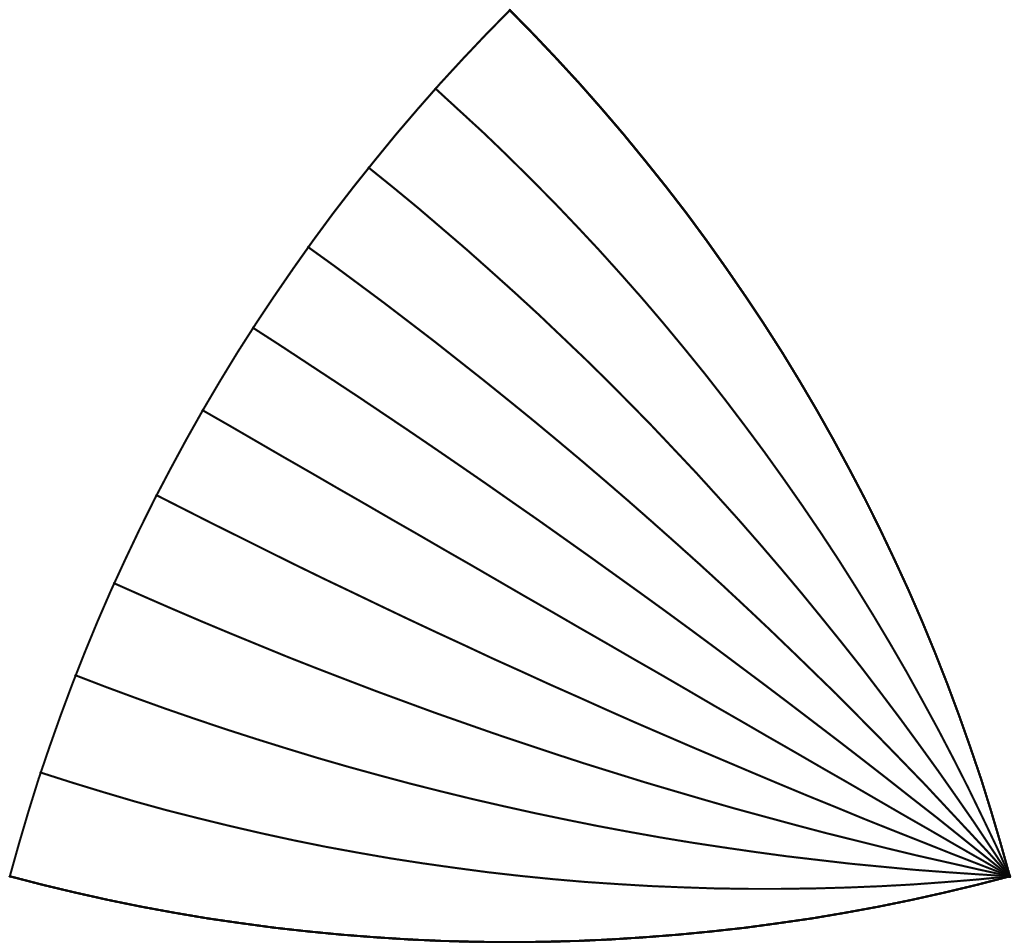}
$\hphantom{x}$
\includegraphics[height=1.2in]{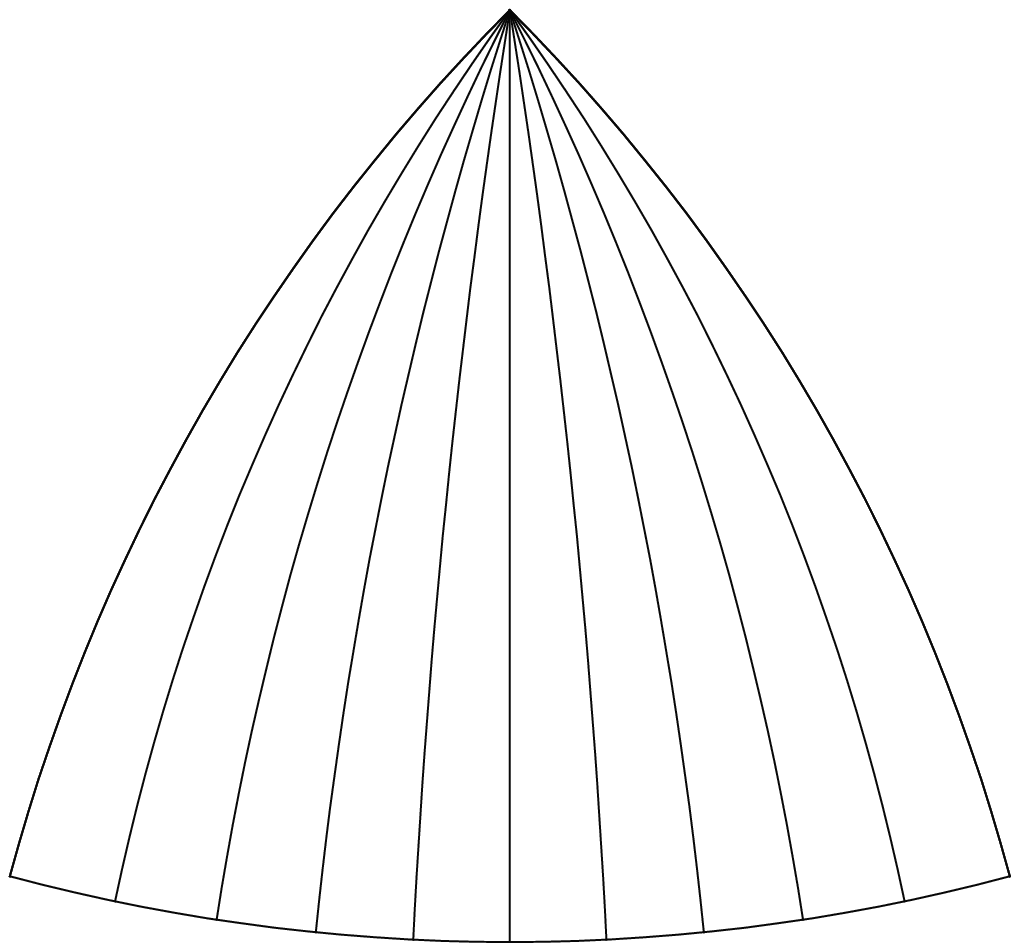}
$\hphantom{x}$
 \includegraphics[height=1.2in]{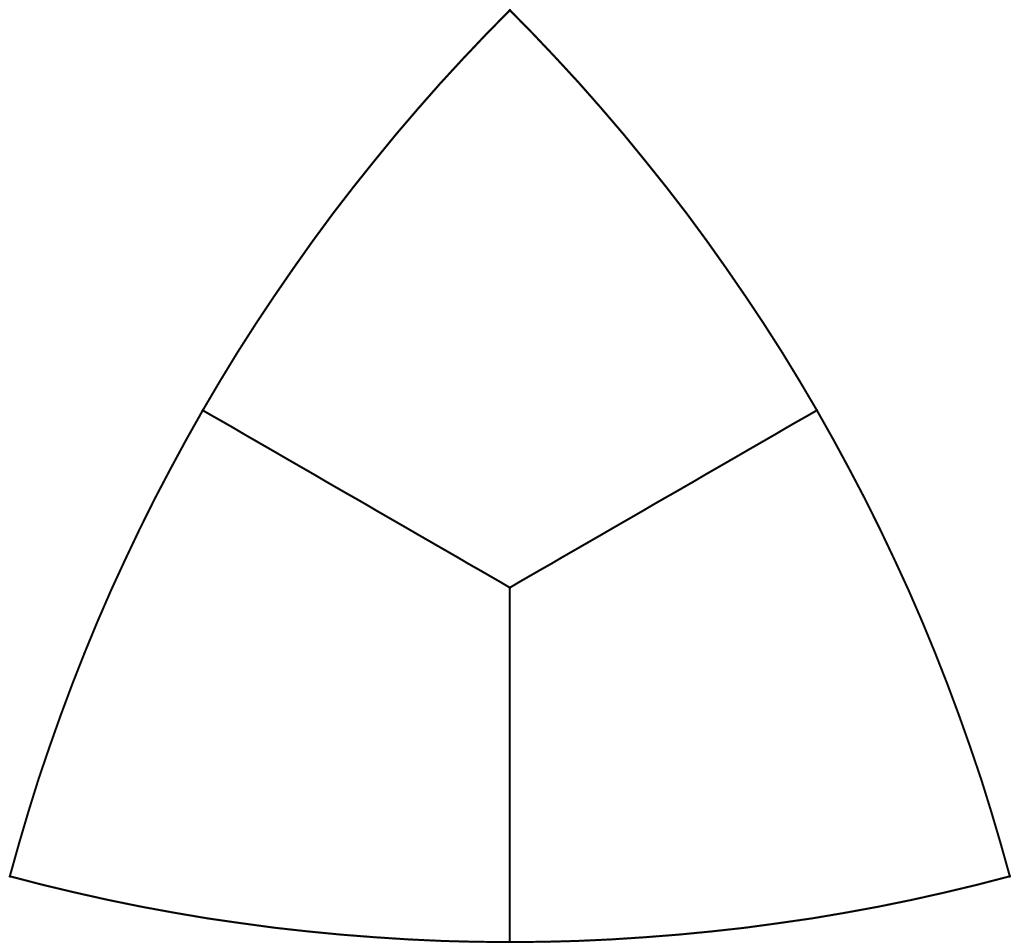}
 }
\vskip .15in
\centerline{ 
\includegraphics[height=1.2in]{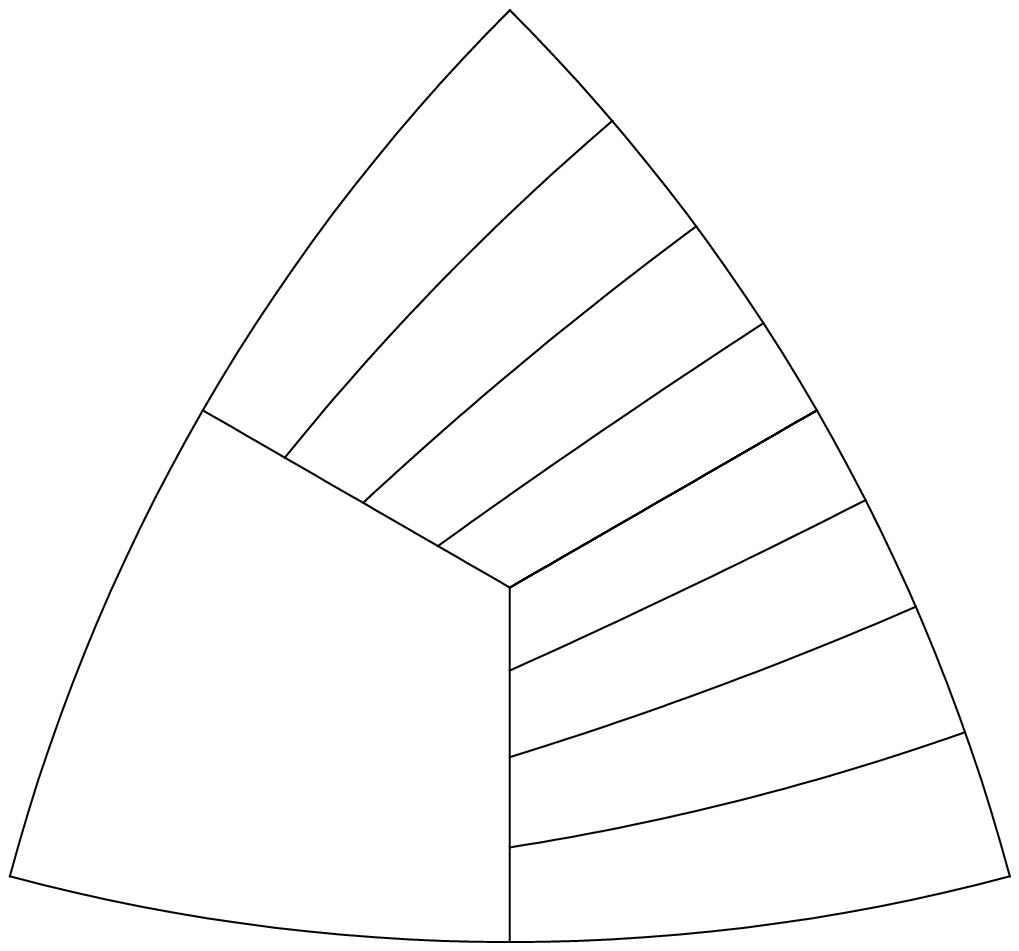}
$\hphantom{x}$
\includegraphics[height=1.2in]{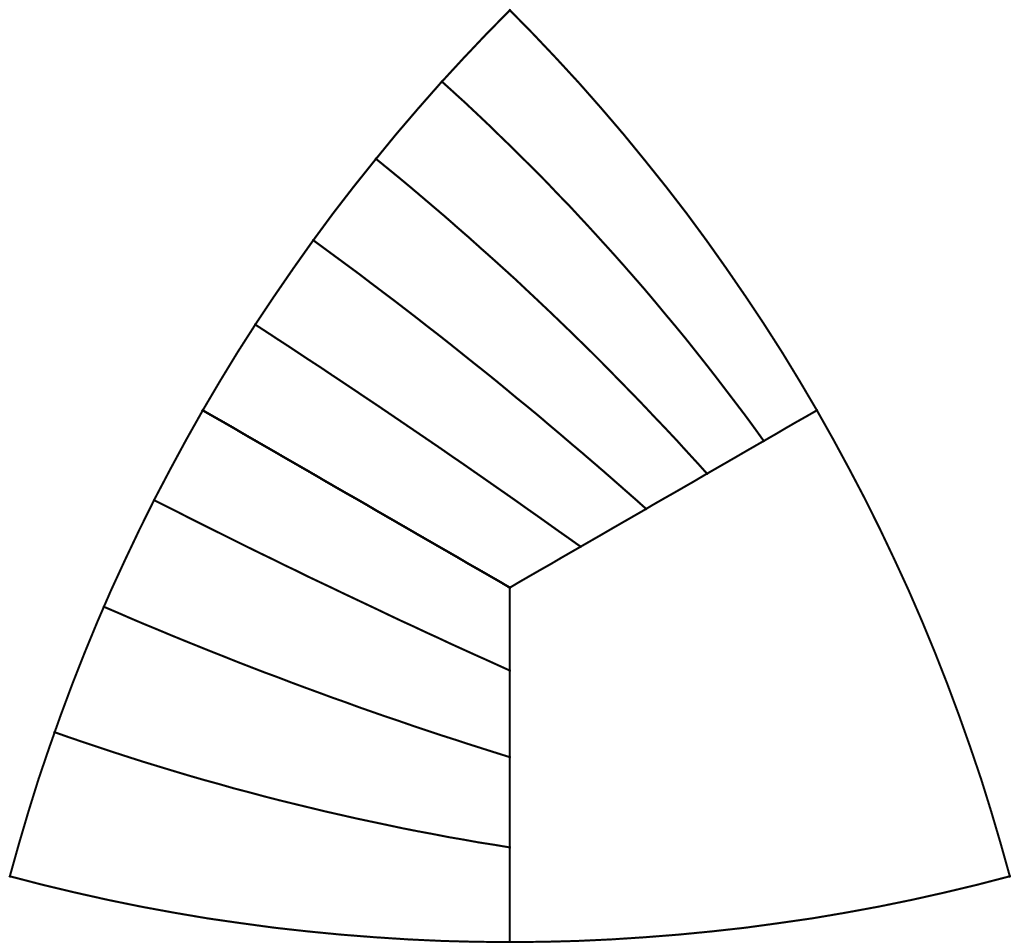}
$\hphantom{x}$
\includegraphics[height=1.2in]{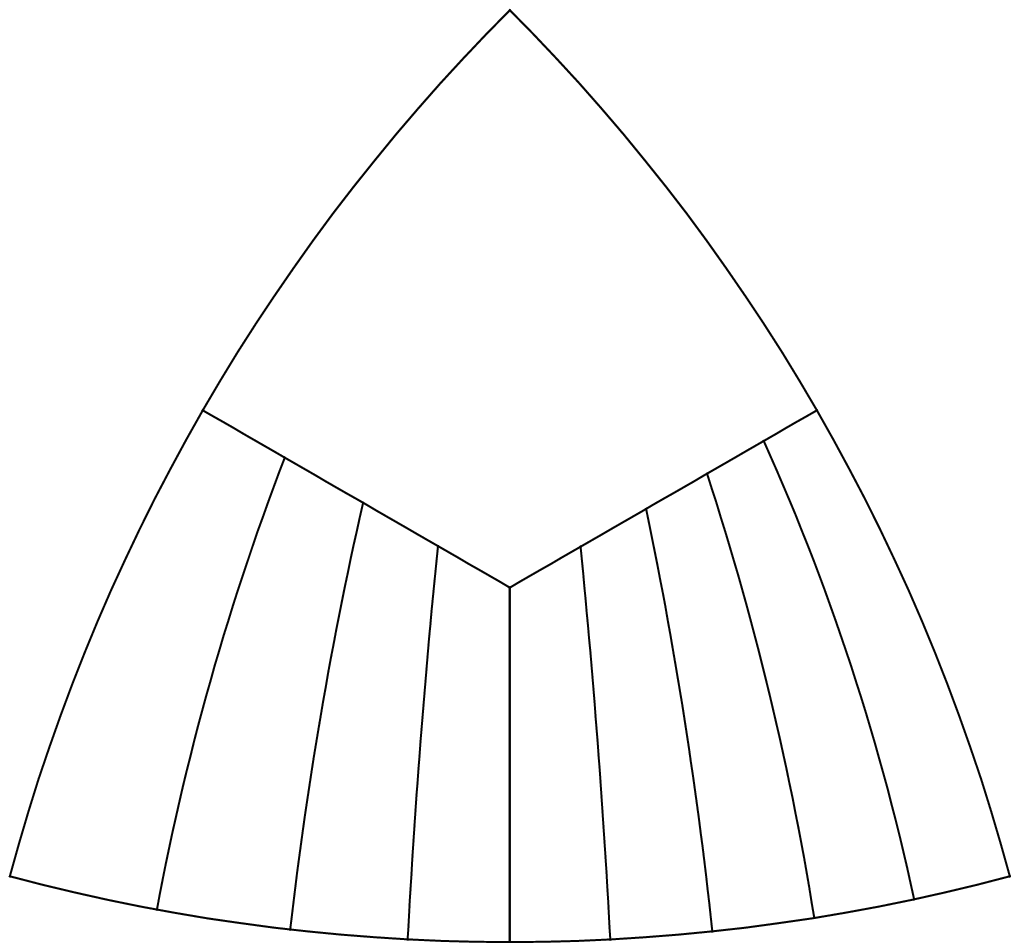}
$\hphantom{x}$
\includegraphics[height=1.2in]{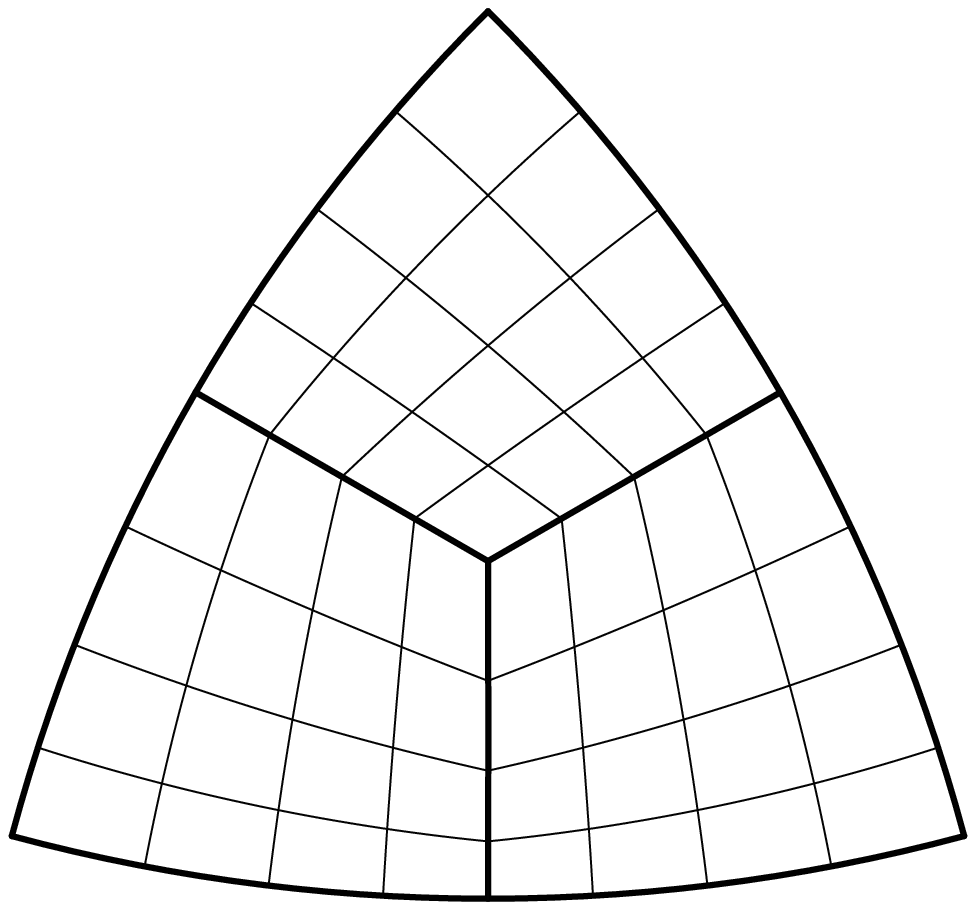}
 }
\caption{ \label{3-full-foliations} 
Three foliations of a circular right triangle. Each leaf passes
through an associated vertex and is perpendicular to the opposite side.
 Connecting the center of the triangle to the midpoints of each 
side by the straight leaf divides $T_i$ into three 
quadrilaterals.
We then restrict each foliation to two of the quadrilaterals 
as shown, and leaves of the union give the mesh edges. 
}
\end{figure}

Combining this foliation of the inner triangle with the foliations of 
the surrounding quadrilaterals and choosing starting points along the 
left and right sides of $T$ as described earlier gives the desired 
mesh of $T$. See Figure \ref{full-mesh-tri}. 
 The only part of the lemma left to 
prove is the claim that the angle are in the desired interval when 
replace the curved edges by straight segments. 

\begin{figure} 
\centerline{ 
\includegraphics[height=2.25in]{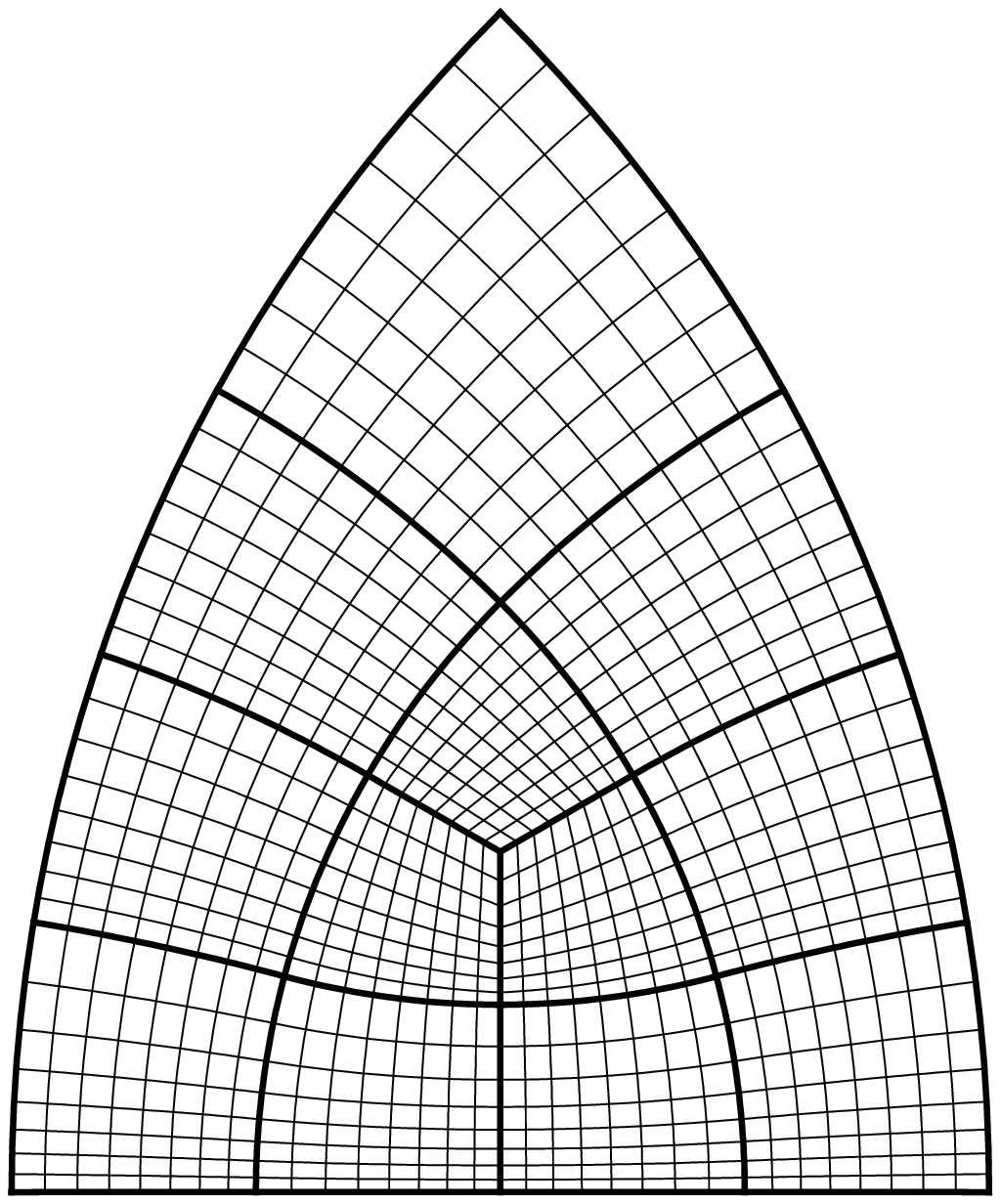}
$\hphantom{xxxx}$
\includegraphics[height=2.25in]{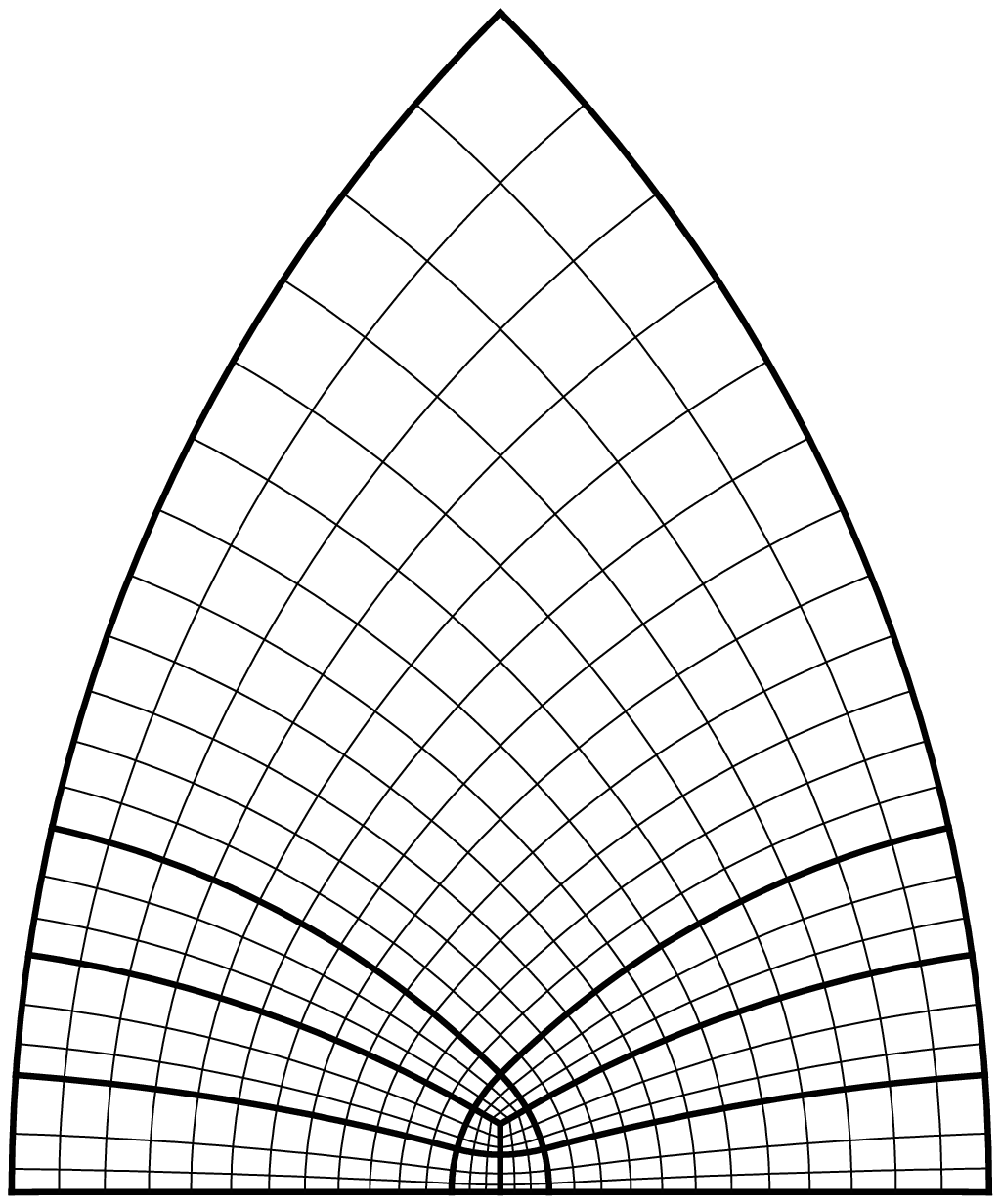}
 }
\caption{ \label{full-mesh-tri} 
The mesh of a Carleson triangle for two  different 
positions of the inner triangle. 
}
\end{figure}

When we replace the circular arc edges in the mesh by straight line segments, 
it is not obvious that all the angles remain in 
 $[60^\circ, 120^\circ]$, but we will show that this is true.
Consider a point $z$ in one of the three quadrilaterals
and the two foliation paths $\gamma_1, \gamma_2$ 
 that connect it to the two opposite 
vertices, $v_1, v_2$ respectively. 
See Figure \ref{bounds-on-quad}. 
Let $L_1, L_2$ be the lines through the center  $c$ and the points 
$v_1, v_2$. 
If we think of the arc $\gamma_1$ as a 
graph over the line $L_1$ it is monotonically increasing as we move
away from $v_1$ and remains increasing  so as long as we stay inside the 
triangle (since $\gamma_1$ is perpendicular the the opposite
side of the triangle, the point of greatest distance from 
$L_1$ occurs outside the triangle). Thus if we translate $L_1$ to 
pass through the point $z$, we see that $\gamma_1$ stays on one side 
of this new line  up to $z$ and on the other side beyond $z$. 
Thus any chord of $\gamma_1$ in the triangle with one endpoint at 
$z$ also stays on the same side of the line as the corresponding arc
of $\gamma_1$.    Similar for $\gamma_2$ and 
$L_2$. See the right side of Figure \ref{bounds-on-quad}.
Thus if we replace foliation paths by segments, 
 at each 
vertex there will be two angles less than $120^\circ$ and two greater 
than $60^\circ$ (which are the angles formed by  $L_1$ and $L_2$).
This completes the proof of Lemma \ref{triangle mesh}.

\begin{figure}[htbp]
\centerline{ 
\includegraphics[height=1.5in]{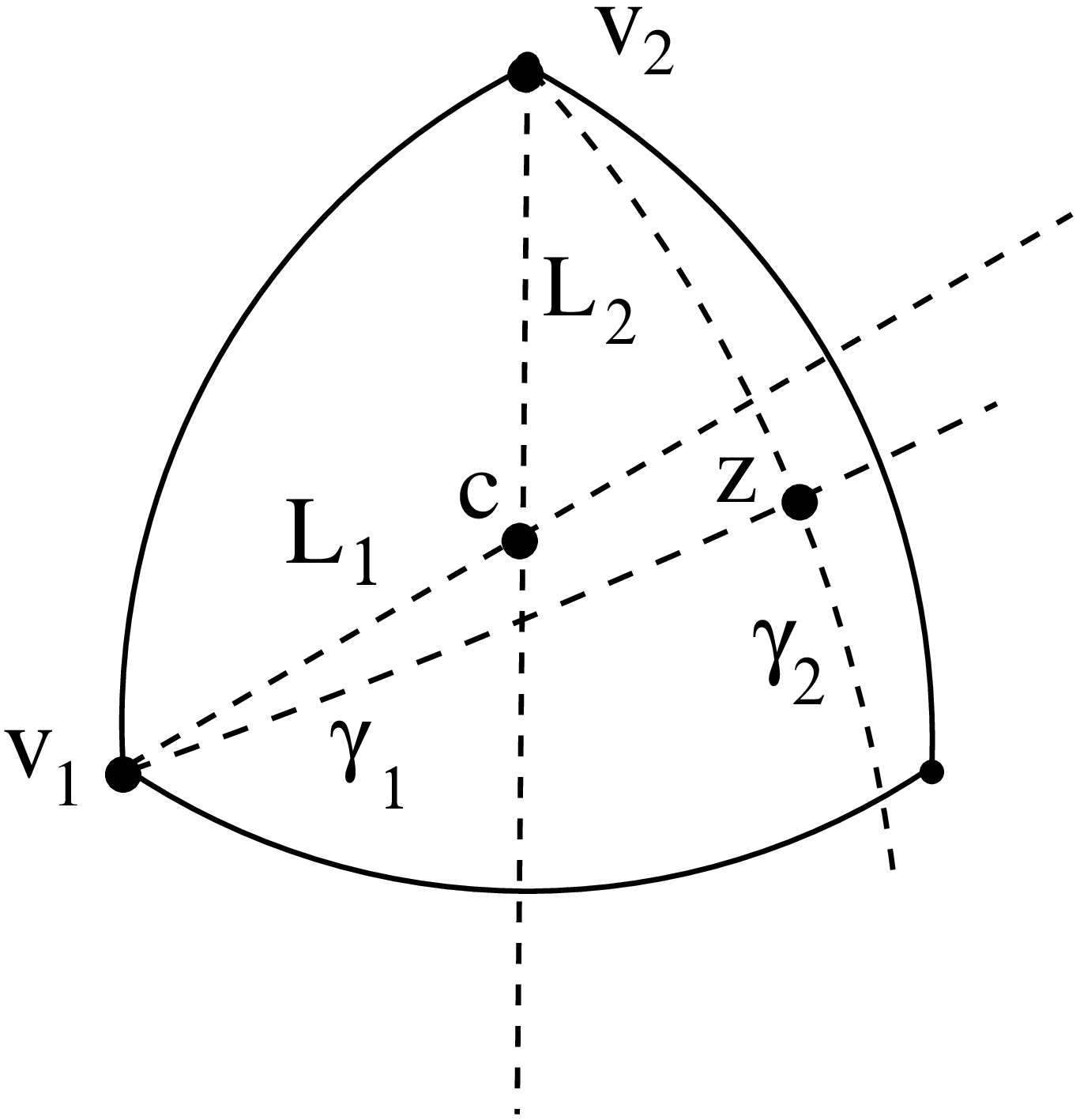}
$\hphantom{xxxxxxxxx}$
\includegraphics[height=1.5in]{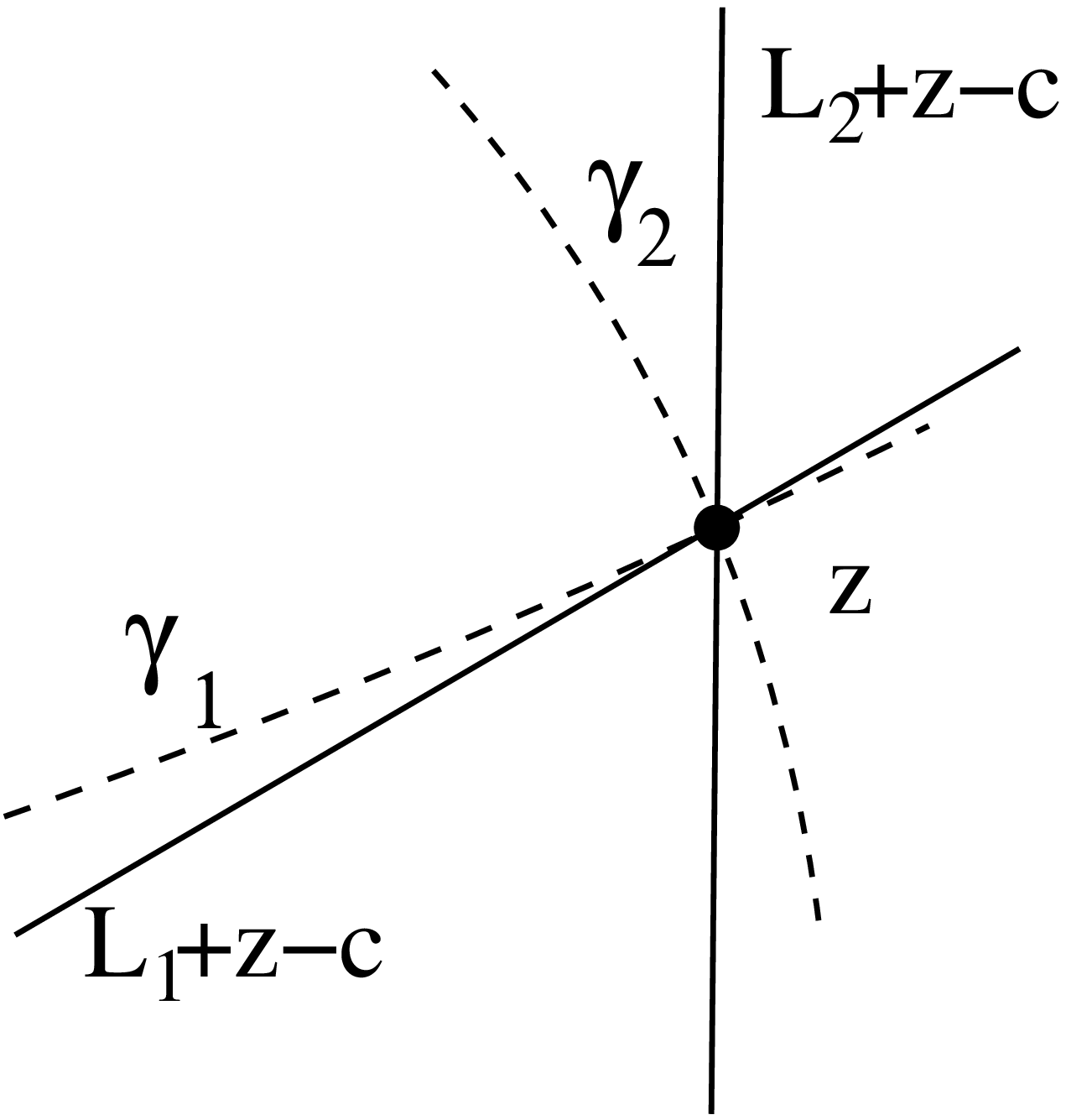}
 }
\caption{ \label{bounds-on-quad} 
If we choose any point $z$ of the equilateral triangle then 
chords of the foliation paths with endpoint $z$ form angles 
that are bounded between $60^\circ$ and $120^\circ$. 
}
\end{figure}

\section{Meshing the thick parts by conformal maps } \label{thick and thin}

The Riemann mapping theorem says that given any
simply connected planar domain $\Omega$ (other than the
whole plane) there is a 1-1, onto, holomorphic map
of the disk onto $\Omega$.  Moreover, we may map $0$
to any point of $\Omega$ and specify the
argument of the derivative at $0$.
Such a mapping is conformal, i.e., it preserves  angles
locally.
  More importantly, a conformal mapping is close to
linear on small balls with estimates that depend on the ball
but not on the mapping.
 Koebe's estimate (e.g. Cor. 4.4 of
\cite{Garnett-Marshall}) says that
if $f: \disk \to \Omega$  is  conformal 
then 
$$ \frac 14 |f'(z)| \leq  \frac {\dist( z, \partial \Omega_1) }
                       {\dist(f(z) , \partial \Omega_2) }
\leq  |f'(z)|.$$
The closely related distortion theorem  states (Equation
(4.17) of \cite{Garnett-Marshall}) that if $f$ is conformal
on the unit disk, then
$$
 \frac {1-|z|}{(1+|z|)^2}
  \leq  \frac { |f'(z)|}{|f'(0)|}
  \leq \frac {1+|z|}{(1-|z|)^3},$$
This says that on small balls $f'$ is close to constant, and 
hence that $f$ is close to linear. 
More precisely,
if $f$ is conformal on a ball $B(w,r)$ then
\begin{eqnarray} \label{close-to-linear}
 |f(z) - L(z) | \leq O(\epsilon^2 |f'(z)| r) ,
\end{eqnarray}
for $z \in B(w, \epsilon r)$,
where $L(z) = f(w) + (z-w) f'(w)$ is a Euclidean similarity.

We are particularly interested in conformal maps onto 
polygons. In this case $f$ is given by 
 the Schwarz-Christoffel formula 
$$ g(z) = A + C \int \prod_{k=1}^{n-1} (w-z_k)^{\alpha_k-1} dw , $$
where 
the interior angles  of $\Omega$  are $ 
\{ \alpha_1 \pi, \dots, \alpha_n \pi\}$ and the preimages
of the vertices are $\z= \{ z_1, \dots, z_n\}$.
 See e.g.,  \cite{DT-book}, \cite{Nehari}, \cite{DT99}.
The formula was discovered independently by
Christoffel in 1867 \cite{Chr67}  and Schwarz in 1869
\cite{Sch90}, \cite{Sch69a}. For other references
and a brief history see Section 1.2 of \cite{DT-book}.
The difficulty in using the formula is to find the 
correct parameters $\z$ for a given $\Omega$.

  For a conformal map $f$ onto a polygonal region, 
the points of the prevertex set $ \z  \subset \circle$ are the only 
singularities of $f$ on $\circle$. 
The map extends 
analytically across the complementary intervals by the 
Schwarz reflection theorem. Thus for a point $w \in \disk$, 
the map $f$ extends 
 to be conformal on the ball $B=B(w, \dist(w,E))$, 
and if $Q \subset B$ and $\diam(Q) \leq \epsilon \cdot 
\dist(Q,E)$, then there is a linear map $L$ so that 
\begin{eqnarray} \label{close-to-linear2}
 |f(z) - L(z) | \leq O(\epsilon \diam(f(Q))  ) ,
\end{eqnarray}
for $z \in Q$. In particular, the images of the vertices 
of $Q$ map to the vertices of quadrilateral whose angles 
differ by only $O(\epsilon)$ from the angles of $Q$. This is
what allows us to map our mesh via a conformal map and obtain 
a mesh with only slightly distorted angles. More precisely, 

\begin{lemma} \label{small distortion}
Suppose $f:\disk \to \Omega$ is a conformal map onto a polygonal 
domain with singular set $\z$ and $Q \subset \disk$ is a 
Euclidean quadrilateral with $ \diam(Q) \leq \epsilon \cdot \dist(Q, \z) $. 
Then the images of the vertices of $Q$ under $f$ form a quadrilateral 
with angles differing by at most $O(\epsilon)$ from the
corresponding angles of $Q$.
\end{lemma}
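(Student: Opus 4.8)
The plan is to recover each edge vector of the image quadrilateral $f(v_1)f(v_2)f(v_3)f(v_4)$ by integrating $f'$ along the corresponding edge of $Q$, using that on the scale of $Q$ the derivative $f'$ is nearly constant \emph{as a complex number}, not merely in modulus. Since the interior angles of a quadrilateral are arguments of ratios of consecutive edge vectors, a multiplicative perturbation of size $1+O(\epsilon)$ in each edge vector produces only an additive perturbation of size $O(\epsilon)$ in each angle. The only substantive input is this complex near-constancy of $f'$, which comes from the distortion estimates.

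Concretely, I would first fix any point $w_0 \in Q$ and put $R = \dist(w_0,\z)$. Since $\diam(Q) \le \epsilon\, \dist(Q,\z) \le \epsilon R$, the whole quadrilateral $Q$ — and in particular each of its edges $[v_i,v_{i+1}]$, which lie in the convex ball $B(w_0,\epsilon R)$ — sits well inside $B = B(w_0,R)$, on which $f$ is conformal (by Schwarz reflection across the unique complementary arc of $\circle \setminus \z$ that $B$ meets, as recalled above). Rescaling $B$ to the unit disk and applying the distortion estimates of \cite{Garnett-Marshall} — together with the companion control of $\arg f'$, which near the origin is $O(|z|)$, e.g. via $|f''/f'(z)| = O\bigl(1/(1-|z|^2)\bigr)$ for univalent $f$ — yields $f'(\zeta) = f'(w_0)\,(1+h(\zeta))$ with $|h(\zeta)| \le C\epsilon$ for every $\zeta \in Q$, $C$ an absolute constant and $\epsilon$ small. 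Integrating along an edge then gives
$$ f(v_{i+1}) - f(v_i) \;=\; \int_{[v_i,v_{i+1}]} f'(\zeta)\,d\zeta \;=\; f'(w_0)\,(v_{i+1}-v_i)\,(1+\eta_i), \qquad |\eta_i| \le C\epsilon, $$
where crucially the error multiplies the edge vector $v_{i+1}-v_i$ itself. Hence the interior angle of the image quadrilateral at $f(v_i)$ equals $\arg\bigl((v_{i+1}-v_i)/(v_{i-1}-v_i)\bigr) + \arg(1+\eta_i') - \arg(1+\eta_i'')$, which differs from the angle of $Q$ at $v_i$ by at most $O(\epsilon)$ because $\arg(1+\eta) = O(|\eta|)$ when $|\eta| < \tfrac12$. (For $\epsilon$ small, $f$ preserves the cyclic order of the vertices, so ``interior angle'' is unambiguous for the image.)

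I expect the only delicate point to be the step that upgrades the Koebe/distortion control of $|f'|$ to control of the genuine complex ratio $f'(\zeta)/f'(w_0)$, i.e.\ of $\arg f'$; once that is in hand the remainder is bookkeeping. It is worth noting why the argument integrates $f'$ edge by edge rather than simply comparing $f$ with the single similarity $L(z) = f(w_0) + f'(w_0)(z-w_0)$: inequality (\ref{close-to-linear2}) would only bound $|f(v_i) - L(v_i)|$ by $O(\epsilon\,\diam(f(Q)))$, and extracting an angle bound from that would require the edges of $Q$ to be comparable to $\diam(Q)$. The present approach instead gives an $O(\epsilon)$ that is uniform over all quadrilaterals $Q$, no matter how thin.
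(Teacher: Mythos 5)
Your argument is correct, and it takes a genuinely more refined route than the paper's. The paper treats the lemma as an immediate consequence of the additive estimate (\ref{close-to-linear2}): each image vertex lies within $O(\epsilon\,\diam f(Q))$ of its position under an exact similarity $L$, so the angles of the image quadrilateral differ from those of $L(Q)=$ (a similar copy of $Q$) by $O(\epsilon)$. As you correctly point out, that last deduction silently uses that the edges of $Q$ adjacent to each vertex have length comparable to $\diam(Q)$: displacing a vertex by $\delta$ perturbs an adjacent angle by roughly $\delta$ divided by the adjacent edge lengths. This is harmless in the paper because every quadrilateral to which the lemma is applied (the pieces of the pentagon, Carleson quadrilateral and Carleson triangle meshes) has bounded geometry, but it is not part of the hypothesis of the lemma as stated. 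Your edge-by-edge integration of $f'$ --- upgrading the Koebe control of $|f'|$ to the complex statement $f'(\zeta)=f'(w_0)\bigl(1+O(\epsilon)\bigr)$ on $Q$ via $|f''/f'|=O\bigl(1/(1-|z|^2)\bigr)$, and thereby obtaining a \emph{multiplicative} error $1+O(\epsilon)$ on each edge vector --- yields the $O(\epsilon)$ angle bound uniformly over arbitrarily thin $Q$, so it actually proves the stronger statement that the lemma's wording suggests. The one point you inherit from the paper rather than improve on is the assertion that the Schwarz-reflected extension of $f$ is univalent (not merely analytic with nonvanishing derivative) on the full ball $B(w_0,\dist(w_0,\z))$, which is what licenses the distortion and rotation estimates there; if one wishes to be scrupulous, one can instead work in a ball shrunk by a fixed factor where univalence of the reflected map can be checked, or bound $|f''/f'|$ by $O(1/\dist(\cdot,\z))$ directly. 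This caveat affects both arguments equally and none of the applications in the paper.
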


If we applied this directly to a general polygonal region we 
could prove that there is a quadrilateral mesh with angles 
between  $60^\circ - O(\epsilon) $ and $120^\circ + O(\epsilon)$ for 
any $\epsilon >0$, 
but we would not have  the $O(n)$ bound on the number of 
pieces. 
 Bounding the number of terms
comes from using a special decomposition of $\Omega$ and getting 
rid of the $\epsilon$'s comes from modifying the conformal map 
near the inner triangles in our mesh of $\disk$.
We will deal with the decomposition first.

A  polygonal domain $\Omega$
 is $\delta$-thick  if the corresponding prevertex set 
$\z$ is $\delta$-thick, as defined in Section \ref{mobius}. 
Equivalently, any two non-adjacent sides  of $\Omega$
have extremal distance at least $\delta$ in $\Omega$.
Extremal distance is a well know conformal invariant 
which roughly measures the distance between two 
continua compared to their diameters. For more 
details about extremal distance and thick domains, see
\cite{Bishop-time}.  

\begin{figure}[htbp]
\centerline{
 \includegraphics[height=1.25in]{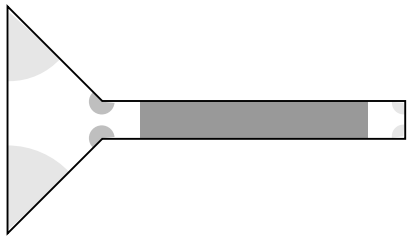}
}
\caption { \label{three-thin-types}
            A  polygon with one hyperbolic thin part (darker) and six
  parabolic thin parts.}
\end{figure}

\begin{figure}[htbp]
\centerline{
 \includegraphics[height=1.0in]{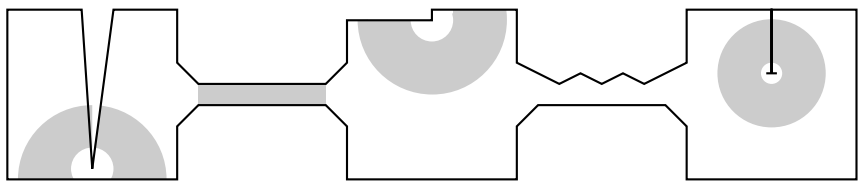}
}
\caption{\label{thick-parts}
A polygon with five hyperbolic thin parts.
This figure is not to scale.
The channel on the right is not thin because  the upper
edge is made up of numerous short edges; the extremal 
distance from any of these to the lower edge is bounded
away from zero.
}
\end{figure}

A  subdomain $\Omega' \subset \Omega$ is $\delta$-thin
if (1) $\partial \Omega' \cap \partial \Omega$ consists of two 
segments $S_1, S_2$ (each a subset of distinct edges of $\Omega$),
 (2)  $\partial \Omega' \cap 
\Omega$ consists of two polygonal arcs, each inscribed in an 
approximate circle  and (3) the extremal distance 
between $S_1$ and $S_2$ in $\Omega'$  is $\leq \delta$.
A thin part of $\Omega$ is called parabolic if the sides 
$S_1, S_2$ lie on adjacent sides of $\Omega$ is called hyperbolic 
otherwise.  See Figures \ref{three-thin-types} and \ref{thick-parts}.
 The following result is proven in \cite{Bishop-time}.

\begin{lemma}
There is an $\delta_0 >0$  and $0< C < \infty$
so that if $ \delta < \delta_0$
then the following holds.
Given a simply connected, polygonal domain $\Omega$ we can
write $\Omega$ is a union of subdomains $\{ \Omega_j\}$
belonging to two families ${\cal N }$ and $ {\cal K} $.
The elements of ${\cal N}$  are $O(\delta)$-thin polygons
and the elements of
${\cal K}$  are $\delta$-thick.
The number of edges in all the pieces put together is $O(n)$
and all the pieces can be computed in time $O(n)$ (constant
depends on $\epsilon$).
A piece can only intersect a piece of the opposite type.
Any such intersection  is a $4\delta$-thin polygon.
\end{lemma}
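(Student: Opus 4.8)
The statement is quoted from \cite{Bishop-time}; the plan is to reconstruct its proof, which works by pulling the whole question back to the disk via the Riemann map and decomposing a hyperbolic ideal polygon. First I would fix a conformal map $f:\disk\to\Omega$ with prevertex set $\z=\{z_1,\dots,z_n\}\subset\circle$, write $I_1,\dots,I_n$ for the complementary arcs, let $\gamma_j$ be the hyperbolic geodesic with the same endpoints as $I_j$, and set $H={\rm CH}(\z)$, the ideal $n$-gon bounded by the $\gamma_j$. Since extremal distance is a conformal invariant, the extremal distance in $\Omega$ between two sides $S_j,S_k$ equals the extremal distance in $\disk$ between $I_j$ and $I_k$, which by the equivalence recalled in Section \ref{mobius} is, for non-adjacent $j,k$, comparable to $e^{-\pi\rho(\gamma_j,\gamma_k)/2}$. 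Thus ``$\delta$-thin'' and ``$\delta$-thick'' become, up to the usual $\delta\leftrightarrow\log(1/\eta)$ change of variables, statements about hyperbolic distances between the sides of $H$, and it is enough to decompose $H$ hyperbolically and transport the decomposition to $\Omega$ by $f$.

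For the decomposition of $H$ I would run the Margulis/collar-lemma paradigm. Fix the threshold $\delta$ and call $w\in H$ thin if the largest hyperbolic disk centred at $w$ and contained in $H$ has radius below a fixed multiple of $\delta$, also declaring thin a small horoball-type neighbourhood of each ideal vertex; let $H_{\rm thin}$ be the resulting open set and $H_{\rm thick}=\overline{H\setminus H_{\rm thin}}$. Standard hyperbolic geometry then shows the components of $H_{\rm thin}$ are pairwise disjoint and of two kinds: collars about a short geodesic crosscut joining two \emph{non-adjacent} sides $\gamma_j,\gamma_k$ (hyperbolic type), or cusp-type neighbourhoods of an ideal vertex, where two \emph{adjacent} sides meet on $\circle$ (parabolic type). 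Because $H$ is an ideal $n$-gon it has hyperbolic area $(n-2)\pi=O(n)$, while every component of $H_{\rm thick}$ contains a hyperbolic disk of radius comparable to $\delta$, hence has area bounded below by a constant depending only on $\delta$; so $H_{\rm thick}$ has $O(n)$ components, and a companion counting argument (each $\gamma_j$ meets only $O(1)$ pieces) gives $O(n)$ sides in all. Each thick component is $\delta$-thick by construction.

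To transport to $\Omega$ I would extend $f$ to closures, so that the pieces above, together with the ``crescents'' $f(\{\text{region between }I_j\text{ and }\gamma_j\})$ glued onto whichever piece carries $\gamma_j$, tile $\overline\Omega$. Koebe's distortion estimates (recalled in Section \ref{thick and thin}) say $f$ is within a bounded ratio of a Euclidean similarity on each Whitney-type ball of $\disk$, so the $f$-image of a crosscut is an analytic arc of bounded geometry; decomposing each such arc into its Whitney pieces relative to $\partial\Omega$ and inscribing an $O(1)$-vertex polygonal arc in each produces honest polygons whose crosscut boundaries lie on ``approximate circles'', with $O(n)$ vertices in total by the same count. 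The extremal-distance bounds for these polygons are the conformal invariants computed above, hence $O(\delta)$ for the thin pieces and at least $\delta$ for the thick ones. Finally, to force ``a piece only meets a piece of the opposite type'' and that each thick--thin overlap is a $4\delta$-thin polygon, I would let the thin pieces extend outward to hyperbolic-disk-radius $4\delta$ rather than $\delta$: this leaves, around each thick component, a buffer collar of width between $\delta$ and $4\delta$ which is exactly the (thin) overlap, while distinct pieces of the same type stay disjoint because thin collars are disjoint and thick components are disjoint.

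The steps I expect to be genuinely delicate are three. The linear-time claim is the most serious: one cannot compute $f$ in $O(n)$ time, so instead of literally using $f$ one carries out the argument on the medial axis of $\Omega$ --- a tree with $O(n)$ segment/parabola edges, computable in $O(n)$ time from Chazelle's triangulation --- which is bi-Lipschitz to the convex-hull picture above (the thin parts being the portions of the tree along which $\int ds/\dist(\cdot,\partial\Omega)$ is large), and reconciling the combinatorics of this tree with the conformal and hyperbolic estimates is the real content of \cite{Bishop-time}. Second, pinning down the constants in the buffer bookkeeping (why $4\delta$, and why no two like-type pieces meet) is fussy though routine once the collar widths are fixed. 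Third, checking that replacing the curved crosscuts by inscribed polygons uses only $O(n)$ edges in aggregate, that these lie on approximate circles, and that this does not spoil the extremal-distance estimates, is one more Koebe-distortion computation; I would expect this and the medial-axis reconciliation to absorb most of the work.
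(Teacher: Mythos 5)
The paper offers no proof of this lemma at all: it is quoted verbatim from \cite{Bishop-time} and used as a black box, so the only possible comparison is with that external source. Measured against it, your high-level strategy is the right one --- pass to the hyperbolic convex hull $H$ of the prevertex set, run a Margulis-style thick/thin decomposition of $H$ with hyperbolic collars (non-adjacent sides) and cusp neighborhoods (adjacent sides), get the $O(n)$ count from $\area(H)=(n-2)\pi$ plus an inscribed-disk lower bound on thick components, and replace the uncomputable Riemann map by the medial axis for the linear-time claim. You are also candid that the medial-axis/conformal reconciliation is where most of the work of \cite{Bishop-time} lives, which is accurate.

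There is, however, a genuine quantitative gap in your dictionary between extremal distance and hyperbolic separation, and it is exactly the dictionary on which the ``$O(\delta)$-thin'' and ``$4\delta$-thin'' conclusions rest. You assert that the extremal distance between non-adjacent arcs $I_j,I_k$ is comparable to $e^{-\pi\rho(\gamma_j,\gamma_k)/2}$. This has the wrong monotonicity (it would make widely separated sides the thin ones, whereas a thin channel corresponds to bounding geodesics that nearly touch) and the wrong form: as recorded in Section \ref{mobius}, the correct relation for small parameters is $\delta^{-1}\simeq \frac{2}{\pi}\log\frac{1}{\eta}$, i.e.\ $\eta\simeq e^{-\pi/(2\delta)}$ --- the hyperbolic separation is exponentially small in the reciprocal of the extremal distance, not the other way around. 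This propagates into your construction: cutting at inscribed-disk radius ``a fixed multiple of $\delta$'' yields pieces whose extremal distance is only $O(1/\log(1/\delta))$, not $O(\delta)$, and the buffer at radius $4\delta$ does not produce $4\delta$-thin overlaps. To get the stated conclusion the Margulis cutoff must be taken at hyperbolic scale $e^{-c/\delta}$, and the overlap collar widths must be calibrated on that exponential scale. Relatedly, the pairwise disjointness of same-type pieces (two collars about nearby short orthogeodesic crosscuts could a priori meet) is asserted as ``standard hyperbolic geometry'' but in fact depends on choosing this hierarchy of cutoffs correctly; it deserves an explicit argument rather than an appeal to the closed-surface Margulis lemma, since $H$ is an ideal polygon, not a surface.
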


\begin{figure}[htbp]
\centerline{
 \includegraphics[height=1.5in]{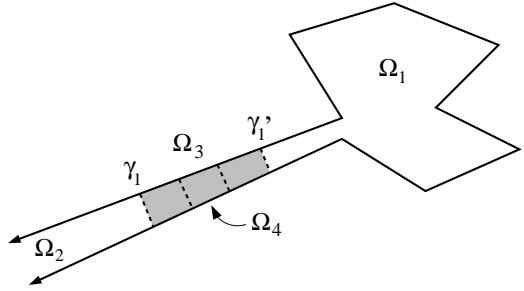}
}
\caption{\label{Overlap}
An overlapping thick piece, $\Omega_2$, and  thin piece, 
$\Omega_1$ and crosscuts $\gamma_1 = \partial \Omega_1 
\cap \Omega_2$, $ \gamma_2 = \partial \Omega_2 \cap \Omega_1$. 
The shaded region is $\Omega_3 = \Omega_1 \cap \Omega_2$.
This region 
is divided into three sections and in the center section is 
denoted $\Omega_4$.
}
\end{figure}

Suppose  $\Omega_1$ is one of the thick parts, and 
let $f : \disk 
\to \Omega_1$ be a conformal map with the origin mapping to 
a point outside of all the thin parts hitting $\Omega_1$.
 Note that $\partial 
\Omega_1 \cap \Omega$ consists of crosscuts $\{ \gamma_j\}$
 and let $\{ I_j\}$ 
be the preimages under $f$  of these  boundary arcs.
Each $\gamma_j$ has an associated crosscut $\gamma_j'$ 
that is a boundary arc of the thin part containing 
$\gamma_j$. The preimage of $\gamma_j'$ defines a crosscut
in $\disk$ whose endpoints define an interval that contains 
$A I_j$ of $I_j$ where $A \simeq \exp(\pi / 4 \delta)$.
These larger intervals are disjoint (since none of the thin 
parts intersect)  and $f(0)$ can be chosen 
so they all have length $< \pi$. 

Thus we can apply Lemma \ref{build intervals} to construct 
a domain $W_1 \subset \disk$ and a quadrilateral mesh 
on it.  Suppose  $\partial \Omega_1$ has $n_1$ sides. Since 
$\Omega_1$ is $\delta$-thick,
Lemma \ref{thick collections} implies  the mesh of $W_1$ has 
$O(n_1)$ elements,  with a constant depending on $\delta$.
Moreover, for any $\epsilon >0$ we may assume   Lemma \ref{small distortion}
applies  to all the quadrilaterals in our mesh of $W_1$ 
if we take $\epsilon = O(1/N)$  where $N$ is 
in  Lemmas \ref{pentagon mesh},  \ref{quad mesh}
 and \ref{triangle mesh}).  Thus $f(W_1) \subset \Omega_1$ 
has a mesh with $O(n_1)$ quadrilaterals, the constant depending  
on $\delta$ and $N$, which we will choose independent of $\Omega$.
If $N$ is large enough, then all angles are in the desired range,
 except possibly for the quadrilaterals corresponding to the
inner triangles. This determine the choice of $N$.

We also want to choose $\delta >0$ independent of $\Omega$.
As above, suppose $\Omega_1$ is a thick piece and that it 
intersects a thin piece $\Omega_2$.  
The intersection, $\Omega_3 = \Omega_1 
\cap \Omega_2$  is a $4 \delta$-thin 
part and can be divided into three disjoint $12 \delta$-thin  
parts as illustrated in Figure \ref{Overlap}.  Let 
$\Omega_4$ denote the ``middle'' part (the one separated
from both $\gamma_1$ and $\gamma_2$).  For points inside 
$\Omega_4$, the conformal maps of the disk to $\Omega_1$
and $\Omega_3$ are very close to each other if $\delta $ is 
small enough. The following 
result (Lemma 24 of \cite{Bishop-time}) makes this 
precise:

\begin{lemma} \label{approx map}
Suppose $f: \uhp \to \Omega_1$ is conformal. We can choose 
a conformal map $g : \uhp \to \Omega_3$ so that  for 
$z \in f^{-1}(\Omega_4)$,  and uniform $c>0, C< \infty$, 
$$|f(z)-g(z)| \leq C \exp(- c/ \delta)
\max(\diam(\gamma_1), \diam(\gamma_2)). $$
\end{lemma}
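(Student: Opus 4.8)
The plan is to compare $f$ and $g$ through the conformal bijection $\psi=g\circ f^{-1}\colon\Omega_1\to\Omega_3$. Writing $v(w)=\psi(w)-w$ — which is holomorphic on $\Omega_1$, since both $\psi$ and the identity are — the assertion is exactly that $|v(w)|\le C\exp(-c/\delta)\max(\diam\gamma_1,\diam\gamma_2)$ for $w\in\Omega_4$, because $v(f(z))=g(z)-f(z)$. The exponent is forced by the fact that $\Omega_4$ is buried, at extremal distance of order $1/\delta$, inside the long thin channel $\Omega_3$: since $\Omega_3$ is $4\delta$-thin, the extremal distance in $\Omega_3$ between $\gamma_1$ and $\gamma_2$ is some $M\ge c/\delta$, and since $\Omega_4$ is the middle of three $12\delta$-thin pieces it is at extremal distance $\ge c/\delta$ from each of $\gamma_1$ and $\gamma_2$.

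First I would normalize $g$. Since $\gamma_1\subset\partial\Omega_1$, its $f$-preimage $I=f^{-1}(\gamma_1)$ is an arc of $\reals$; choose $g\colon\uhp\to\Omega_3$ so that $g^{-1}(\gamma_1)=I$ with matching orientation (two of the three real degrees of freedom; the third normalizes one more boundary point). Then $\psi$ fixes $\gamma_1$ as a set, and — because $\gamma_1$ lies in the interior of the thin piece $\Omega_2$ — the region where $\psi$ can differ in an essential way from the identity is confined to the far side of $\gamma_2$: indeed $\Omega_1\setminus\Omega_3=\Omega_1\setminus\Omega_2$ is separated from $\gamma_1$ by the crosscut $\gamma_2$, so $\Omega_1$ and $\Omega_3$ agree on a full neighborhood of $\overline{\Omega_3}$ away from a bounded neighborhood of $\gamma_2$, in particular near $\overline{\Omega_4}$ and along the entire $\gamma_1$-side of the channel. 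Using that the pieces of $\Omega_3$ near $\gamma_1$ and near $\gamma_2$ have Euclidean diameter comparable to $\diam\gamma_1$ and $\diam\gamma_2$ respectively (a feature of the thin-part construction), one checks that $|v|=O(\max(\diam\gamma_1,\diam\gamma_2))$ on the two ends of the channel.

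Next I would exploit that $S_1$ and $S_2$ are straight segments of $\partial\Omega$: both $f$ and $g$ extend across the corresponding arcs of $\reals$ by Schwarz reflection, hence so do $\psi$ and $v$; reflecting in the lines through $S_1$ and $S_2$ turns $v$ into a holomorphic function on a channel of comparable conformal modulus (geometrically roughly a long strip, or a thin annular sector when the thin part is parabolic), still bounded by $O(\max(\diam\gamma_1,\diam\gamma_2))$ at its two ends. A three-lines / Hadamard estimate along a channel of modulus $\ge cM$ — equivalently a Phragm\'en--Lindel\"of or Harnack-chain argument along it, or comparison with the flat metric of the straightened channel, which the Poincar\'e metric approaches to within $O(e^{-cM})$ in the middle — then gives $|v|\le Ce^{-cM}\max(\diam\gamma_1,\diam\gamma_2)$ on the middle third of the channel, which contains $\Omega_4$. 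Since $M\ge c/\delta$, this is the claimed bound.

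The step I expect to be the main obstacle is this last rigidity estimate: making it precise that modulus $M$ forces $O(e^{-cM})$ smallness in the middle of the channel when the boundary data is controlled only up to the ambient scale at the two ends. A secondary nuisance is the bookkeeping near the ends: $\gamma_1$ and $\gamma_2$ are only polygonal, not line segments, so the reflections must be localized to $S_1$ and $S_2$, and one must check this still leaves $\Omega_4$ separated from every remaining boundary arc by extremal distance $\ge c/\delta$; moreover the normalization of $g$ pins only $g^{-1}(\gamma_1)$, so $\psi$ carries $S_1,S_2$ to nearby segments rather than exactly to themselves, an error that is itself $O(e^{-cM})$ near $\gamma_1$ but must be tracked along the channel and shown not to spoil the estimate.
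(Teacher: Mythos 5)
First, a remark on the comparison itself: the paper does not prove this lemma — it is imported verbatim as Lemma 24 of \cite{Bishop-time} — so there is no in-paper argument to measure your proposal against, and I am judging it on its own merits.

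Your setup (compare $f$ and $g$ through $\psi=g\circ f^{-1}$, spend the M{\"o}bius parameters of $g$ on a normalization, reflect across the straight sides $S_1,S_2$) is reasonable, but the central analytic step runs in the wrong direction. You propose to deduce $|v|\le Ce^{-cM}\max(\diam\gamma_1,\diam\gamma_2)$ in the middle of the channel from $|v|=O(\max(\diam\gamma_1,\diam\gamma_2))$ at the two ends via a three-lines/Phragm\'en--Lindel\"of estimate. Harmonic majorization propagates smallness from boundary sets of \emph{large} harmonic measure; seen from $\Omega_4$, the two ends of a channel of modulus $M$ have harmonic measure $\approx e^{-cM}$, so a bound of size $A$ there contributes roughly $A^{e^{-cM}}(\sup|v|)^{1-e^{-cM}}$, i.e.\ essentially nothing beyond the ambient bound. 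Concretely, $v\equiv \max(\diam\gamma_1,\diam\gamma_2)$ (that is, $\psi$ a translation along the channel) satisfies every boundary estimate you state — bounded at the ends, parallel to the sides on the reflected long sides — yet is not exponentially small; and even with your normalization $\psi(\gamma_1)=\gamma_1$, the drift of the tangential component of $v$ accumulated over the first $O(1)$ channel-widths next to $\gamma_1$, where no decay is yet available, is already of size $\max(\diam\gamma_1,\diam\gamma_2)$, so integrating gradient bounds outward from the pinned end does not recover the exponential either. The mechanism that actually produces the $e^{-c/\delta}$ is different: after Schwarz reflection across the preimages of $S_1,S_2$, both $f$ and $g$ (normalized so that $f(i)=g(i)$ is the center of $\Omega_4$) extend holomorphically to a round annulus $A(\eta,1/\eta)$ with $\log(1/\eta)\gtrsim 1/\delta$; a Laurent-series/area-theorem argument (or a normal-families-with-rate argument) shows each is within a factor $1+O(\eta^2)$ of its affine part on the middle sub-annulus, hence that $g\circ f^{-1}$ is exponentially close to a M{\"o}bius transformation near $\Omega_4$, and the remaining freedom in choosing $g$ must be spent making that limiting M{\"o}bius map the identity — not on pinning the $\gamma_1$ end. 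Your secondary concern (justifying $|v|=O(\diam\gamma_2)$ at the $\gamma_2$ end, where $\psi$ compresses all of $\Omega_1\setminus\Omega_2$ into a neighborhood of $\gamma_2$) is also a genuine unproved assertion, but it is moot given that the main step fails.
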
 

Since $\Omega_3$ is a thin part,  we can renormalize our maps 
so that $f(i) = g(i)$ is the center of $\Omega_4$ and the 
preimages of the vertices of $\Omega_3$ 
under $g$ can be grouped into two parts: those in 
a small interval $\{ |x| <  \eta\}$ and those outside
a large interval $\{ |x| > 1/\eta\}$, where $\eta $ tends 
to zero as $\delta$ tends to zero.

The  corresponding terms of the  Schwarz-Christoffel formula 
can be  grouped as 
$$g'(w) = B  
\prod_{|z_k| < \eta} w^{\alpha_k-1}  (1- \frac {z_k}w)^{\alpha_k-1}
\prod_{|z_j| > 1/\eta}  (1- \frac w {z_j} )^{\alpha_j-1}
 \simeq B   w^{ \sum_{k: |z_k| < \eta} \alpha_k-1}  = B w^\beta ,$$
where $B$ is constant, and the dropped terms are close to 
$1$ if $\eta$ is close to $0$.  Thus $g$ approximates a 
power function.
This implies that  $g$, and hence $f$, maps the 
circular arc $\{|z|=1\} \cap \uhp$
to a smooth crosscut of $\Omega_4$ that approximates 
a circular arc  that is close to perpendicular to the boundary, 
and that $f$ followed by radial projection 
onto this arc preserves the ordering of points and  multiplies 
the distances between them by approximately a constant factor
(with error that tends to zero with $\delta$). 
Figure \ref{PowerApprox}. 
This  is one condition that determines 
our choice of $\delta$.   Another will be given in the final 
section when we mesh hyperbolic thin parts.

\begin{figure}[htbp]
\centerline{
 \includegraphics[height=1.2in]{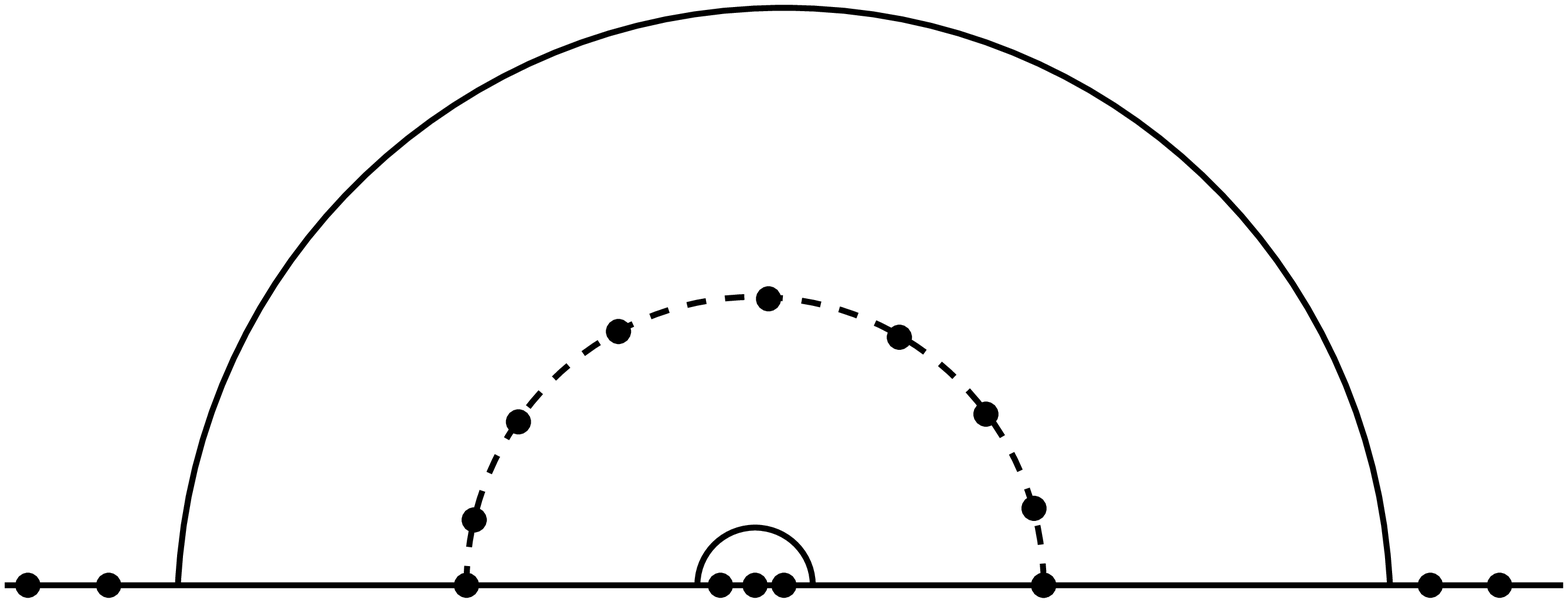}
$\hphantom{xx}$ 
\includegraphics[height=1.2in]{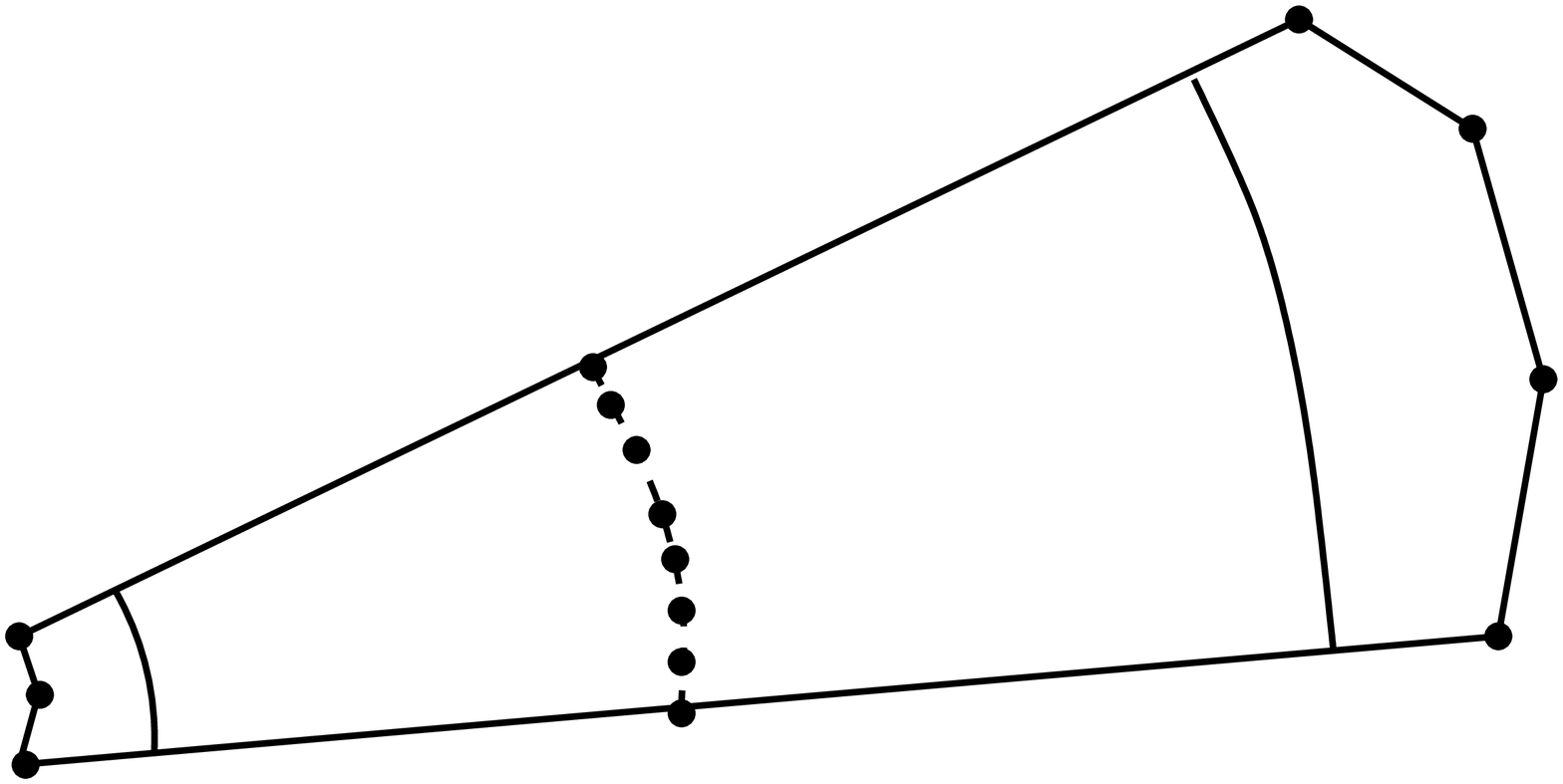}
}
\caption{ \label{PowerApprox}
Inside the middle of the overlap of a thick and thin part, 
the conformal map approximates a power function. Points 
on a circular arc in the disk are mapped to points that 
lie on an approximate circular arc and order is preserved. 
}
\end{figure}

We now transfer the mesh from $W_1$ to $f(W_1) \subset \Omega_1$.
The unmeshed portions of $\Omega$ are now all subsets of 
thin parts bounded by crosscuts that are   almost circular arcs. 
Moreover, the number of mesh vertices on each of theses 
crosscuts is the same  by (3) of Lemma \ref{build intervals}
(namely $3N+2M$ where $N$ is from Lemma \ref{pentagon mesh} and 
$M$ is from Lemmas \ref{quad mesh} and \ref{triangle mesh}).
The mesh has all angles in $[60^\circ, 120^\circ ]$,
 except those corresponding to 
 the inner triangles  in the Carleson triangles, where
they may be $O(\epsilon)$ larger or smaller.

To fix this, we replace the conformal map by a linear map 
in the inner triangles.
Map each Carleson triangle in $\disk$   used in the mesh of $W_1$, 
to the  Carleson triangle $T$  in $\uhp$ discussed in 
Section \ref{triangles}  using a M{\"o}bius transformation $\tau$.
Then $g= f_k \circ \tau^{-1}$ is a conformal map of $T$ into 
 a part of $f(W_1)$ and we transfer our mesh of $T$ outside the 
inner triangle $T_i$ via this map.  This agrees with our previous 
definition.
In the inner triangle  $T_i$ we use  the linear map
  $ h(z) = g(c) + (z-c) g'(c)$ 
to transfer the mesh. This preserves angles exactly  and so the image 
quadrilaterals have angles in $[60^\circ, 120^\circ]$. For 
quadrilaterals along the boundary of  $T_i$ 
 we apply $h$ to the vertices on $\partial T_i$ and $g$ to the vertices 
in $T \setminus T_i$.    Along the boundary 
of $T_i$, $|h(z) - g(z)| = O( \eta^2) \diam (g(T_i))$ where 
$\eta = \diam(T_i)/\dist(T_i, \z) =O(  \diam(T_i)/\diam(T))$.   Since  the
quadrilaterals meshing $T$ along $\partial T_i$ 
 have Euclidean diameter $\simeq \eta$, and 
angles all near $90^\circ$, we see that the angles of the image 
quadrilaterals also have angles near $90^\circ$ if $\eta$ is 
small enough, i.e., if the inner triangle is small enough with 
respect to the outer triangle. This determines the choice of the 
inner  triangle. 
 
This completes the proof that the desired mesh exists, 
except for meshing the thin parts, which is done in the 
next section. 
However, this is not quite a linear time 
algorithm for computing the mesh, since we have used 
evaluations of conformal maps without an estimate 
of the work involved.  We address this now.

The exact conformal map onto a general polygon
probably can't be computed 
in finite time, but we can compute an approximate 
map onto a simple $n$-gon in time $O(n)$ with a
constant depending only on the desired  accuracy.
 In \cite{Bishop-time} I show that 
a $(1+\epsilon)$-quasiconformal map from $\disk$ to 
$\Omega$ can  be computed and evaluated at $n$ points 
in time $O(n)$ where the constant depends only on 
$\epsilon$.  I will refer to \cite{Bishop-time} 
for the definition and relevant properties of 
quasiconformal mappings, but the point is that 
if $f: \disk \to \Omega$ is conformal and $g:\disk 
\to \Omega$ is the  $(1+\epsilon)$-quasiconformal approximation 
constructed in \cite{Bishop-time}, 
and if we have a Euclidean quadrilateral 
$Q$ in our mesh, then the  $g$-images of the vertices of $Q$  give angles 
that are $O(\epsilon)$ close to the angles in the $f$ image.
Thus  using $g$ to transfer the 
mesh vertices works just as well as $f$.
The fast Riemann mapping 
theorem given in \cite{Bishop-time}  implies:

\begin{thm}
Suppose we are given a  thick simply connected region $\Omega$ bounded by 
a simple $n$-gon and an $\epsilon >0$. 
We can compute the thick/thin decomposition of $\Omega$,   
the  corresponding domain $W$ and its 
 quadrilateral mesh  and 
a map $g$ on vertices of  the mesh that extends to a 
$(1+\epsilon)$-quasiconformal map of the disk to $\Omega$.
The total work is $O(n)$ where the constant may depend on 
$\epsilon$.
\end{thm}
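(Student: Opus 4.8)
The plan is to prove this by assembling pieces that are already in place: the fast Riemann mapping theorem and fast thick/thin decomposition of \cite{Bishop-time}, the combinatorial construction of Lemmas \ref{build intervals}--\ref{thick collections}, and the explicit meshes of Lemmas \ref{pentagon mesh}, \ref{quad mesh} and \ref{triangle mesh}, checking at each stage that the work is linear in $n$ and that the implied constant depends only on $\epsilon$. First I would invoke the fast Riemann mapping algorithm of \cite{Bishop-time}: in time $O(n)$ it produces a $(1+\epsilon)$-quasiconformal map $g:\disk\to\Omega$ together with the approximate prevertex set $\z\subset\circle$, and it allows $g$ and $g'$ to be evaluated at any further $O(n)$ points in total time $O(n)$. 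As noted in the paragraph preceding the theorem, for any Euclidean quadrilateral $Q$ in our mesh the $g$-images of the vertices of $Q$ have angles $O(\epsilon)$-close to those of the image under the true conformal map $f$, so it is legitimate to carry out the construction with $g$ in place of $f$. In parallel I would run the linear time thick/thin decomposition of \cite{Bishop-time}: it identifies the crosscuts $\gamma_j$ on $\partial\Omega$ and the companion crosscuts $\gamma_j'$ of the thin parts they bound, whose $g$-preimages give the expanded intervals $AI_j$ with $A\simeq\exp(\pi/4\delta)$, disjoint and (after the standard choice of $g(0)$) of length $<\pi$. (If $\Omega$ is treated as a standalone $\delta$-thick polygon rather than a piece of a larger domain, one simply takes the $I_j$ to be the prevertex intervals and $A$ a fixed constant.)

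Next I would feed $\{I_j\}$ into Lemma \ref{build intervals} to produce the domain $W\subset\disk$ and its subdivision into right hyperbolic pentagons, Carleson quadrilaterals and Carleson triangles; since $\Omega$ is $\delta$-thick, $E=\cup_j I_j$ is $\delta$-thick and Lemma \ref{thick collections} gives $O(n)$ pieces, with constant depending only on $\delta$. This step is linear time because each piece is produced by $O(1)$ work in the tesselation ${\cal T}_5$: the minimal interval $J_j\in{\cal I}_5$ containing $\sqrt A I_j$ supplied by Lemma \ref{cover 1} is found by a bounded walk in ${\cal T}_5$ from the pentagon containing the top of the geodesic through $I_j$, and the maximal intervals of ${\cal K}$ are found by similar bounded local search. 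Then I would mesh each piece: Lemma \ref{pentagon mesh} for pentagons, Lemma \ref{quad mesh} for Carleson quadrilaterals (feeding in the distances $d_k$ forced by the adjacent triangle meshes), and Lemma \ref{triangle mesh} for Carleson triangles, each producing $O(N)$ hyperbolic quadrilaterals in $O(N)$ time, where $N$ is the single large constant fixed once and for all in Section \ref{thick and thin}. The matching conditions in those lemmas — each crosscut of $\partial W\cap\disk$ receiving $3N+2M$ vertices, the $d_k$ shared between adjacent quadrilaterals and triangles, and the symmetric foliations — make the meshes agree along shared edges. This yields a mesh of $W$ by $O(nN)=O(n)$ quadrilaterals, computed in $O(n)$ time.

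Finally I would transfer the mesh: apply $g$ to the $O(n)$ mesh vertices of $W$ lying outside the inner triangles, and inside each inner triangle $T_i$ use the linear map $h(z)=g(c)+(z-c)g'(c)$; this costs only $O(n)$ further evaluations of $g$ and $g'$, hence $O(n)$ work. Lemma \ref{small distortion} (with the $O(\epsilon)$ quasiconformal correction) bounds the angle distortion outside $T_i$ by $O(1/N)+O(\epsilon)$, while exact angle-preservation of $h$ together with the $O(\eta^2)\diam(g(T_i))$ estimate along $\partial T_i$, with $\eta=\diam(T_i)/\diam(T)$ small, handles the boundary layer of $T_i$; choosing $N$ large and the inner triangles small — choices made independently of $\Omega$ in Section \ref{thick and thin} — forces all angles into $[60^\circ,120^\circ]$. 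I expect the main obstacle to be not any single estimate but keeping the accounting honest: one must verify that every ``$N$ sufficiently large,'' ``$\delta$ sufficiently small,'' and ``inner triangle sufficiently small'' from the earlier sections is a universal choice, so that it affects only the implied constant and not the $O(n)$ bound, and that the quasiconformal $g$ may be substituted for the conformal $f$ throughout without disturbing the angle bounds — both of which were already arranged in Section \ref{thick and thin} and in \cite{Bishop-time}.
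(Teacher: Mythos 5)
Your proposal is correct and follows essentially the same route as the paper, which presents this theorem as a direct consequence of the fast Riemann mapping theorem and linear-time thick/thin decomposition of \cite{Bishop-time} combined with the mesh constructions of Lemmas \ref{build intervals}--\ref{triangle mesh}. Your additional bookkeeping (the bounded local search in ${\cal T}_5$, the substitution of the $(1+\epsilon)$-quasiconformal $g$ for $f$, and the universality of the choices of $N$, $\delta$ and the inner triangles) makes explicit exactly what the paper leaves implicit.
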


In fact, we do not need the full strength of the result in 
\cite{Bishop-time}, giving the dependence on $\epsilon$, 
since we only need to apply the result for a small, but fixed, 
$\epsilon$. Moreover, we only need the result for thick polygons, 
which is an easier case of the theorem.

\section{Meshing the  thin parts} \label{thin}

We are now done with the proof of Theorem \ref{main}
 except for meshing thin parts.
Each such thin part is either bounded by two adjacent
edges of $\Omega$ and an almost circular crosscut $\gamma$ (the 
parabolic case) or by two non-adjacent edges and two 
almost circular crosscuts $\gamma_1,\gamma_2$ (the hyperbolic case).

We start with parabolic thin parts where the 
two adjacent edges of $\Omega$ meet at vertex $v$ with 
angle $\theta \leq 120^\circ$.  The crosscut $\gamma$ 
defines a neighborhood of $v$ in $\Omega$ that is approximately 
a sector, and we define a true circular sector $S$ with vertex 
$v$ of comparable, but smaller, size. See Figure \ref{compose-split}. 
This sector is divided into pieces using circular arcs concentric
with $v$ and radial segments,  as shown in the left of 
Figure \ref{compose-split}. There are several levels, with 
the width of the level decreasing by a factor of $2$ as we move 
away from $v$, and we split each level with radial segments 
in order to increase the number of vertices on the outer edge 
of the sector.  This can be done so that if we divide 
$S$ into four equal sectors (each of angle $\theta/4 \leq 30^\circ$)
and add extra vertices to the centers of some arcs, 
then the number of points on the outer edge in each subsector 
is the same as the number of vertices on $\gamma$ in the 
same subsector.  

If we list the points on 
$\gamma$ and on the outer edge of the sector in order, then 
corresponding points lie in the same subsector and can be 
joined by segments that make angle  between 
$90^\circ - \theta/2- \epsilon \geq 75^\circ - \epsilon$
and $105^\circ + \epsilon$ with the chords of the outer 
edge of the sector. See Figure \ref{Same-Sector}.
 A similar estimate holds for the 
chords on $\gamma$  (with a larger $\epsilon$ since $\gamma$ 
is only an approximate circle).  Here $\epsilon$ 
tends to zero  as $S$ shrinks with respect to $\gamma$. 
We simply choose a relative size for $S$ that causes these 
angles to be between $60^\circ$ and $120^\circ$.

\begin{figure}[htbp]
   \centerline{
      \includegraphics[height=1.75in]{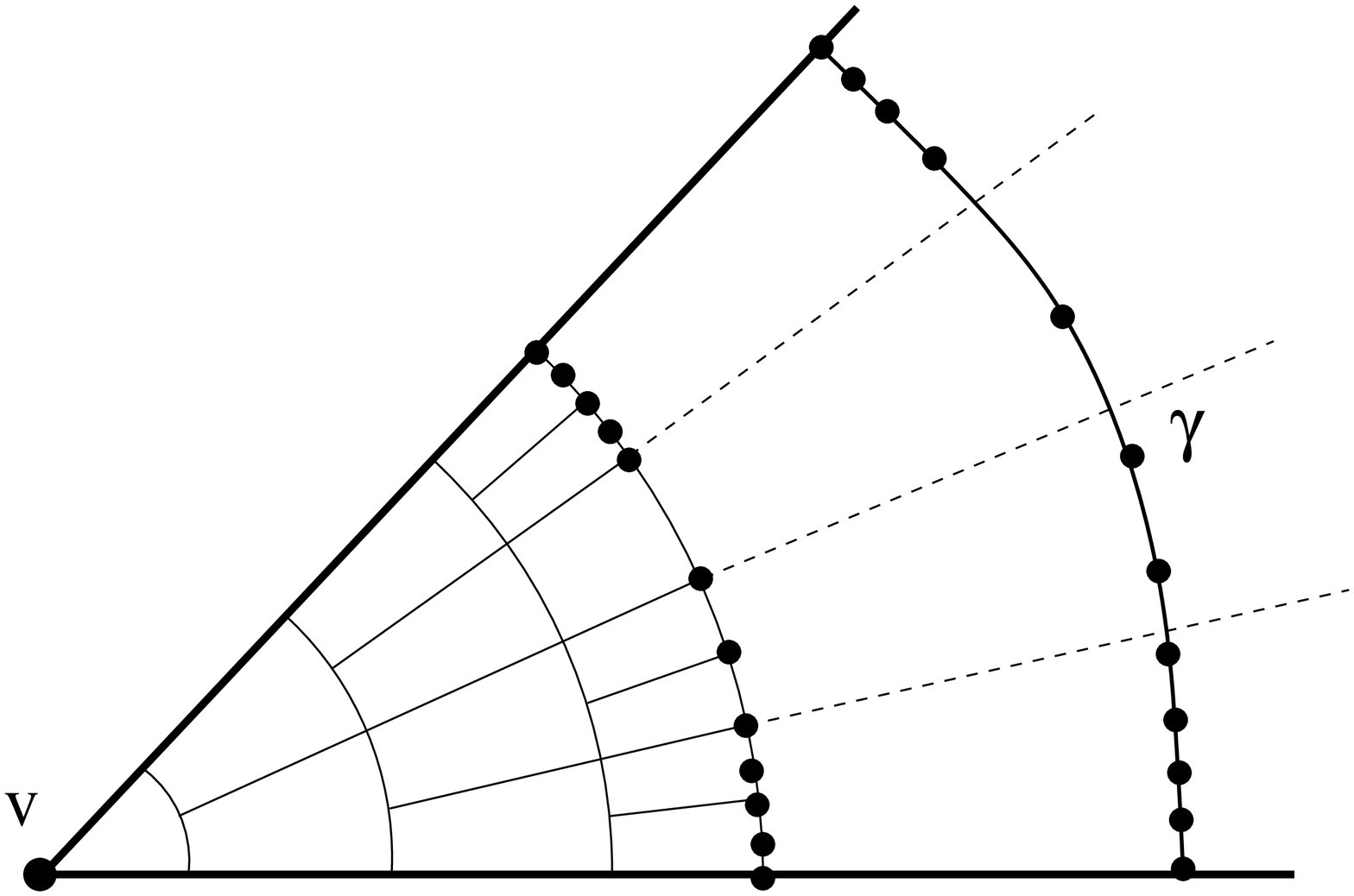}
      $\hphantom{xxx}$
      \includegraphics[height=1.75in]{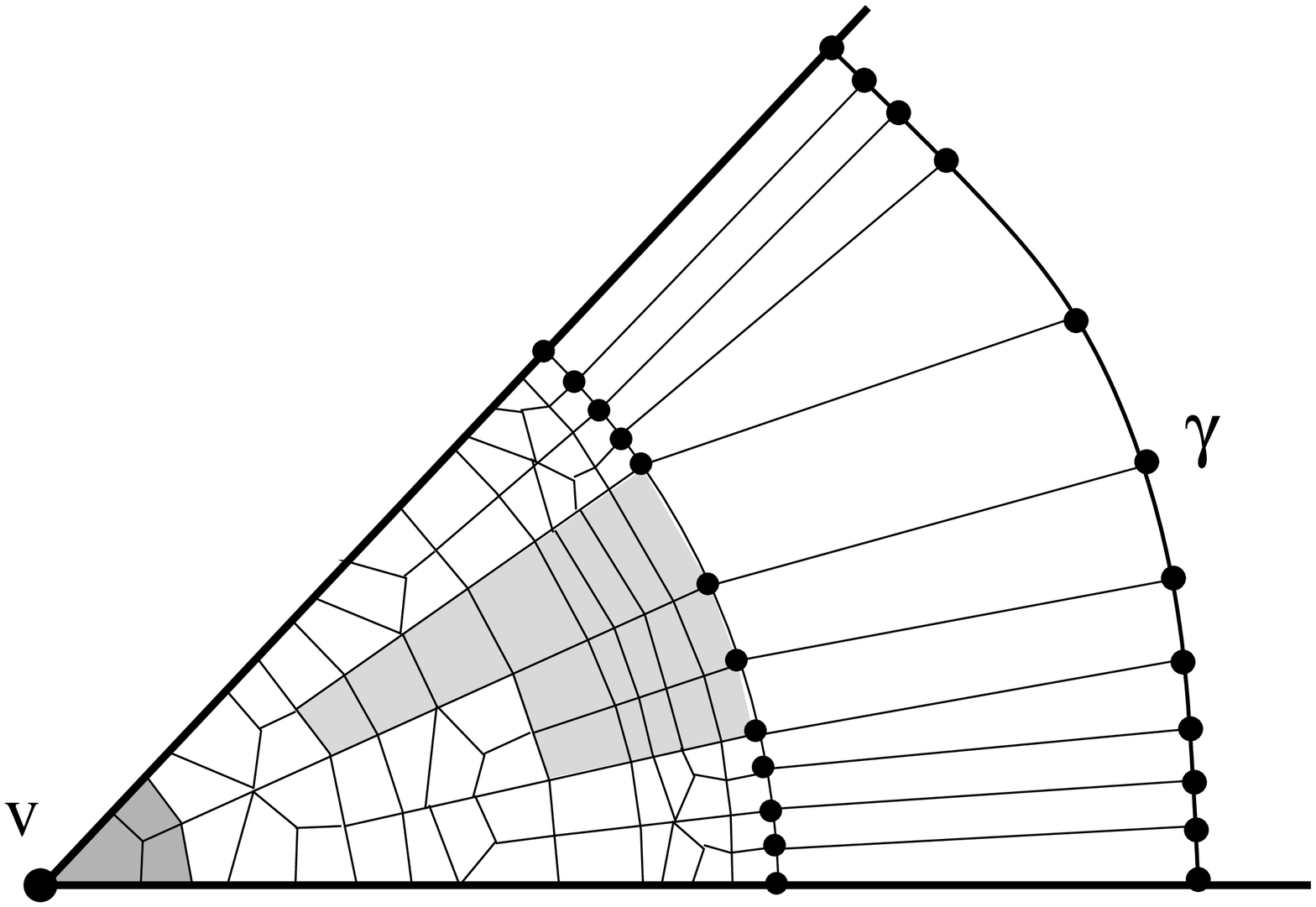}
        }
       \caption{\label{compose-split}
      The crosscut $\gamma$ is  defines a neighborhood 
      of the vertex $v$. We define a sector of comparable 
      size and partition the sector, so that the number of 
      vertices on the outer edge approximates the number of 
      points on the crosscut $\gamma$.  The pieces are then 
      meshed: Mesh 1 is used in dark shaded region, Mesh 2 (or 
      it reflection) the white regions and segments only in 
      the lighter shaded regions. The number of 
      vertices on the outer edge is exactly the number on $\gamma$
      and corresponding points are joined by segments. 
   }
\end{figure}

\begin{minipage}{3in}                             
   \centerline{ 
	\includegraphics[height=1.75in]{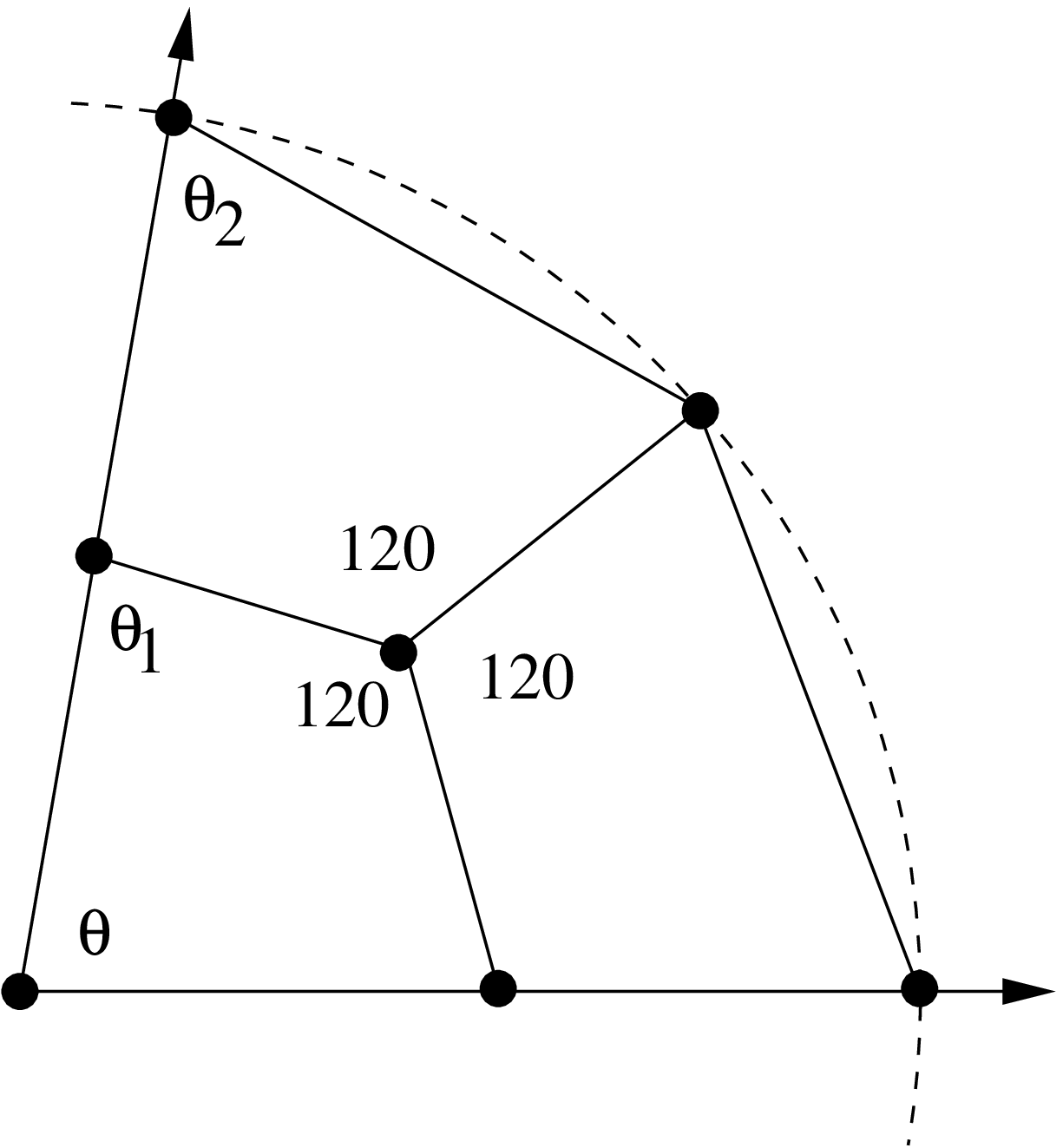}
	}
  \centerline{Mesh 1} 
\begin{eqnarray*}
 && 0 < \theta \leq 120 \\
 &&  60 \leq \theta_1 = 180 - 60 - \frac 12 \theta \leq 120 \\
 && 60 \leq \theta_2 = \frac 12(180-\frac 12 \theta)  \leq 90 
\end{eqnarray*}
\end{minipage} 
\begin{minipage}{3in} 
   \centerline{ 
	\includegraphics[height=1.5in]{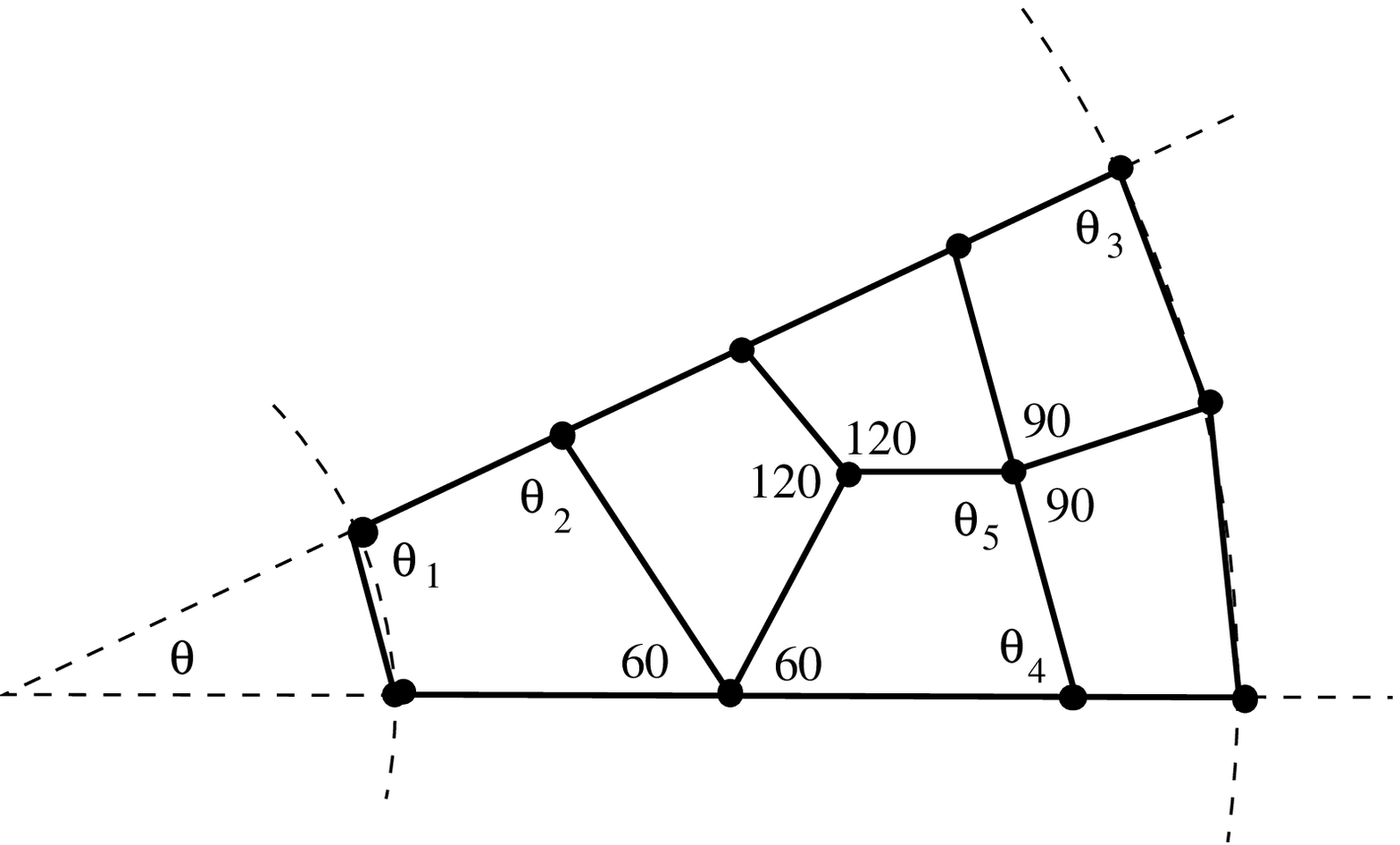}
	}
   \centerline{Mesh 2} 
{\small{
\begin{eqnarray*}
 && 0 \leq   \theta \leq 60 \\
 && 90 \leq \theta_1= 180-\frac 12(180-\theta) \leq 120 \\
 &&  60 \leq   \theta_2 = 360 - 60 - 2 \theta_1 
                    \leq 120 \\
 && 75   \leq \theta_3 
               = 90 - \frac 14 \theta \leq 90  \\
  && 60  \leq \theta_4 =  90 - \frac 12 \theta \leq 90 \\
  && 90  \leq \theta_5 = 360-120-60-\theta_4 \leq 120\\
\end{eqnarray*}
}}
\end{minipage} 

\begin{figure}[htbp]
   \centerline{
      \includegraphics[height=1.75in]{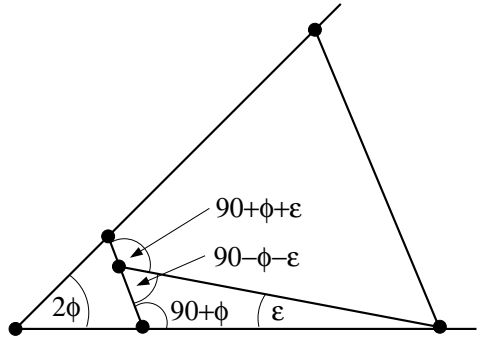}
        }
       \caption{\label{Same-Sector}
        The connecting segments between $\gamma$ and the outer edge
        of $S$ lie inside a sector of angle $2 \phi \leq 30$. If the 
        $S$ is small enough compared to $\gamma$ the angle marked 
        $\epsilon$ is as small as we wish, say $\epsilon < 10^\circ$.
        Then the angles formed with the chords of the outer edge of $S$ 
        are between $65^\circ$ and $115^\circ$. The angles with the 
        chords along $\gamma$ are slightly smaller/larger since $\gamma$ is 
        only an approximate circle, but the difference is as small as 
        wish by taking the  parameter $\delta$ in our thick/thin 
        decomposition small enough.
   }
\end{figure}

We then have to mesh $S$ so that the mesh vertices on the outer edge 
are exactly the ones given above.  We  do this by applying the 
illustrated  constructions in each part of the sector.
Mesh 1 is  used only in the piece adjacent to $v$ and the equations 
below the figure show that all the new angles are in the correct
range. Mesh 2 (or its reflection) are used in all the pieces 
that have one more vertex on their outer edge than on the inner
edge (we use reflections to make the vertices on the radial 
edges match up). Otherwise we simply use  chords of circles
concentric with $v$ to connect edge vertices  of parts where
mesh 2 was used. See the right side of Figure \ref{compose-split}.

If the interior angle at $v$ is $120^\circ \leq \theta < 240^\circ$
then we bisect the angle as part of our partition of the sector.
If $240^\circ \leq \theta \leq 360^\circ$, then we trisect the 
angle. See Figure \ref{big angle}.

\begin{figure}[htbp]
   \centerline{
      \includegraphics[height=1.25in]{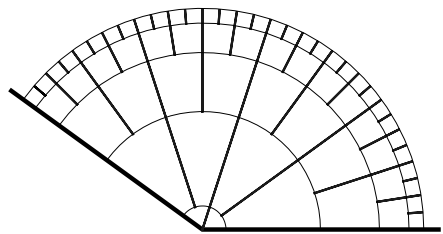}
        }
       \caption{\label{big angle}
    If the vertex has interior angle between $120^\circ$ and 
    $240^\circ$ then we bisect the angle as part of the sector 
    partition and mesh each piece as before. 
   }
\end{figure}

 A hyperbolic thin part  is bounded by two straight 
line segments in $\partial \Omega$  and two almost 
circular crosscuts $\gamma_1, \gamma_2$. Both crosscuts
contain the same number, $P$, of vertices from the meshes 
of the corresponding thick pieces. 
If the two straight sides are parallel or lie on lines that 
intersect with small angle, then just connecting each 
point on $\gamma_1$ to the corresponding point on $\gamma_2$
will give angles in the desired range. In general, however, 
this is not the case, but is easily fixed by adding a
bounded number of circular crosscuts separating 
$\gamma_1, \gamma_2$ and using a polygonal chain with 
vertices on these crosscuts to connect each vertex on
$\gamma_1$ to the corresponding vertex on $\gamma_2$. It
is easy to see that this can be done with angles close
to $90^\circ$ if the number of intermediate crosscuts is 
large enough and $\delta$ (the degree if thinness) is small enough.
See Figure \ref{HyperConnect}. 
This places an additional constraint on the choice of $\delta$. 

\begin{figure}[htbp]
   \centerline{
      \includegraphics[height=1.25in]{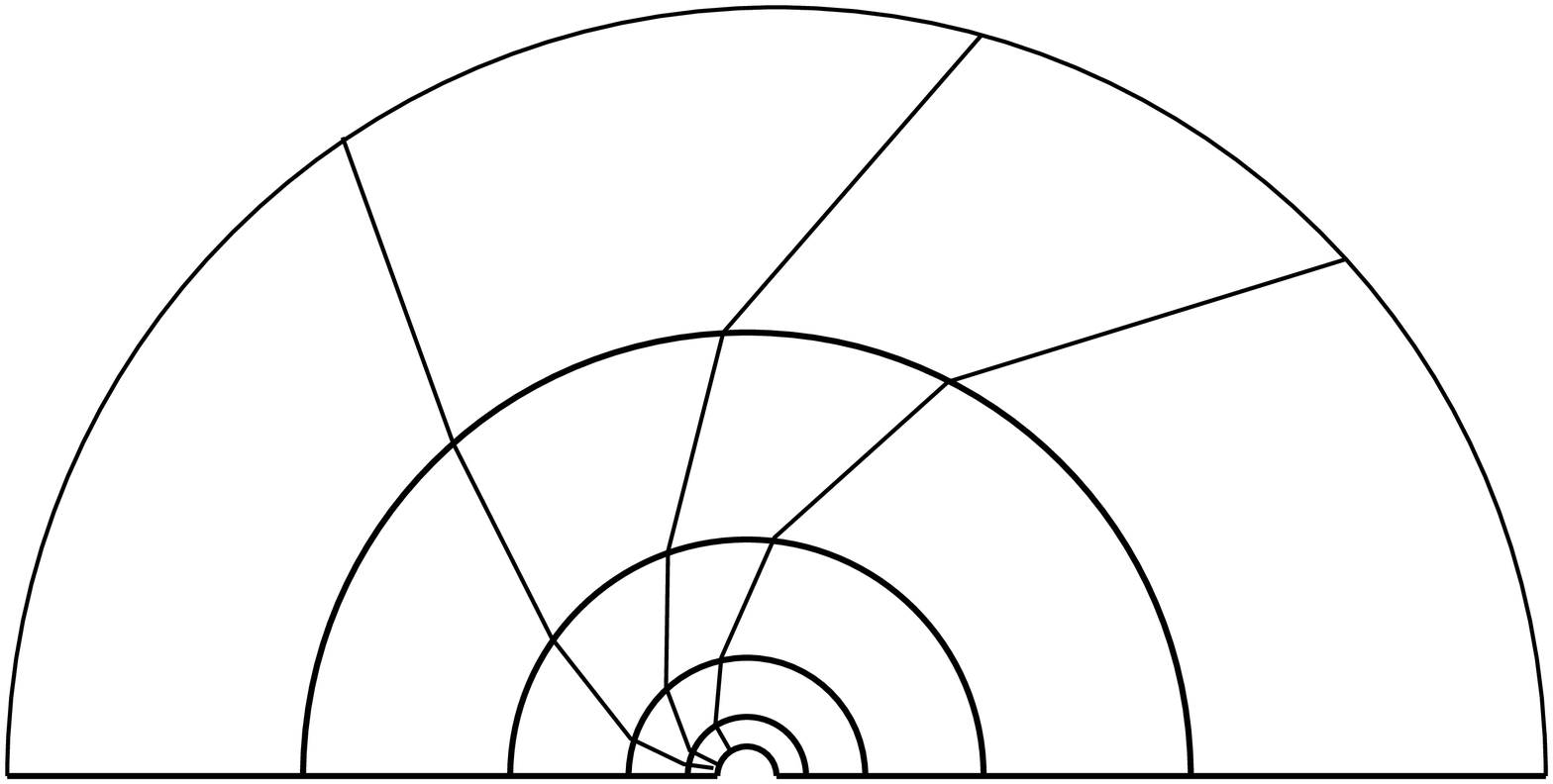}
        }
 \vskip .3in
   \centerline{
      \includegraphics[height=.5 in]{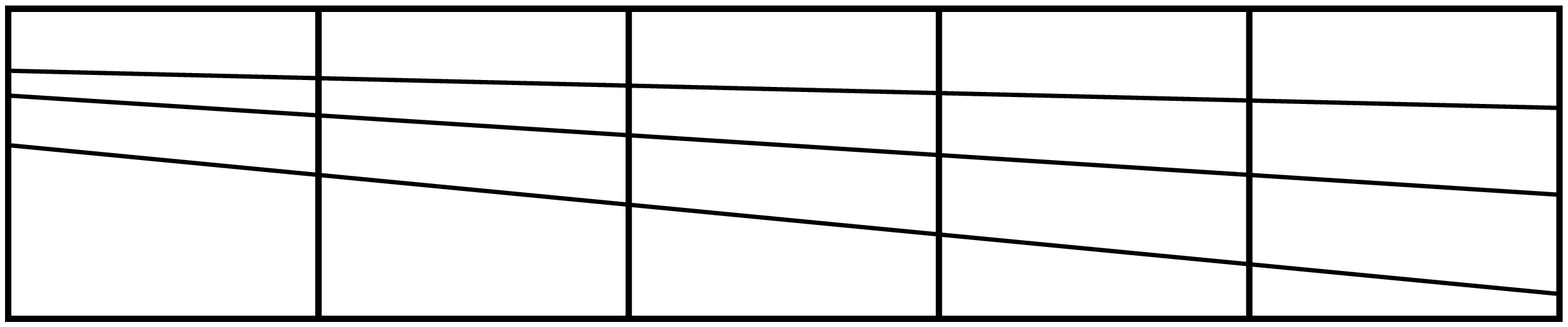}
        }
       \caption{\label{HyperConnect}
      By adding a bounded number of circular crosscuts 
      to a hyperbolic thin part we can connect any $P$ 
      points on $\gamma_1$ to any $P$ points on $\gamma_2$ 
      with a mesh using angles near $90^\circ$. 
      The arcs look like logarithmic spirals. Indeed, we can 
      think of this mesh as approximating the image of the 
      lower  picture under the complex  exponential map. 
   }

\end{figure}

In addition to the angle bounds, every quadrilateral in the
construction can be chosen to  have bounded geometry (i.e.,
all four edges of comparable length with uniform constants)
except in two cases. First, when we mesh a parabolic thin part with 
angle $\theta \ll 1$, the piece containing the vertex has 
two sides with length only $O(\theta)$ as long as the other two.  
Second, when meshing a hyperbolic thin part we use 
long, narrow pieces, but  if the long
sides have extremal distance $\delta$,   we can refine the 
mesh by subdividing each  such piece into  $O(1/\delta)$
bounded geometry quadrilaterals.
Thus if the hyperbolic thin parts of $\Omega$ have ``thinnesses''
$\{\delta_k\}$, then we can mesh $\Omega$ by $O(n+\sum_k \delta_k^{-1})$
quadrilaterals with angles in $[60^\circ, 120^\circ]$ and 
bounded geometry, except for the pieces containing vertices with
 small angles.  If $\Omega$ has no small angles, then this gives 
the smallest (up to a constant factor), bounded geometry mesh 
of $\Omega$.

\bibliography{optimal} 
\bibliographystyle{plain}
\end{document}